\documentclass[letter,reqno]{a_preamble}
\title{
Periodicity in Ergodic Quantum Processes
}

\date{}

\usepackage{authblk}
\usepackage[backend=biber]{biblatex}
\author[1, 2]{Owen Ekblad\orcidlink{0009-0006-0834-0327}\thanks{ekbladow@msu.edu}}
\author[1]{Jeffrey Schenker\orcidlink{0000-0002-1171-7977}\thanks{schenke6@msu.edu}}
\affil[1]{Department of Mathematics, Michigan State University}
\affil[2]{Department of Mathematics and Center for Quantum Mathematics and Physics, University of California, Davis}

\begin{document}

\pagenumbering{arabic}
\lhead{\thepage}
\maketitle

\begin{abstract}
We study the periodic properties of sequences of quantum channels sampled from an ergodic stochastic process satisfying a natural irreducibility condition. 
We relate these periodic properties to certain global spectral data defined by the sequence of quantum channels, proving a general Perron-Frobenius-type theorem that in particular constitutes a full generalization of the work of Evans and Høegh-Krohn to the disordered setting.  
\end{abstract}
\section{Introduction}\label{Sec:Introduction}
Originally motivated by problems in generalized continued fractions \cite{Perron1907ZurMatrices, Frobenius1912UberElementen, Hawkins2008ContinuedTheorem}, Perron-Frobenius (\textit{PF}) theory is now widespread across many branches of both pure and applied mathematics.
At the most abstract level, PF theory relates spectral properties of positive operators to dynamical interpretations they admit.
A well-known application of PF theory is to the study of Markov chains on finite state spaces, where the basic objects are stochastic matrices, i.e., matrices whose entries are all positive and whose rows sum to 1 \cite[Ch. I]{Schaefer1974BanachOperators}.
By understanding completely positive and trace preserving maps on matrix algebras (henceforth referred to as \textit{quantum channels}) as a noncommutative generalization of stochastic matrices, PF-theoretic methods can be applied to quantum channels, which has been a fruitful perspective taken by many authors since at least the 1970s \cite{Evans1978SpectralC-Algebras, Albeverio1978FrobeniusAlgebras, Enomoto1979AC-algebras, Groh1981TheC-Algebras, Groh1983OnC-Algebras, Groh1984UniformW-Algebras, Groh1984UniformlyC-algebras, Farenick1996IrreducibleAlgebras,  Schrader2001Perron-FrobeniusIdeals, Fagnola2009IrreducibleMaps}.
Owed to the central position held by quantum channels in quantum theory and applications \cite{Davies1976QuantumSystems, Nielsen2012QuantumInformation, Watrous2018TheInformation}, the PF theory of quantum channels has found a natural application to the study of open quantum systems \cite{Kummerer2003AnProcesses, Sanz2010AInequality, Wolf2012QuantumTour, Burgarth2013ErgodicDimensions, Carbone2016OpenProperties, Carbone2020OnChannel}, where spectral and dynamical assumptions about quantum channels serve fundamental purposes, and quantum channels are also a central object in the theory of matrix product states \cite{Fannes1992FinitelyChains, Perez-Garcia2007MatrixRepresentations}.
Most PF theory for quantum channels in the literature is concerned with single, fixed quantum channels, which is insufficient for understanding many physically-interesting situations \cite{Bruneau2014RepeatedSystems, Ciccarello2022QuantumInteractions, Movassagh2021TheoryProcesses, Movassagh2022AnStates}. 
In this work, our focus is on developing the PF theory of \textit{sequences} $\eqp = \seq{\phi_n}_{n\in\mbZ}$ of quantum channels sampled from an ergodic stochastic process, which are called ergodic quantum process. 
Ergodic quantum processes are useful for describing matrix products states on disordered quantum spin chains \cite{Movassagh2022AnStates, Ekblad2026ParentStates}, random repeated quantum interactions \cite{Bruneau2008RandomSystems, Nechita2012RandomStates, Ciccarello2022QuantumInteractions, Bougron2022MarkovianSystems, Ekblad2025AEntanglement}, and quantum trajectories under generalized disordered measurements \cite{Ekblad2025AsymptoticMeasurements, Ekblad2026ErgodicMeasurements}.
The basic method of PF theory is the algebraic reduction of dynamics to irreducible components, where dynamical properties can then be understood via equilibrium states.
In the companion work \cite{Ekblad2024ReducibilityProcesses}, the authors of the present work laid the foundations of a PF theory for ergodic quantum processes, defining and characterizing a notion of irreducibility.
Here, we continue in along these lines, giving an in-depth analysis of irreducible ergodic quantum processes and proving Perron-Frobenius-type theorems, which constitute a full generalization of the influential work of Evans and Høegh-Krohn \cite{Evans1978SpectralC-Algebras} on irreducible quantum channels. 
We characterize dynamical properties of irreducible ergodic quantum processes in terms of associated spectral data, which we show is succinctly phrased in terms of a certain finite abelian group $\GammaGroup$ (Theorem \ref{Thm:Finite_group}).
For the nonrandom situation, $\GammaGroup$ is always a cyclic group, equal to the set of peripheral eigenvalues of the irreducible channel, but here we demonstrate how the underlying disordered allows $\GammaGroup$ to be \textit{any} finite abelian group (Theorem \ref{Thm:Any_group}).
We then describe how $\GammaGroup$ encodes the periodicity of the ergodic quantum process (Theorems \ref{Thm:Partition}, \ref{Thm:Periodicity}, and \ref{Thm:PeriodicChar}).
Let us now describe our results in more detail.
\subsection{Perron-Frobenius theory}
We begin by describing the (non-disordered) PF theory for quantum channels acting on finite dimensional $C^*$-algebras as was established by Evans and Høegh-Krohn in the influential paper \cite{Evans1978SpectralC-Algebras}.
For simplicity, we describe this theory in terms of $\matrices$, the set of $d\times d$ matrices with complex-valued entries.
In this setting, the basic notion of reducibility is phrased in terms of projections, where we recall that a matrix $p\in\matrices$ is called a projection if $p\neq 0$ and $p = p^2 = p^*$, where $p^*$ denotes the conjugate transpose. 
Then a projection $p\in\matrices$ is called a reducing projection for a quantum channel $\phi:\matrices\to\matrices$ if
\begin{equation}
    \phi\!\seq{p\matrices p}\subseteq p\matrices p,
\end{equation}
and we say that $p$ is a minimal reducing projection for $\phi$ if $p$ is minimal among the set of reducing projections with respect to the usual ordering $\leq$ on positive semidefinite matrices. 
If the only minimal reducing projection is $p = \mbI$, the identity matrix, then $\phi$ is said to be irreducible.
In \cite{Evans1978SpectralC-Algebras}, it was shown that irreducibility of $\phi$ is equivalent to the existence and uniqueness of a positive definite density matrix\footnote{Recall a matrix is called a density matrix if it is positive semidefinite and has trace 1.} $\varrho\in\matrices$ with $\phi(\varrho) = \varrho$ \cite{Evans1978SpectralC-Algebras}, a result analogous to the characterization of irreducible Markov chains in terms of unique invariant measures. 
We call this unique $\varrho$ the \textit{Perron-Frobenius eigenmatrix} of $\phi$.  
The work of Evans and Høegh-Krohn most relevant to our goals is the following amalgamation of results from \cite{Evans1978SpectralC-Algebras}.\footnote{These results were proved for a slightly more general class of positive maps on $\matrices$ (the so-called Schwarz maps), but we state the results here as they apply to quantum channels for simplicity.} 
Below, we call a finite set $\set{p_k}_{k\in I}\subset\matrices$ of projections a partition of unity if $\sum_{k\in I}p_k = \mbI$ and $p_kp_j = 0$ for all $k\neq j$.
\begin{thmx}[Evans and Høegh-Krohn, 1978]\label{Thm:EHK}
    Let $\phi:\matrices\to\matrices$ be an irreducible quantum channel with Perron-Frobenius eigenmatrix $\varrho$, let $\Gamma_\phi$ be the set of eigenvalues $\alpha$ of $\phi$ with $|\alpha| = 1$, and let $m = |\Gamma_\phi|$. 
    Then $\Gamma_\phi$ is the group of $m$th roots of unity, every $\alpha\in\Gamma_\phi$ is a simple eigenvalue, and $m \leq d$. 
    Further, there is a partition of unity $\set{p_k}_{k\in\mbZ/m\mbZ}$ such that
    \begin{equation}
        \phi(p_k\matrices p_k)\subseteq p_{k+1}\matrices p_{k+1}
    \end{equation}
    for all $k\in\mbZ/m\mbZ$ and  
    \begin{equation}
        \varrho 
        =
        \cfrac{1}{m}
        \sum_{k\in\mbZ/m\mbZ}
        \cfrac{p_k\varrho p_k}{\tr{p_k\varrho p_k}}.
    \end{equation}
    Moreover, $p_k$ is a minimal reducing projection for $\phi^m$, and, in particular, $|\Gamma_\phi| = 1$ if and only if $\phi^n$ is irreducible for all $n\in\mbN$.
\end{thmx}
Our goal is to extend this theorem to the case of ergodic quantum processes. 
Towards this end, let us enumerate three aspects of Theorem \ref{Thm:EHK} that we will focus on individually in the disordered setting.
\begin{description}
    \hypertarget{Group}{}
    \item[\Group] Peripheral eigenspectrum consists of simple eigenvalues which together have group-theoretic structure. 

    \hypertarget{Proj}{}
    \item[\Proj] Group-theoretic structure of the peripheral eigenspectrum describes periodicity in the dynamics by means a partition of unity.

    \hypertarget{Period}{}
    \item[\Period] The projections making up this partition of unity reduce an iterated version of the process, which fully characterizes triviality of eigenspectrum.
\end{description}
As we shall see, the extension of each of these aspects of Theorem \ref{Thm:EHK} to the setting of ergodic quantum processes requires new techniques. 
Let us begin by precisely defining what we mean by a quantum process. 
\begin{definition}[Disordered quantum processes]\label{Def:EQP_defn_introduction}
    Let $\seq{\Omega, \mcF, \mu}$ be a probability space, let $\theta:\Omega\to\Omega$ be a measurable map, and let $\phi:\Omega\to\set{\text{quantum channels}}$ be a random quantum channel. 
    Then the \textit{disordered quantum process} defined by $\seq{\theta, \phi}$ is the sequence $\eqp = \seq{\phi_n}_{n\in\mbN}$ of random quantum channels $\phi_n$ defined by $\omega\mapsto \phi_{\theta^n(\omega)} =: \phi_{n; \omega}.$
    In the case that $\theta$ is an invertible ergodic $\mu$-preserving transformation, we call $\eqp$ an ergodic quantum process. 
\end{definition}
Given a random projection $p:\Omega\to\matrices$, we say that $p$ reduces the disordered quantum process defined by $\seq{\theta, \phi}$ if 
\begin{equation}
    \phi_{\theta(\omega)}\!\seq{p_\omega \matrices p_\omega}
    \subseteq 
    p_{\theta(\omega)}\matrices p_{\theta(\omega)}
\end{equation}
holds for almost every (\textit{a.e.}) $\omega\in\Omega$, and we call $\eqp$ irreducible if the only $p$ reducing $\eqp$ is $p = \mbI$. 
Now, let $\eqp$ be a fixed \textit{ergodic} quantum process defined by $\seq{\theta, \phi}$. 
In \cite{Ekblad2024ReducibilityProcesses}, the following characterization of irreducibility was given. 
\begin{thmx}[\texorpdfstring{\cite{Ekblad2024ReducibilityProcesses}}{l}]\label{Thmx:irreducibility}
    Let $\eqp$ be an ergodic quantum process. 
    Then $\eqp$ is irreducible if and only if there is a unique random density matrix $\varrho:\Omega\to\matrices$ such that $\varrho_\omega >0$ and
    \begin{equation}\label{Eqn:Intro:Steady_state}
        \phi_{\theta(\omega)}\!\seq{\varrho_\omega}
        =
        \varrho_{\theta(\omega)}
    \end{equation}
    holds for a.e. $\omega\in\Omega$.
\end{thmx}
We call the random density matrix $\varrho:\Omega\to\matrices$ in the above theorem the \textit{unique steady state} of $\eqp$.
Interpreting (\ref{Eqn:Intro:Steady_state}) as an eigenequation for $\eqp$ with eigenvalue $\alpha = 1$, the uniqueness of $\varrho$ in the above theorem may be understood as saying that $\alpha = 1$ is a simple eigenvalue of $\eqp$. 
In general, we let $\SpecEQP$ denote the set of all $\alpha\in\mbC$ for which there exists a random matrix $x:\Omega\to\matrices$ such that 
\begin{equation}\label{Eqn:Intro:Eigenequation}
        \phi_{\theta(\omega)}\!\seq{x_\omega}
        =
        \alpha x_{\theta(\omega)}
\end{equation}
holds for a.e. $\omega\in\Omega$, and we let $\PerSpecEQP = \set{\alpha\in\SpecEQP\,\,:\,\,|\alpha| = 1}$. 
We call an element of $\SpecEQP$ an eigenvalue of $\eqp$, we call an element of $\PerSpecEQP$ a peripheral eigenvalue of $\eqp$, and we say $\alpha\in\SpecEQP$ is \textit{simple} if the vector space of random matrices satisfying (\ref{Eqn:Intro:Eigenequation}) is 1-dimensional.
With this in hand, we can begin to see how \Group\assumptionspace applies to $\eqp$. 
Our first result towards this end is that $\PerSpecEQP$ indeed forms a group under multiplication (Corollary \ref{Cor:Lambda_is_a_group}). 
However, this result alone is not enough to satisfactorily address \Group.
Indeed, to see this, the reader may consider two simple but markedly different situations that both fall within our theoretical framework: 
the case that $\theta$ is weakly mixing\footnote{A probability preserving transformation $T:\Omega\to\Omega$ is called weakly mixing if for all measurable sets $A, B\in\mcF$, $\lim_{N}N^{-1}\sum_{n=1}^N\abs{\prob{T^{-n}(A)\cap B} - \prob{A}\prob{B}} = 0$.}, and the case that $\theta$ is the irrational rotation of the circle\footnote{In this case, $\Omega$ is $\mbT$, the unit circle in $\mbC$, and $\theta$ is the map $\theta(\omega) = \gamma\omega$, where $\gamma = e^{2\pi i t}$ for some irrational $t\in\mbR$.}.
One of the invariants that discerns between these two regimes is the set $\Lambda_\theta$ of eigenvalues of $\theta$.\footnote{We call $\beta\in\mbC$ an eigenvalue of $\theta$ if there is a nonzero measurable function $f:\Omega\to\mbC$ such that $f\circ\theta = \beta f$ a.s.}
In the weakly mixing case, $\Lambda_\theta = \set{1}$, but in the quasiperiodic case, $\Lambda_\theta = \set{\gamma^n}_{n\in\mbZ}$ where $\gamma = e^{2\pi i t}$ for some irrational number $t\in\mbR$. 
In both cases, however, we show that $\Lambda_\theta\subseteq \PerSpecEQP$ (Lemma \ref{Lem:Eigenspectrum_koopman_in_opL}), and we show later that $\Lambda_\theta$ does not give rise to any partitions of unity as in \Proj. 
Therefore, it is necessary to understand precisely what role $\Lambda_\theta$ plays as a subset of $\PerSpecEQP$.
To do this, we note the standard fact that $\Lambda_\theta$ is itself a group  \cite[Theorem 3.1]{Walters1982AnTheory}, so, to understand how the group-theoretic structure of $\PerSpecEQP$ influences periodicity of $\eqp$, the natural object to consider is the quotient group
\begin{equation}
    \GammaGroup 
        :=
    \PerSpecEQP/\Lambda_\theta.
\end{equation}
Our first technical result gives the structure of $\GammaGroup$.
For $\alpha\in\PerSpecEQP$, we let $N_\alpha = \inf\set{n\in\mbN\,\,:\,\, \alpha^n\in\Lambda_\theta}$, where $\inf\emptyset = \infty$ by convention.
\begin{restatable}{thm}{FiniteGroup}\label{Thm:Finite_group}
    Let $\eqp$ be an irreducible ergodic quantum process. 
    Then $\GammaGroup$ is a finite abelian group with $|\GammaGroup|\leq d^2$, and every $\alpha\in \PerSpecEQP$ is a simple eigenvalue with $N_\alpha \leq d$. 
\end{restatable}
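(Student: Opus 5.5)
We would prove Theorem \ref{Thm:Finite_group} by associating to each peripheral eigenvalue a random unitary that intertwines the dynamics, and then using the finite dimension of $\matrices$ to control both multiplicities and the quotient by $\Lambda_\theta$.

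\textbf{Step 1 (peripheral eigenvectors and random unitaries).} Let $\alpha\in\PerSpecEQP$ with eigenvector $x$. Pass to the dual (unital, Schwarz) maps $\phi^*_\omega$ and work with eigenvectors of the form $\phi^*_{\theta(\omega)}(u_{\theta(\omega)}) = \bar\alpha\, u_\omega$. Normalize in the $\varrho$-weighted norm $\mathbb{E}\,\mathrm{tr}(\varrho_\omega |u_\omega|^2)$, using the unique faithful steady state $\varrho$ from Theorem \ref{Thmx:irreducibility}. Applying the Kadison--Schwarz inequality together with invariance of $\varrho$ shows that
\[
|u|^2 - \phi^*(|u|^2)
\]
is a nonnegative quantity with zero $\varrho$-expectation, so equality holds a.s. This makes $u$ lie in the multiplicative domain.

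Next, $|u_\omega|^2$ is then an invariant element for $\phi^*$, i.e. eigenvalue $1$. Irreducibility forces a scalar invariant function; by ergodicity of $\theta$ it is constant, so we may take $u_\omega$ unitary. The multiplicative-domain property gives the intertwining relation
\[
\phi^*(u\,y) = \bar\alpha\, u\,\phi^*(y) \quad\text{(with the appropriate shifts of } \omega\text{)}.
\]

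\textbf{Step 2 (simplicity and group structure).} If $u$ and $v$ are both eigenvectors for $\alpha$, then by the intertwining relation $u^*v$ is $\phi^*$-invariant. Irreducibility and ergodicity make it a scalar constant a.s., so $\alpha$ is simple. Products $uv$ give eigenvalue $\alpha\beta$ and adjoints give $\bar\alpha$, which recovers the group property of Corollary \ref{Cor:Lambda_is_a_group}.

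Moreover, $\beta\in\Lambda_\theta$ corresponds exactly to $u_\omega = f(\omega)\mbI$ with $f$ scalar. By simplicity, $\alpha\in\Lambda_\theta$ if and only if its eigenvector is a scalar multiple of $\mbI$.

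\textbf{Step 3 (finiteness and $|\GammaGroup|\le d^2$).} Choose coset representatives $\alpha_1,\dots,\alpha_k$ of distinct classes in $\GammaGroup$, with unitary eigenvectors $u^{(j)}$. We claim that for a.e. $\omega$ the matrices $u^{(1)}_\omega,\dots,u^{(k)}_\omega$ are linearly independent in $\matrices$, which gives $k\le d^2$.

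To prove the claim, consider the $\varrho$-weighted inner products $g_{ij}(\omega) = \mathrm{tr}(\varrho_\omega u^{(i)*}_\omega u^{(j)}_\omega)$. The matrix $u^{(i)*}u^{(j)}$ is an eigenvector for $\alpha_i^{-1}\alpha_j$, which is not in $\Lambda_\theta$ when $i\neq j$. If the pointwise Gram matrix were degenerate on a set of positive measure, a measurable null vector $c(\omega)$ would yield a nontrivial relation
\[
\sum_j c_j(\omega)\, u^{(j)}_\omega = 0 .
\]
Applying $\phi^*$ and the intertwining relation transports this relation by $\theta$, with coefficients multiplied by $\alpha_j$. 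Taking a minimal-length relation and subtracting gives either a shorter relation or the conclusion $c_j(\theta\omega) = \bar\alpha_j\, \lambda(\omega)\, c_j(\omega)$ for all $j$ in the support. The ratio $c_j/c_i$ would then be a scalar eigenfunction of $\theta$ with eigenvalue $\alpha_i\bar\alpha_j$, forcing $\alpha_i\alpha_j^{-1}\in\Lambda_\theta$, a contradiction.

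We expect this minimal-relation argument, made measurable and a.s., to be the main obstacle. The plan is to handle it by choosing the support set and a normalized relation measurably, using that the dimension of the span is $\theta$-invariant and hence a.s. constant by ergodicity.

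\textbf{Step 4 ($N_\alpha\le d$).} The elements $\alpha^n$ for $n<N_\alpha$ lie in distinct cosets, with unitaries $u^n$. The $\theta$-invariant commutative $*$-algebra generated by $u_\omega$ is spanned by the powers $u_\omega^n$, and Step 3's independence argument applied to these powers shows $\mbI, u_\omega,\dots,u_\omega^{N-1}$ are independent. But a normal matrix in $\matrices$ has at most $d$ distinct eigenvalues, so its powers span a space of dimension at most $d$. Hence $N_\alpha\le d$.

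Finally, $\GammaGroup$ is a quotient of a subgroup of the abelian group $\mbT$, so it is abelian; combined with Step 3 it is a finite abelian group with $|\GammaGroup|\le d^2$.
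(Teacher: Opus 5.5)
\textbf{Gap: the bridge from the forward process to the dual process is missing.} Your argument works entirely on the dual side. Both $\PerSpecEQP$ and the word ``simple'' in the theorem refer to the forward equation $\phi_{\theta(\omega)}(x_\omega)=\alpha x_{\theta(\omega)}$, i.e.\ to eigenvectors of $\opL$. Step 1, however, simply declares that you may ``pass to the dual'' and work with a $u$ satisfying $\phi^*_{\theta(\omega)}(u_{\theta(\omega)})=\bar\alpha u_\omega$.

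Since $\opL$ acts on an infinite-dimensional space, a nonzero such $u$ need not exist a priori for each forward peripheral eigenvalue. This is exactly where the paper invokes Groh's result that the peripheral point spectrum of a contraction lies in that of its adjoint (Proposition \ref{Prop:Lambda_opL_is_Lambda_opL_dagger}(b)). Without it, Step 3 bounds the group of $\opLdag$-eigenvalues modulo $\Lambda_\theta$ rather than $\GammaGroup$, and Step 4 has no unitary attached to a given $\alpha\in\PerSpecEQP$.

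Likewise, Step 2 shows only that the $\opLdag$-eigenspace at $\bar\alpha$ is one-dimensional. The theorem asserts this for the $\opL$-eigenspace at $\alpha$, which the paper obtains from Proposition \ref{Prop:Polar_decomp}(a), again via Groh. Simplicity on the dual side does not by itself give simplicity on the forward side.

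\textbf{How to close it.} One mean-ergodic argument handles both holes. Take $x\neq 0$ with $\opL(x)=\alpha x$. Then $\omega\mapsto\|x_\omega\|_1$ is non-increasing along orbits, hence a.s.\ constant, so $x\in L^1$. Pick $a$ in the unit ball of $L^\infty(\Omega;\matrices)$ with $\langle a,x\rangle=\|x\|_{L^1}$, where $\langle a,b\rangle=\int_\Omega\operatorname{tr}(a_\omega^*b_\omega)\,d\mu(\omega)$. Set $v_N=\frac1N\sum_{n=0}^{N-1}\alpha^n(\opLdag)^n(a)$. Then
\[
\opLdag(v_N)-\bar\alpha v_N=\tfrac{\bar\alpha}{N}\bigl(\alpha^N(\opLdag)^N(a)-a\bigr)\to 0,
\qquad
\langle v_N,x\rangle=\langle a,x\rangle ,
\]
so any weak-$*$ cluster point $v$ satisfies $\opLdag(v)=\bar\alpha v$ and $\langle v,x\rangle\neq 0$. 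This produces the dual eigenvector you need. It also shows that the dual eigenspace separates points of the forward eigenspace, so your Step 2 then yields forward simplicity. With this added, the proof goes through.

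\textbf{Comparison with the paper.} Step 4 is the paper's idea in a cleaner form; the paper uses the minimal polynomial of $u_\omega$ together with $\operatorname{tr}(u^k\varrho)=0$. Step 3 is correct but much heavier than necessary. Because $\phi_{\theta(\omega)}(\varrho_\omega)=\varrho_{\theta(\omega)}$, for any $\opLdag$-eigenvector $w$ the scalar function $\operatorname{tr}(\varrho w)$ is a $\theta$-eigenfunction with the same eigenvalue. Your Gram entries $g_{ij}$ therefore vanish outright for $i\neq j$, and the $u^{(j)}_\omega$ are orthonormal for $\operatorname{tr}(\varrho_\omega a^*b)$. This is Lemma \ref{Lem:The unitary representation is faithful} and Proposition \ref{Prop:Gamma is a finite group}. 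The measurable minimal-relation construction you flagged as the main obstacle is not needed.

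Minor: in Step 1 the Schwarz inequality gives $\opLdag(u^*u)-u^*u\ge 0$, not the reverse.
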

A natural follow-up question to this theorem pertains to the structure of $\GammaGroup$ itself. 
In Theorem \ref{Thm:EHK}, \Group\assumptionspace manifests in a stronger form than what we have written here: namely, they conclude that the peripheral eigenspectrum has the structure of a cyclic group. 
Here, we find a departure from this simple situation, discovering that any finite abelian group may be realized as $\GammaGroup$ for some suitable ergodic quantum process. 
\begin{restatable}{thm}{AnyGroup}\label{Thm:Any_group}
    For any finite abelian group $G$, there is an irreducible ergodic quantum process $\Phi$ with $\GammaGroup\cong G$. 
\end{restatable}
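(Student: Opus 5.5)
Write $G\cong\mbZ/m_1\mbZ\times\cdots\times\mbZ/m_r\mbZ$ and set $M=m_1\cdots m_r$. The plan is to build a process that moves deterministically around a finite "clock" indexed by $G$, with $\theta$ a translation on $G$ itself. The rotation on a finite group gives $\Lambda_\theta$ equal to the character group $\widehat{G}$, and the channel is chosen so that $\PerSpecEQP$ is twice as large and the quotient is again $G$.

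\textbf{Construction.} Let $\Omega=\mbZ/M\mbZ$ with uniform measure and $\theta(\omega)=\omega+1$, which is invertible and ergodic. Then $\Lambda_\theta$ is exactly the group of $M$th roots of unity. This cyclic choice keeps ergodicity, but it cannot produce a noncyclic $\GammaGroup$ unless the channels carry extra structure. We therefore take $\Omega=\mbT^r$, or a finite product, with $\theta(\omega)=(\omega_1+1/m_1,\dots)$. This $\theta$ is ergodic only when the $m_i$ are coprime, so this product form is not usable in general.

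\textbf{Adjusted construction.} Since a noncyclic finite rotation is never ergodic, we keep the ergodic cyclic base $\Omega=\mbZ/M\mbZ$, $\theta(\omega)=\omega+1$, and encode $G$ in the channels. Take $d=|G|$ and $\matrices$ with basis $\{e_g\}_{g\in G}$. Choose a sequence $g_\omega\in G$, $\omega\in\mbZ/M\mbZ$, such that the partial sums $s_\omega=\sum_{j<\omega}g_j$ are consistent around the cycle (that is, $\sum_{j<M} g_j=0$) and such that the increments $\{g_\omega\}$ generate $G$. Define the channel $\phi_\omega(x)=\sum_{h}\tr{E_{hh}x}\,\sigma$ composed with a permutation. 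More concretely, we would use the classical (diagonal) Markov channel
\[
\phi_\omega(x)=\sum_{h\in G}\tfrac12\bigl(\langle e_h,xe_h\rangle\bigr)\bigl(E_{h+a_\omega,h+a_\omega}+E_{h+b_\omega,h+b_\omega}\bigr),
\]
which is a random walk on $G$ with time-dependent steps $a_\omega,b_\omega$.

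\textbf{Spectrum.} Eigen-equations of the form $\phi_{\theta\omega}(x_\omega)=\alpha x_{\theta\omega}$ are solved by $x_\omega=\sum_h\chi(h)\,c_\omega E_{hh}$ for characters $\chi\in\widehat G$ exactly when $\chi(a_\omega)=\chi(b_\omega)$ up to a phase independent of the branch. The steps should be chosen so that the differences $a_\omega-b_\omega$ generate $G$ as $\omega$ varies. Then only the trivial character survives the peripheral condition, and irreducibility follows from a Theorem \ref{Thmx:irreducibility}-type argument: the steady state is $\varrho_\omega=\mbI/d$, and it is unique because the walk mixes over one period.

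\textbf{Main obstacle.} This route produces a trivial $\GammaGroup$, which is the wrong answer. The remedy is to let the peripheral characters be exactly those with $\chi(a_\omega)=\chi(b_\omega)$ for all $\omega$, which requires $a_\omega=b_\omega$. But then the walk is deterministic and reducible. The step I therefore expect to be hardest, and the one I would attempt first, is to decouple the two structures. Let the channel act on $\mbC^{G}\otimes\mbC^{k}$ as a deterministic shift on $G$ tensored with a mixing (primitive) channel on $\mbC^k$. Choose the shifts $a_\omega$ so that the cocycle $\omega\mapsto s_\omega$ is not cohomologous to any character-valued coboundary coming from $\Lambda_\theta$. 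For example, take $M$ to be a multiple of $\exp(G)$ and choose the $a_\omega$ so that the total holonomy around the cycle is trivial while the steps generate $G$. Then $\PerSpecEQP=\{\chi(\cdot)\beta\}$ with $\beta\in\Lambda_\theta$ and $\chi\in\widehat G$, and we must verify that $\chi\mapsto[\chi]$ is injective modulo $\Lambda_\theta$. This injectivity check is the central computation. Finally, irreducibility must be confirmed: the mixing factor gives uniqueness of the steady state in the $\mbC^k$ factor, and the fact that the shifts generate $G$ gives transitivity in the $G$ factor.
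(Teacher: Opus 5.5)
\textbf{There is a genuine gap: the finite cyclic base cannot produce a noncyclic $\GammaGroup$.} Your final construction keeps $\Omega=\mbZ/M\mbZ$, and no choice of channels over such a base works for noncyclic $G$. For that base, $\Lambda_\theta$ is the group of $M$th roots of unity. Since $\GammaGroup$ is finite (Proposition \ref{Prop:Gamma is a finite group}), $\PerSpecEQP$ is then a finite subgroup of $\mbT$ (Corollary \ref{Cor:Lambda_is_a_group}). A finite subgroup of $\mbT$ is cyclic, so its quotient $\GammaGroup$ is cyclic; this is exactly Proposition \ref{Prop:Cyclic group Gamma when torsion}.

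So the ``injectivity check'' you call the central computation cannot succeed. Peripheral eigenvalues are points of $\mbT$, not characters of $G$, and $\widehat G$ cannot embed into a quotient of a finite subgroup of $\mbT$ unless $G$ is cyclic.

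Your decoupled model also fails on its own terms. With a deterministic shift on the $G$-label, let $c$ be the holonomy (the total shift around the cycle) and $s_\omega$ the partial sums. The projections $\sum_{h\in s_\omega+\langle c\rangle}E_{hh}\otimes I_k$ are carried exactly into one another by the channels, so they reduce the process unless $\langle c\rangle=G$. The trivial holonomy you ask for therefore makes the process reducible, and irreducibility forces $G=\langle c\rangle$ to be cyclic. That the individual steps generate $G$ does not help, because each basis label follows a single deterministic path.

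\textbf{The missing idea is that the base must have non-torsion point spectrum.} Your first instinct, $\Omega=\mbT^k$, was right, but with translations by rationally independent irrationals $\alpha_1,\dots,\alpha_k$ rather than by $1/m_i$. This is what the paper does, so that $\Lambda_\theta=\{e^{2\pi i(m_1\alpha_1+\cdots+m_k\alpha_k)}\}$ is free abelian of rank $k$.

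The paper's construction runs as follows. Write $G\cong\bigoplus_j\mbZ/n_j\mbZ$ and $\matrices\cong\bigotimes_j\mbM_{n_j}$. The functions $f_j(\omega)=e^{2\pi i\omega_j/n_j}$ are not eigenfunctions; they satisfy $f_j\circ\theta=e^{2\pi i\alpha_j/n_j}c_jf_j$ with a carry term $c_j=e^{2\pi ip_j/n_j}$. The channel is $\phi_\omega(a)=V_\omega\Delta(a)V_\omega^*$, where $\Delta$ is dephasing in the clock basis and $V_\omega=S_1^{p_1(\omega)}\cdots S_k^{p_k(\omega)}$. It is designed to absorb exactly this carry, so that $u_{\boldsymbol m}=\prod_jf_j^{m_j}C_j^{m_j}$ is an eigenmatrix with eigenvalue $\prod_je^{2\pi i\alpha_jm_j/n_j}$.

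These eigenvalues are $n_j$th roots of the independent generators $e^{2\pi i\alpha_j}$ of $\Lambda_\theta$. By rational independence they are distinct modulo $\Lambda_\theta$ as $\boldsymbol m$ ranges over $\prod_j\mbZ/n_j\mbZ$, which gives $\GammaGroup\cong G$. Irreducibility follows by expanding a fixed point of the adjoint process in the shift--clock basis and using that $\prod_je^{-2\pi iu_j\alpha_j/n_j}\in\Lambda_\theta$ only when $\boldsymbol u=0$. These several independent irrational frequencies are precisely what a finite base, or any base with torsion spectrum (e.g.\ a weakly mixing one), cannot supply.
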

To prove this theorem, we explicitly construct an ergodic quantum process with the properties of the theorem, which in particular shows that the bounds in Theorem \ref{Thm:Finite_group} are sharp. 
Under certain assumptions about $\theta$ (which hold, for example, when $\theta$ is weakly mixing), we can prove the stronger fact that $|\GammaGroup|\leq d$ and that $\GammaGroup$ is a cyclic group (Corollary \ref{Cor:theta_torsion_group_order_leq_d}). 
It is therefore worth noting that our construction of $\Phi$ as in Theorem \ref{Thm:Any_group} relies on $\theta$ being non-chaotic---specifically, the measure-theoretic entropy of $\theta$ in our construction is zero---so the non-cyclicity of $\GammaGroup$ relies on structural properties of $\theta$. 
Theorem \ref{Thm:Finite_group} now gives us a powerful tool for showing how \Proj\assumptionspace arises in the disordered setting. 
First, we require some notation and terminology. 
We say that a finite set $\set{p_k}_{k\in I}$ of random projections is a random partition of unity if $\set{p_{k; \omega}}_{k\in I}$ is a partition of unity for a.e. $\omega\in\Omega$, and, if $\iota$ is a random element of $I$, we let $p_{\iota}:\Omega\to\matrices$ be the random projection defined by $p_{\iota; \omega} 1_{\set{\iota = k}}(\omega) := p_{k; \omega}$.
Then we have the following.
\begin{restatable}{thm}{Partition}\label{Thm:Partition}
    Let $\eqp$ be an irreducible ergodic quantum process with unique steady state $\varrho:\Omega\to\matrices$, and fix $\alpha\in\PerSpecEQP$.
    There is a random partition of unity $\set{p_k}_{k\in\mbZ/N_\alpha\mbZ}$ and measurable $\varsigma:\Omega\to\mbZ/N_\alpha\mbZ$ such that 
    \begin{equation}\label{Eqn:Partition_1}
        \phi_{\theta(\omega)}\!\seq{p_{k; \omega}\matrices p_{k; \omega}}
        \subseteq
        p_{k - \varsigma; \theta(\omega)}\matrices p_{k - \varsigma; \theta(\omega)}
    \end{equation}
    for all $k\in\mbZ/N_\alpha\mbZ$ and 
    \begin{equation}\label{Eqn:Partition_2}
        \varrho_\omega 
        =
        \cfrac{1}{N_\alpha}
        \sum_{k\in\mbZ/N_\alpha\mbZ}
        \cfrac{p_{k; \omega}\varrho_\omega p_{k; \omega}}{\tr{p_{k; \omega}\varrho_\omega p_{k; \omega}}}
    \end{equation}
    holds for a.e. $\omega\in\Omega$. 
\end{restatable}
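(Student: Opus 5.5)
The plan is to follow the Evans--Høegh-Krohn strategy: transfer the eigenvalue $\alpha$ to the dual (Heisenberg) process, where the eigen-solution is a random unitary, and build $\set{p_k}$ from its spectral projections. First, from an eigen-solution $x$ of (\ref{Eqn:Intro:Eigenequation}) with $|\alpha|=1$, I would produce a random matrix $u:\Omega\to\matrices$ solving the adjoint eigenequation
\begin{equation}
\phi^*_{\theta(\omega)}\!\seq{u_{\theta(\omega)}} = \overline{\alpha}\, u_\omega \quad\text{for a.e. }\omega .
\end{equation}
This should follow from the simplicity statement in Theorem \ref{Thm:Finite_group} applied to the dual process, or by the polar-decomposition argument relative to $\varrho$. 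Next I would show $u_\omega$ can be taken unitary. The Kadison--Schwarz inequality gives $\phi^*(u^*u)\geq \phi^*(u)^*\phi^*(u)$. Pairing with the faithful steady state $\varrho$ of Theorem \ref{Thmx:irreducibility} and using stationarity of $\mu$ should force equality. Then $u^*u$ is a fixed point of the dual process, hence a scalar by irreducibility. It follows that $u$ lies in the multiplicative domain of each $\phi^*_{\theta(\omega)}$, so $u^n$ is a dual eigen-solution with eigenvalue $\overline{\alpha}^n$.

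Put $N=N_\alpha$, which is finite by Theorem \ref{Thm:Finite_group}, and $\beta=\alpha^N\in\Lambda_\theta$ with eigenfunction $f$, $|f|=1$, $f\circ\theta=\beta f$. Then $f\,u^N$ is a dual fixed point, so irreducibility gives $u^N_\omega = c\,\overline{f(\omega)}\,\mbI$. Choose a measurable $N$-th root $g$ of $c\overline f$ and set $v=u/g$. Then $v_\omega$ is unitary with $v_\omega^N=\mbI$, so its spectral projections $p_{k;\omega}$ onto $e^{2\pi i k/N}$, $k\in\mbZ/N\mbZ$, form a measurable random partition of unity (allowing possibly zero $p_k$ for now). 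The dual eigenequation becomes $\phi^*_{\theta(\omega)}(v_{\theta(\omega)}) = \lambda(\omega) v_\omega$ with $\lambda(\omega) = \overline{\alpha}\,g(\theta\omega)/g(\omega)$, and $\lambda^N\equiv 1$. So I would define $\varsigma$ by $\lambda = e^{2\pi i\varsigma/N}$, up to the sign convention matching (\ref{Eqn:Partition_1}). Since the whole $*$-algebra generated by $v$ lies in the multiplicative domain, $\phi^*_{\theta(\omega)}$ maps spectral projections of $v_{\theta(\omega)}$ to spectral projections of $v_\omega$ with the label shifted by $\varsigma(\omega)$. Dualizing this identity gives (\ref{Eqn:Partition_1}).

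For (\ref{Eqn:Partition_2}), the random matrix $\sum_k p_k\varrho p_k$ is again a steady state by (\ref{Eqn:Partition_1}), so uniqueness gives $\varrho=\sum_k p_k\varrho p_k$. It remains to show $t_k(\omega):=\tr{p_{k;\omega}\varrho_\omega}=1/N$ for all $k$; this also shows every $p_k\neq 0$. From (\ref{Eqn:Partition_1}) and trace preservation we get $t_{k-\varsigma(\omega)}(\theta\omega)=t_k(\omega)$. Hence $t$ is invariant under the skew product $(\omega,k)\mapsto(\theta\omega,k-\varsigma(\omega))$ on $\Omega\times\mbZ/N\mbZ$.

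The main obstacle is showing this skew product is ergodic, which then forces $t\equiv 1/N$. Suppose it is not ergodic. Then the standard analysis of finite group extensions yields a proper subgroup $m\mbZ/N\mbZ$ and a measurable $h:\Omega\to\mbZ/N\mbZ$ such that $\varsigma$ is cohomologous to an $(m\mbZ/N\mbZ)$-valued cocycle. Then $e^{2\pi i \varsigma/m}$ is a coboundary for $\theta$, and unwinding $\lambda=\overline\alpha\,g\circ\theta/g$ shows $\alpha^m\in\Lambda_\theta$ with $m<N$. This contradicts the minimality of $N_\alpha$. I expect some care is needed there, as well as in the measurable selection of the roots $g$ and of $h$.
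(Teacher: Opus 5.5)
Your proposal is correct in substance. It reaches the crucial facts, that every $p_k$ is nonzero and that $\tr{p_k\varrho}=N_\alpha^{-1}$, by a genuinely different mechanism from the paper.

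\textbf{How the paper does it.}
\begin{itemize}
\item It first proves Proposition~\ref{Prop:Almost sure spectrum of Ualphas}. Since $\phi^\dagger_{\theta(\omega)}$ is a $*$-isomorphism $C^*(u_{\theta(\omega)})\to C^*(u_\omega)$, $|\sigma(u_\omega)|$ is a.s.\ constant. Pairing the characteristic polynomial of $u_\omega$ against $\varrho$, and using $\tr{u^k\varrho}=0$ for $1\le k<N_\alpha$ (Lemma~\ref{Lem:The unitary representation is faithful}, which is where minimality of $N_\alpha$ enters), forces $|\sigma(u_\omega)|=N_\alpha$. So all $p_k\neq 0$ from the start.
\item Equal weights then follow from a normalization trick with no ergodic theory. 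The element $\sum_k p_k\varrho p_k/\tr{p_k\varrho}$ is again $\opL$-invariant with trace $N_\alpha$, so it equals $N_\alpha\varrho$ by uniqueness. Compressing by $p_k$ gives the traces (Corollaries~\ref{Cor:varrho_is_sum_of_varrho_k} and~\ref{Cor:trace_against_p_is_N_alpha_inverse}).
\item Ergodicity of the skew product is proved only afterwards (Proposition~\ref{Prop:Talpha_is_ergodic}), via $\operatorname{Fix}(\opLdag)=\mbC\mbI$. That argument needs $p_k\neq 0$, so it is not available in your ordering.
\item The paper proves (\ref{Eqn:Partition_1}) by domination, from $\phi_{\theta(\omega)}(p_k\varrho p_k)=p_{k-\varsigma}\varrho p_{k-\varsigma}$ and $\varrho>0$. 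You instead dualize $\phi^\dagger_{\theta(\omega)}(p_{k+\varsigma;\theta(\omega)})=p_{k;\omega}$ directly.
\end{itemize}

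\textbf{What each route buys.} Your normalization $v^N=\mbI$ replaces the $a_f,\xi_\alpha$ bookkeeping with a cocycle that is visibly root-of-unity valued. Your route also explains why minimality of $N_\alpha$ forces equidistribution, and it yields ergodicity of the extension and $N_\alpha\le d$ as byproducts. The paper's route stays operator-algebraic and needs no cocycle theory.

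\textbf{The obstacle you flag.} It is elementary for $\mbZ/N\mbZ$. Fourier-expand an invariant function $F(\omega,k)=\sum_r F_r(\omega)e^{2\pi i rk/N}$. Invariance gives $F_r\circ\theta=\lambda^{\pm r}F_r$, and each $|F_r|$ is $\theta$-invariant, hence constant. So non-ergodicity makes $\lambda^r$ a coboundary for some $1\le r<N$. Substituting $\lambda$ then gives $\alpha^r\in\Lambda_\theta$, a contradiction. In your subgroup formulation the resulting exponent is $N/m$, not $m$.

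\textbf{Slips to fix.}
\begin{itemize}
\item The cocycle is $\lambda=\overline{\alpha}\,g/(g\circ\theta)$, not its inverse. With your formula, $\lambda^N\neq 1$ in general.
\item Theorem~\ref{Thm:Finite_group} says nothing about the dual process. Transferring $\alpha$ to a unitary dual eigenvector should cite Propositions~\ref{Prop:Polar_decomp}(b) and~\ref{Prop:Lambda_opL_is_Lambda_opL_dagger}(b), which rest on Groh's results.
\item Invariance of $\sum_k p_k\varrho p_k$ does not follow from (\ref{Eqn:Partition_1}) alone. You also need the off-diagonal blocks $p_a\varrho p_b$ to vanish after compression. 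Use multiplicativity of $\phi^\dagger$ on $C^*(v)$, as in Corollary~\ref{Cor:varrho_k_results}.
\end{itemize}
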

We write $\mcP_\alpha$ and $\varsigma_\alpha$ to denote $\set{p_k}_{k\in\mbZ/N_\alpha\mbZ}$ and $\varsigma$ in the above theorem, respectively. 
This theorem says that in order to generalize \Proj\assumptionspace to the disordered setting, we must accommodate the possibility of disorder-dependent periodicity in the random partition of unity, which, as Example \ref{Example:quasiperiodic} shows, is described by the in general nondeterministic $\varsigma$.
Nevertheless, this disorder-dependent periodicity has some regularity, which leads us to the disordered analog of \Period. 
Given measurable $\tau:\Omega\to\mbN$, we define  $\theta^\tau:\Omega\to\Omega$ by $ \theta^\tau(\omega) = \theta^n(\omega)$ on the event $\set{\tau = n}$---which is measurable by the measurability of $\tau$---and we let $\phi^{(\tau)}$ be the random quantum channel
\begin{equation}
    \phi^{(\tau)}_\omega
        :=
    \phi_{\theta^\tau(\omega)}\circ\cdots\circ\phi_{\theta(\omega)}.
\end{equation}
%
% If $\set{\tau = n}$ is measurable with respect to the $\sigma$-algebra $\sigma(\phi_1, \dots, \phi_n)$ for all $n$, we call $\tau$ a $\eqp$-stopping time. 
%
% We let $\taueqp{\tau}$ be the disordered quantum process defined by $\seq{\theta^\tau, \phi^{(\tau)}}$. 
%
For $n\in\mbN$, we define $\tau_n$ iteratively by $\tau_1 := \tau$ and $\tau_{n+1} = \tau\circ\theta^{\tau_{n}}$ for $n\geq 1$, and we say that $\tau$ has density $\epsilon>0$ if 
\begin{equation}\label{Eqn:Intro:Density_of_tau}
    \liminf_{N\to\infty}
    \cfrac{
    \#\seq{\set{\tau_n(\omega)}_{n = 1}^\infty\cap\set{1, \dots, N}}
    }{
    N
    }
    =
    \limsup_{N\to\infty}
    \cfrac{
    \#\seq{\set{\tau_n(\omega)}_{n = 1}^\infty\cap\set{1, \dots, N}}
    }{
    N
    }
    =
    \epsilon
\end{equation}
holds for $\mu$-almost every $\omega\in\Omega$. 
We say $\tau$ is a $\eqp$-stopping time if $\set{\tau = n}$ is measurable with respect to $\sigma(\phi_1, \dots, \phi_n)$ for all $n$. 
\begin{restatable}{thm}{Periodicity}\label{Thm:Periodicity}
    Let $\eqp$ be an irreducible ergodic quantum process, fix $\alpha\in\PerSpecEQP$, and let $p\in\mcP_\alpha$. 
    Then there is a $\eqp$-stopping time $\tau$ of density $N_\alpha^{-1}$ such that
    \begin{equation}\label{Eqn:Periodicity}
        \phi^{(\tau)}_\omega\seq{p_\omega\matrices p_\omega}\subseteq p_{\theta^\tau(\omega)}\matrices p_{\theta^\tau(\omega)}
    \end{equation}
    holds for a.e. $\omega\in\Omega$. 
    In particular, if $|\GammaGroup|>1$, then the disordered quantum process defined by $\seq{\theta^\tau, \phi^{(\tau)}}$ is reducible.
\end{restatable}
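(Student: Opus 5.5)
Since $p\in\mcP_\alpha$, I take $p = p_j$ for some fixed $j\in\mbZ/N_\alpha\mbZ$, with $\{p_k\}$ and $\varsigma = \varsigma_\alpha$ as in Theorem \ref{Thm:Partition}. Iterating (\ref{Eqn:Partition_1}) along the orbit gives, for every $n\in\mbN$,
\begin{equation*}
    \phi_{\theta^n(\omega)}\circ\cdots\circ\phi_{\theta(\omega)}\!\seq{p_{k;\omega}\matrices p_{k;\omega}}
    \subseteq
    p_{k - S_n(\omega);\theta^n(\omega)}\matrices p_{k - S_n(\omega);\theta^n(\omega)},
\end{equation*}
where $S_n(\omega) := \sum_{m=0}^{n-1}\varsigma(\theta^m(\omega))\in\mbZ/N_\alpha\mbZ$. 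Here I index $\varsigma$ so that the shift at step $m$ is $\varsigma\circ\theta^{m}$; the exact convention will be fixed by the statement of Theorem \ref{Thm:Partition}. I then define
\begin{equation*}
    \tau(\omega) := \inf\set{n\geq 1 \,\,:\,\, S_n(\omega) = 0}.
\end{equation*}
With this choice, (\ref{Eqn:Periodicity}) holds for $p = p_j$ with every $j$ at once, since $k - S_\tau = k$. We also have $\tau_{n}(\omega)$ equal to the $n$th return time of the cocycle $S$ to $0$: the identity $S_{a+b}(\omega) = S_a(\omega)+S_b(\theta^a\omega)$ gives $\{\tau_n(\omega)\}_n = \{m\geq 1: S_m(\omega)=0\}$.

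\textbf{Step 1 (stopping time property).} I need $\varsigma$ to be $\sigma(\phi_1)$-measurable up to null sets, or at least to be a function of the channels. My plan is to revisit the construction in Theorem \ref{Thm:Partition}. There, $\varsigma$ should be determined by the relation $\phi_{\theta(\omega)}(p_{k;\omega}\cdot p_{k;\omega})$ landing in $p_{k-\varsigma;\theta(\omega)}$. Since the $p_k$ are built from the eigenmatrix $x$ (a function of the whole process $\eqp$), it is not obvious that $\varsigma$ depends only on $\phi_1,\dots$. Fortunately, the definition of stopping time only requires $\{\tau=n\}\in\sigma(\phi_1,\dots,\phi_n)$. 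If $\varsigma$ is not adapted, I would try to modify $\tau$, replacing it by the first $n$ at which $\phi^{(n)}_\omega$ maps $p_\omega\matrices p_\omega$ into $p_{\theta^n\omega}\matrices p_{\theta^n \omega}$. This condition involves $p$, though. I expect this adaptedness to be the main obstacle. The fallback is to define $\tau$ via the return times of $S$ and argue that, because $\theta$ is invertible and ergodic, the relevant $\sigma$-algebra generated by the process coincides with $\mcF$ modulo null sets along the orbit. I would first check whether the paper's notion of measurability w.r.t.\ $\sigma(\phi_1,\dots,\phi_n)$ is intended this way.

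\textbf{Step 2 (finiteness and density).} I consider the skew product $\tilde\theta(\omega, k) = (\theta\omega, k - \varsigma(\omega))$ on $\Omega\times\mbZ/N_\alpha\mbZ$, equipped with $\mu\otimes$ counting measure normalized. The measure is preserved, and $\tau$ is the return time to $\Omega\times\{0\}$. The key claim is that $\tilde\theta$ is ergodic. Granting this, Kac's lemma and Birkhoff's theorem show that the frequency of visits to $\{0\}$ equals $1/N_\alpha$ almost surely, which is exactly density $N_\alpha^{-1}$ in (\ref{Eqn:Intro:Density_of_tau}); in particular $\tau<\infty$ a.s. To prove ergodicity, suppose $\tilde\theta$ were not ergodic. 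Fourier analysis on $\mbZ/N_\alpha\mbZ$ then yields $\ell\not\equiv 0$ and nonzero $f$ with $f\circ\theta = e^{2\pi i \ell\varsigma/N_\alpha} f$ (up to sign conventions). Combining $f$ with the eigenmatrix $x = \sum_k \beta^{k}p_k\cdot(\text{stuff})$ underlying the partition would produce an eigenvector for $\alpha^{\ell}\lambda$ with $\lambda\in\Lambda_\theta$, and from there show $\alpha^{\ell}\in\Lambda_\theta$ with $\ell<N_\alpha$. This contradicts the minimality in the definition of $N_\alpha$. An alternative route is to use the fact that any nontrivial invariant set yields a smaller partition reducing $\eqp$ modulo phases, contradicting simplicity from Theorem \ref{Thm:Finite_group}.

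\textbf{Step 3 (reducibility).} If $|\GammaGroup|>1$, I choose $\alpha$ with $N_\alpha>1$. Then $p_0\neq\mbI$ on a set of positive measure, since (\ref{Eqn:Partition_2}) forces all $N_\alpha$ blocks to be nonzero a.s. By (\ref{Eqn:Periodicity}), $p_0$ is a nontrivial reducing projection for $\seq{\theta^\tau,\phi^{(\tau)}}$.
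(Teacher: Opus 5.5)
Your architecture matches the paper's. You use the same $\tau$ (the first return to $0$ of the cocycle $S_n=\varsigma_{\alpha,n}$), obtain (\ref{Eqn:Periodicity}) by iterating Theorem \ref{Thm:Partition}, get density from the pointwise ergodic theorem for the skew product $T_\alpha(\omega,k)=(\theta(\omega),k-\varsigma_\alpha(\omega))$ applied to $1_{\Omega\times\{0\}}$, and get reducibility by choosing $\alpha$ with $N_\alpha>1$.

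The real difference is how ergodicity of $T_\alpha$ is proved. The paper (Proposition \ref{Prop:Talpha_is_ergodic}) takes a $T_\alpha$-invariant $F$ and forms $x=\sum_k F(\cdot,k)p_k$. It uses $\opLdag(p_k)=p_{k+\varsigma_\alpha}$ to get $\opLdag(x)=x$, then concludes from $\operatorname{Fix}(\opLdag)=\mbC\mbI$ (Proposition \ref{Prop:Fix_opLdag_is_CI}). This is the clean form of your ``alternative route''.

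Your Fourier route also works, and written out it is purely scalar, so the vague ``$x=\sum_k\beta^k p_k\cdot(\text{stuff})$'' can go. I write $g$ for your Fourier coefficient, since $f$ is taken by the eigentuple. Take $(\alpha,\beta,u,f)$ an eigentuple and set $h:=e^{2\pi i a_f/N_\alpha}$. The case $k=0$ of Lemma \ref{Lem:Random_permutations} says $h\circ\theta=\alpha\,e^{2\pi i\varsigma_\alpha/N_\alpha}h$. Now suppose $g\neq 0$ satisfies $g\circ\theta=e^{2\pi i\ell\varsigma_\alpha/N_\alpha}g$ with $0<\ell<N_\alpha$. Then $(gh^{-\ell})\circ\theta=\overline{\alpha}^{\,\ell}\,gh^{-\ell}$, so $\alpha^\ell\in\Lambda_\theta$, contradicting minimality of $N_\alpha$.

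Your version shows that ergodicity of $T_\alpha$ is exactly the cohomological content of minimality of $N_\alpha$: the cocycle $e^{2\pi i\varsigma_\alpha/N_\alpha}$ is cohomologous to the constant $\overline{\alpha}$. It needs no second appeal to irreducibility. The paper's version instead ties ergodicity to $\opLdag$, which is the construction it reuses in Theorem \ref{Thm:PeriodicChar}.

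Step 1 is where your proposal does not close, and your suspicion is well founded. The fallback is false: $\sigma(\phi_1,\dots,\phi_n)$ can be far smaller than $\mcF$, and it is trivial when $\phi$ is constant. Your first alternative is exactly the paper's argument. It rewrites $\tau_{\alpha,n}$ as the first $k$ with $\Phi^{(k)\dagger}_\omega(p_{\theta^k(\omega)})=p_\omega$ for all $p\in\mcP_\alpha$ and declares this a $\eqp$-stopping time. That is open to precisely your objection, since the $p_k$ are built from $u$ and $f$, i.e.\ from the whole orbit.

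In fact the adaptedness claim cannot be proved as written. Take $\phi\equiv\psi$, the $d=2$ channel of Remark \ref{Rem:Sharp}, over an irrational rotation $\theta(s)=s+t$.
\begin{itemize}
\item This process is irreducible.
\item $\alpha=-e^{2\pi it}$ has $N_\alpha=2$ and eigenmatrix $u_s=e^{2\pi is}(|e_1\rangle\langle e_1|-|e_2\rangle\langle e_2|)$.
\item With $a_f(s)=2s \bmod 1$ one gets $p_{0;s}=|e_1\rangle\langle e_1|$ for $s\in[0,1/2)$ and $p_{0;s}=|e_2\rangle\langle e_2|$ for $s\in[1/2,1)$. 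Multiplying $u$ by a phase only rotates this half-circle.
\item The $\eqp$-filtration is trivial, so any $\eqp$-stopping time is a.s.\ a constant $n_0$.
\item $\psi^{n_0}$ maps the $e_j$-corner to the $e_{j+n_0}$-corner. So (\ref{Eqn:Periodicity}) would force $[1/2,1)$ to be invariant, up to null sets, under rotation by $n_0t$ (if $n_0$ is even) or by $2n_0t$ (if $n_0$ is odd). This is impossible, since these rotations are ergodic and $[1/2,1)$ has measure $1/2$.
\end{itemize}

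What your argument, like the paper's, actually yields is a stopping time for the filtration $\sigma(\varsigma_\alpha,\dots,\varsigma_\alpha\circ\theta^{n-1})$. This filtration is contained in $\sigma(\phi_1,\dots,\phi_n)$ when the $p_k$ are deterministic. That happens, e.g., in the i.i.d.\ case (Proposition \ref{Prop:IID}), where $\varsigma_\alpha(\omega)$ is read off from $\phi^\dagger_{\theta(\omega)}(p_k)=p_{k+\varsigma_\alpha(\omega)}$. It is not contained in $\sigma(\phi_1,\dots,\phi_n)$ in general.
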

This partially answers \Period, but leaves open the converse classification of when processes like $\seq{\theta^\tau, \phi^{(\tau)}}$ are reducible. 
To deal with this, we adjust our perspective of the problem via \textit{extension of the probability space}: 
in Theorem \ref{Thm:EHK}, it holds that $\phi^n$ is irreducible for all $n$ if and only if $|\Gamma_\phi| = 1$. 
From \cite{Evans1978SpectralC-Algebras}, an equivalent way to phrase this is that for all $n\in\mbN$, if we define 
\begin{equation*}
    \mbZ/n\mbZ\to\channels,\quad k\mapsto \phi \text{ for all $k\in\mbZ/n\mbZ$,}
\end{equation*}
the corresponding ergodic quantum process defined by making $\mbZ/n\mbZ$ into a probability space with the uniform measure $\operatorname{Haar}$ and the map $\theta(k) = k + 1$ is irreducible for all $n$. 
This is the formulation of \Period\, that generalizes most naturally in the present setting. 
Given $n\in\mbN$ and $\zeta:\Omega\to\mbZ/n\mbZ$, we say the skew product $\theta_\zeta:\Omega\times \mbZ/n\mbZ\ni (\omega, k)\mapsto (\theta(\omega), k - \zeta(\omega))$ is a spectrally maximal ergodic extension of $\theta$ by $n$ if $\theta_\zeta$ is ergodic and $|\Lambda_{\theta_\zeta}/\Lambda_\theta| = n$. 
\begin{restatable}{thm}{PeriodicChar}\label{Thm:PeriodicChar}
    Let $\eqp$ be an irreducible ergodic quantum process defined by $\seq{\theta, \phi}$.
    Then $|\GammaGroup| = 1$ if and only if for all $n\in\mbN$ and spectrally maximal ergodic extensions $\theta_\zeta:\Omega\times\mbZ/n\mbZ\to\Omega\times\mbZ/n\mbZ$ of $\theta$ by $n$, the ergodic quantum process $\seq{\theta_\zeta, \phi_\zeta}$ defined by 
    \begin{equation*}
        \phi_{\zeta; (\omega, k)}:= \phi_\omega
    \end{equation*}
    is irreducible. 
    In general, letting $\set{p_k}_{k\in \mbZ/N_\alpha\mbZ}$ and $\varsigma:\Omega\to\mbZ/N_\alpha\mbZ$ be as in Theorem \ref{Thm:Partition}, $\theta_\varsigma$ is a spectrally maximal ergodic extension of $\theta$ by $N_\alpha$, and $(\omega, k)\mapsto p_{k; \omega}$ defines a minimal reducing projection for the ergodic quantum process $\seq{\theta_\varsigma, \phi_\varsigma}$. 
\end{restatable}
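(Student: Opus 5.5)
We prove the two halves of Theorem \ref{Thm:PeriodicChar} in reverse order, since the ``in general'' statement provides the forward implication. Fix $\alpha\in\PerSpecEQP$, and take the random partition $\set{p_k}$ and the map $\varsigma$ from Theorem \ref{Thm:Partition}. Write $q_{(\omega,k)} := p_{k;\omega}$ on $\Omega\times\mbZ/N_\alpha\mbZ$.

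\textbf{Step 1: $q$ reduces the lifted process.} By (\ref{Eqn:Partition_1}),
\begin{equation*}
\phi_{\varsigma;\theta_\varsigma(\omega,k)}\!\seq{q_{(\omega,k)}\matrices q_{(\omega,k)}} = \phi_{\theta(\omega)}\!\seq{p_{k;\omega}\matrices p_{k;\omega}}\subseteq p_{k-\varsigma(\omega);\theta(\omega)}\matrices p_{k-\varsigma(\omega);\theta(\omega)}.
\end{equation*}
The right-hand side is exactly $q_{\theta_\varsigma(\omega,k)}\matrices q_{\theta_\varsigma(\omega,k)}$. Here the indexing convention is that $\varsigma$ is evaluated at $\omega$ and the channel at $\theta(\omega)$, matching the definition of a reducing projection.

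\textbf{Step 2: ergodicity and spectral maximality of $\theta_\varsigma$.} Let $x$ be the simple $\alpha$-eigenmatrix (Theorem \ref{Thm:Finite_group}). The partition is presumably built from the polar/spectral decomposition of $x\varrho^{-1}$, whose unitary part has spectrum in a coset of $N_\alpha$-th roots of unity. From this we expect a measurable $f:\Omega\to\mbT$ with the following properties: $\alpha^{-1}x$ acts on $p_k$-blocks with phase $f(\omega)e^{2\pi i k/N_\alpha}$, and $\alpha^{N_\alpha}f^{N_\alpha}$ is essentially an eigenfunction of $\theta$.

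The function $g(\omega,k) = \overline{f(\omega)}e^{2\pi i k/N_\alpha}$ (suitably normalized) then satisfies $g\circ\theta_\varsigma = \alpha^{-1} g$, or a close variant. Consequently $\Lambda_{\theta_\varsigma}$ contains $\Lambda_\theta$ together with $\alpha$ (up to sign conventions), and $\alpha$ has order $N_\alpha$ modulo $\Lambda_\theta$. This gives $|\Lambda_{\theta_\varsigma}/\Lambda_\theta|\ge N_\alpha$.

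For the upper bound, and for ergodicity, we decompose $\theta_\varsigma$-invariant functions in $k$ via characters $\chi_j(k)=e^{2\pi ijk/N_\alpha}$. An eigenfunction $h$ of $\theta_\varsigma$ splits as $\sum_j h_j(\omega)\chi_j(k)$, and each $h_j$ satisfies $h_j\circ\theta = \beta\,\overline{\chi_j(\varsigma)}\,h_j$. In particular, each nonzero $h_j$ is an eigenfunction of a twisted cocycle.

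Multiplying by $x$ converts this twisted equation into an eigenequation for $\eqp$ with eigenvalue $\beta\alpha^{j}$ (or similar). Simplicity then forces $h_j$ to be determined up to constants, and $\beta\alpha^j\in\Lambda_\theta$ when the corresponding matrix is scalar. This yields $\Lambda_{\theta_\varsigma}=\bigcup_j\alpha^j\Lambda_\theta$. Ergodicity is the case $\beta=1$: the $j=0$ component is constant by ergodicity of $\theta$. For $j\neq0$, we must rule out $\alpha^j\in\Lambda_\theta$ coexisting with $\chi_j(\varsigma)$ being a coboundary twist. We expect this from $j<N_\alpha$ together with the minimality of the construction; this is the main obstacle.

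\textbf{Step 3: minimality of $q$.} Suppose $r\le q$ reduces $\seq{\theta_\varsigma,\phi_\varsigma}$. Irreducibility gives a unique steady state of $\eqp$, and the lift $\tilde\varrho_{(\omega,k)} = N_\alpha^{-1}p_{k;\omega}\varrho_\omega p_{k;\omega}/\tr{p_{k;\omega}\varrho_\omega p_{k;\omega}}$ is a steady state of the lifted process by (\ref{Eqn:Partition_2}). The lifted process restricted to $q$ is then irreducible, which we establish as follows.

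Any invariant state supported on $r$ can be pushed down by summing over $k$, yielding an invariant state of $\eqp$. This pushed-down state must equal $\varrho$, which is faithful. Using ergodicity of $\theta_\varsigma$ and $\tr{r}$-invariance, $r$ must have full rank within $q$ on a full-measure set. Hence $r=q$, and we apply Theorem \ref{Thmx:irreducibility} on the extension.

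\textbf{Step 4: the equivalence.} For ($\Leftarrow$), if $|\GammaGroup|>1$, pick $\alpha$ with $N_\alpha>1$. Steps 1–2 give a spectrally maximal ergodic extension by $N_\alpha$ on which $q\ne\mbI$ reduces, contradicting the hypothesis.

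For ($\Rightarrow$), take any spectrally maximal ergodic extension $\theta_\zeta$ by $n$, and suppose $\phi_\zeta$ is reducible. By Theorem \ref{Thmx:irreducibility}, the lifted process has a steady state $\sigma$ different from the constant lift of $\varrho$. Expanding $\sigma$ in characters $\chi_j(k)$ gives random matrices $\sigma_j$ with $\phi_{\theta(\omega)}(\sigma_{j;\omega}) = \chi_j(\zeta(\omega))\,\sigma_{j;\theta(\omega)}$.

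Spectral maximality supplies eigenfunctions $g_j$ of $\theta_\zeta$ of the form $u_j(\omega)\chi_j(k)$ with eigenvalues $\beta_j$. These satisfy $u_j\circ\theta = \beta_j\chi_j(\zeta)u_j$, so $u_j^{-1}\sigma_j$ is a $\beta_j^{-1}$-type eigenmatrix of $\eqp$, placing $\beta_j$ in $\PerSpecEQP$. Since $|\GammaGroup|=1$, we have $\beta_j\in\Lambda_\theta$, and therefore $\beta_j$ is an eigenvalue of the base. Combined with simplicity, this makes $\chi_j(\zeta)$ cohomologous to a constant, which contradicts $|\Lambda_{\theta_\zeta}/\Lambda_\theta|=n$ unless $\sigma_j=0$ for $j\ne0$. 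Thus $\sigma=\sigma_0$ is a steady state of $\eqp$, so it equals $\varrho$, a contradiction.
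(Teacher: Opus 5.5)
\textbf{Main gap: Step 2.} You never prove that $\theta_\varsigma$ is ergodic, nor the upper bound $|\Lambda_{\theta_\varsigma}/\Lambda_\theta|\le N_\alpha$, as you acknowledge yourself. This is not peripheral. It is part of the statement. Your $(\Leftarrow)$ direction needs $\theta_\varsigma$ to be a spectrally maximal \emph{ergodic} extension by $N_\alpha$ in order to contradict the hypothesis. Step 3 also appeals to it.

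The mechanism you sketch does not close the gap. Multiplying $h_j$ by an eigenmatrix of the process can at best show that some $\beta\alpha^{\pm j}$ is a peripheral eigenvalue of the process, which says nothing about $\Lambda_\theta$. Moreover, $\alpha^j\notin\Lambda_\theta$ for $0<j<N_\alpha$ holds by definition of $N_\alpha$, so the only thing to exclude is that $\chi_j(\varsigma)$ is a coboundary.

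The missing observation is that the \emph{scalar} function $F=e^{2\pi i a_f/N_\alpha}$ from the eigentuple satisfies $F\circ\theta=\alpha\,\chi_1(\varsigma)\,F$, where $\chi_j(m)=e^{2\pi i jm/N_\alpha}$. This is Lemma \ref{Lem:Random_permutations} with $k=0$. It has two consequences:
\begin{itemize}
\item $F\chi_1$ is precisely the $\alpha$-eigenfunction of $\theta_\varsigma$ you were after.
\item $\chi_j(\varsigma)$ is cohomologous to the constant $\alpha^{-j}$, so if it were a coboundary, $\alpha^j$ would lie in $\Lambda_\theta$.
\end{itemize}
Concretely, suppose $h=\sum_j h_j\chi_j$ satisfies $h\circ\theta_\varsigma=\beta h$. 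Then $h_j\circ\theta=\beta\,\chi_j(\varsigma)\,h_j$, so $\overline{F}^{\,j}h_j$ is a $\theta$-eigenfunction with eigenvalue $\beta\overline{\alpha}^{\,j}$. Hence $h_j\neq 0$ forces $\beta\in\alpha^j\Lambda_\theta$. Taking $\beta=1$ gives ergodicity, and general $\beta$ gives $\Lambda_{\theta_\varsigma}\subseteq\bigcup_{j=0}^{N_\alpha-1}\alpha^j\Lambda_\theta$.

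The paper's route is shorter still. For $\theta_\varsigma$-invariant $h$, set $x_\omega=\sum_k h(\omega,k)p_{k;\omega}$. Then $\opLdag(x)=x$ by Lemma \ref{Lem:Random_permutations}, so $x\in\mbC\mbI$ by Proposition \ref{Prop:Fix_opLdag_is_CI}. The upper bound is then the standard group-extension fact.

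\textbf{Smaller gap: Step 3.} You argue with ``any invariant state supported on $r$'' without producing one that is nonzero on every fiber. It is simpler to avoid states altogether:
\begin{itemize}
\item The lifted process has the faithful steady state $(\omega,k)\mapsto\varrho_\omega$.
\item Applying Proposition \ref{Prop:Characterization_of_reducing_proj}(d) to $\mbI-r$ shows that the lifted dual map $D$ satisfies $D(r)\ge r$. Pairing with this faithful invariant state upgrades this to $D(r)=r$.
\item Summing over $k$ then gives $\opLdag(R)=R$ for the projection $R_\omega=\sum_k r(\omega,k)$. Hence $R=\mbI$ by Proposition \ref{Prop:Fix_opLdag_is_CI}.
\item Since $r(\omega,k)\le p_{k;\omega}$, this forces $r=q$.
\end{itemize}
This is essentially the paper's argument, and it does not need ergodicity of $\theta_\varsigma$.

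\textbf{Your $(\Rightarrow)$ direction} is sound and genuinely differs from the paper's. The paper builds a unitary $\sum_k g(\cdot,k)p(\cdot,k)$ from one generating eigenfunction of $\theta_\zeta$ and a minimal reducing projection. You instead Fourier-decompose a second steady state and untwist each nonzero $\sigma_j$, $j\neq0$, by a character-$j$ eigenfunction. This produces a peripheral eigenvalue outside $\Lambda_\theta$, and it avoids needing to know how the fibers of a reducing projection sit relative to each other. Make two points explicit there:
\begin{itemize}
\item Why spectral maximality supplies an eigenfunction in \emph{every} character component. Eigenvalues of the ergodic $\theta_\zeta$ are simple, and $\theta_\zeta$ commutes with $(\omega,k)\mapsto(\omega,k+1)$. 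So each eigenfunction lies in a single component, which gives an injection $\Lambda_{\theta_\zeta}/\Lambda_\theta\hookrightarrow\mbZ/n\mbZ$.
\item The final contradiction when $\beta_j\in\Lambda_\theta$ is most cleanly with ergodicity of $\theta_\zeta$, since it produces a nonconstant invariant function.
\end{itemize}
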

%
%
% The nontrivial content of the theorem is proof of the fact that $|\GammaGroup|>1$ implies the existence of such a spectrally maximal ergodic extension of $\theta$ by $n = |\GammaGroup|$ in the case that $|\Lambda_\theta| = \infty$. 
%
% Owed to the fact that $\tau$ is nondeterminstic in general (Example \ref{Example:quasiperiodic}), proving a classification of $|\GammaGroup| = 1$ in terms of reducibility of processes like $\seq{\theta^\tau, \phi^{(\tau)}}$ would require a better understanding of the class of $\eqp$-stopping times $\tau$ arising in the above theorem. 
% %
% We discuss this in more depth in Section \ref{Sec:Final remarks}.
% %
% However, in the setting that $\theta$ is weakly mixing, we have the following refinement of the above theory.
% %
% Recall that $\theta$ is weakly mixing if and only if $\theta^n$ is weakly mixing for all $n\in\mbN$. 
% %
% %
% \begin{restatable}{thm}{WeaklyMixing}\label{Thm:WeaklyMixing}
%     Let $\eqp$ be an irreducible ergodic quantum process defined by $\seq{\theta, \phi}$, assume that $\theta$ is weakly mixing, let $\alpha\in\PerSpecEQP$, and fix $p\in\mcP_\alpha$.
%     %
%     Then $p$ is a minimal reducing projection for the ergodic quantum process defined by $\seq{\theta^{N_\alpha}, \phi^{(N_\alpha)}}$, and, in particular, $|\GammaGroup| = 1$ if and only if $\seq{\theta^n, \phi^{(n)}}$ is irreducible for all $n\in\mbN$. 
% \end{restatable}
%
%
%
\subsection{Relation to other works}
The study of ergodic quantum processes as described in this paper was begun by the pair of papers \cite{Movassagh2021TheoryProcesses, Movassagh2022AnStates}, where the authors were originally motivated by problems in disordered quantum spin chains. 
In this work, Movassagh and the second author of the present work studied $\eqp$ under the assumption of eventual strict positivity, i.e., that for almost every $\omega\in\Omega$, there exists $n\in\mbN$ such that $\phi^{(n)}_\omega(\rho)$ is invertible for all density matrices $\rho\in\matrices$. 
They showed that this is equivalent to a sort of \textit{strong irreducibility}, and we discuss the relation of this concept with the present work in Section \ref{Sec:Final remarks}.
This pair of papers stimulated work on ergodic quantum processes: see \cite{Nelson2024ErgodicAlgebras, Ekblad2025AEntanglement, Souissi2025ErgodicProcesses} for a selection of recent works on ergodic quantum processes.
In general, sequences of random quantum channels was considered earlier by other authors in the repeated interactions literature: see \cite{Bruneau2010InfiniteDynamics, Nechita2012RandomStates, Bruneau2014RepeatedSystems, Bougron2022MarkovianSystems} for a non-exhaustive list of such works. 
As alluded to above, Perron-Frobenius theory has a vast and rich history since its inception over a century ago \cite{Perron1907ZurMatrices, Frobenius1912UberElementen, Hawkins2008ContinuedTheorem}, and we shall not attempt to give a full collection of works that could be interpreted as Perron-Frobenius theory. 
Instead, let us mention a selection of works in PF theory of quantum channels most relevant to the current investigation.
Most directly related to our work is the influential paper of Evans and Høegh-Krohn \cite{Evans1978SpectralC-Algebras} mentioned above, but this paper exists within a wider literature of PF-like results proved for operator algebras.
For a selection of these results, the reader may consult the list \cite{Stormer1963PositiveAlgebras, Albeverio1978FrobeniusAlgebras,Enomoto1979AC-algebras, Groh1981TheC-Algebras, Groh1983OnC-Algebras, Groh1984UniformW-Algebras, Groh1984UniformlyC-algebras, Farenick1996IrreducibleAlgebras, Schrader2001Perron-FrobeniusIdeals, Fagnola2009IrreducibleMaps}.
In the mathematical physics literature, many PF-like results have also been proved.
See \cite{Wolf2012QuantumTour, Burgarth2013ErgodicDimensions, Carbone2016OpenProperties, Carbone2020OnChannel, Carbone2021AbsorptionChannels} for a selection.
\subsection{Organization}
We begin section \ref{Sec:Previous} by giving all the rigorous mathematical notation, terminology, and relevant background results required to give the basic lemmas we use in our proofs. 
Then, we proceed to reformulate the study of $\eqp$ in terms of a certain pair of operators $\opL$ and $\opLdag$, where we show that $\PerSpecEQP$ is precisely equal to the eigenspectrum of $\opL$ and $\opLdag$. 
We then proceed to prove some basic facts about the eigenspectrum of $\opL$ and $\opLdag$ that form the technical backbone of our more involved results. 
Section \ref{Sec:Spectral theory} is devoted to developing our theory in depth, and we give all the technical details of the proofs of our theorems. 
In Section \ref{Sec:Final remarks}, we describe some open problems and future directions, in addition to giving some preliminary results and remarks directed at addressing these problems. 
There are two appendices: one devoted to the disambiguation of the word ``irreducible" in the study of quantum channels, and one for giving proof of the refinement of Theorem \ref{Thm:Finite_group} in the setting that $\Lambda_\theta$ has a particular group-theoretic structure.

%%%

\section{Preliminaries}\label{Sec:Previous}
We shall encounter various Banach, Hilbert, and vector spaces in this work, so let us begin by setting notation and terminology for these objects in full generality.
All vector spaces are assumed to be over the complex numbers $\mbC$. 
Given vector spaces $\scrV$ and $\scrW$, we write $\linears{\scrV, \scrW}$ to denote the set of linear maps $A:\scrV\to\scrW$, and let $\linears{\scrV} = \linears{\scrV, \scrV}$.
For a Banach space $\scrX$ with norm $\banachnorm{\cdot}{\scrX}$ and for $A\in\linears{\scrX}$, we let $\banachnorm{A}{\scrX}$ denote the quantity
\begin{equation}
    \banachnorm{A}{\scrX}
        :=
    \sup_{x\in\scrX\setminus\set{0}}
    \frac{\|Ax\|_\scrX}{\|x\|_\scrX}.
\end{equation}
We let $\bops{\scrX}$ denote the set of all $A\in\linears{\scrX}$ such that $\banachnorm{A}{\scrX}<\infty$.
Let $\dual{\scrX}$ denote the dual of $\scrX$, i.e., $\dual{\scrX}$ is the set of bounded linear functionals $\varphi:\scrX\to\mbC$. 
For $A\in\bops{\scrX}$, let $\dual{A}$ denote the element of $\bops{\dual{\scrX}}$ defined by $\varphi\mapsto \varphi\circ A$. 
We call a subset $\scrP\subseteq\scrX$ a positive cone if $\scrP\cap \seq{-\scrP 
} = \set{0}$, $\scrP + \scrP\subseteq\scrP$, and $\alpha p\in\scrP$ for all $\alpha\in[0, \infty)$. 
For a positive cone $\scrP\subset\scrX$, we write $x\geq_{\scrP} y$ to denote that $x-y\in\scrP$, and, if $\scrP$ is clear from context, we say that $x$ is positive to mean $x\geq_{\scrP} 0$. 
We say that $A\in\linears{\scrX}$ is $\scrP$-positive if $A\seq{\scrP}\subseteq\scrP$.
Let $\identityop{\scrX}\in\bops{\scrX}$ denote the identity operator on $\scrX$.
Given $A\in\bops{\scrX}$, we let $\spectrum{A}$ denote the spectrum of $A$, which is the set 
\begin{equation}
    \spectrum{A}
        :=
    \set{
    \lambda\in\mbC\,\,:\,\,
    A - \lambda \identityop{\scrX}\text{ is not invertible as an element of }\bops{\scrX}
    }.
\end{equation}
We let $\specrad{A}$ denote the spectral radius of $A$, which is the quantity $\sup_{\lambda\in\sigma(A)}|\lambda|$. 
We let $\perspec{A}$ denote the set of $\lambda\in\sigma(A)$ with $|\lambda| = \specrad{A}$, we let $\eigspec{A}$ denote the set of eigenvalues of $A$, and we let $\apspec{A}$ denote the approximate eigenvalues of $A$. 
Let $\scrH$ be a Hilbert space with inner product $\inner{\cdot}{\cdot}_\scrH$ and $\banachnorm{\cdot}{\scrH}$ the corresponding norm. 
%
% %
Given $A\in\bops{\scrH}$, there is a unique operator $A^*\in\bops{\scrH}$ such that for all $f, g\in\scrH$, we have that 
\begin{equation}
    \inner{Af}{g}_\scrH
        =
    \inner{f}{A^*g}_\scrH.
\end{equation}
We call $A^*$ the adjoint of $A$ with respect to $\inner{\cdot}{\cdot}_\scrH$.
The topology generated by the family of seminorms $\set{p_f\,\,:\,\,f\in\scrH}$ where $p_f(A) = \banachnorm{Af}{\scrH}$ is called the strong operator topology (SOT) and the topology generated by the family of seminorms $\set{p_{f, g}\,\,:\,\,f,g\in\scrH}$ where $p_{f, g}(A) = |\inner{f}{Ag}_{\scrH}|$ is called the weak operator topology (WOT).
For facts about functional analysis, one may consult \cite{Conway2007AAnalysis}.
Let $\scrA$ be a $\mbC$-algebra with unit $1_\scrA$.  
We say that $\scrA$ is a $*$-algebra if there is a conjugate linear map $(\cdot)^*:\scrA\to\scrA$ such that $(a^{*})^* = a$ for all $a\in\scrA$.
Given a $*$-algebra $\scrA$, we say that $\scrA$ is a Banach algebra if there is a norm $\banachnorm{\cdot}{\scrA}$ on $\scrA$ making $\scrA$ into a Banach space such that
\begin{equation}
    \text{$\banachnorm{a}{\scrA} = \banachnorm{a^*}{\scrA}$ and $\banachnorm{ab}{\scrA}\leq \banachnorm{a}{\scrA}\banachnorm{b}{\scrA}$}
\end{equation}
for all $a, b\in\scrA$. 
We call a Banach algebra $\scrA$ a $C^*$-algebra if $\banachnorm{a^*a}{\scrA} = \banachnorm{a}{\scrA}^2$ holds for all $a\in\scrA$. 
We say that a $C^*$-algebra $\scrA$ is a von Neumann algebra if there is a Hilbert space $\scrH$ such that $\scrA\subseteq\bops{\scrH}$ in which $\scrA$ is WOT-closed. 
There is a standard alternative description of von Neumann algebras of which we shall make free use is the following: 
a $C^*$-algebra $\scrA$ is a von Neumann algebra if and only if there exists a Banach space $\predual{\scrA}$ such that $\dual{\seq{\predual{\scrA}}}$ is isometrically isomorphic to $\scrA$. 
Given a $C^*$-algebra $\scrA$, we let $\positives{\scrA}$ denote the set of elements in $\scrA$ given by those $b\in\scrA$ for which there exists $a\in\scrA$ with $b = a^*a$; it is straightforward to check that $\positives{\scrA}$ is a positive cone in $\scrA$. 
We write $a\geq 0$ to denote that $a\in\positives{\scrA}$, and we say $a\in\positives{\scrA}$ is positive.
For any $a\in\positives{\scrA}$, there exists a unique $b\in\positives{\scrA}$ such that $b^2 = a$; we write $\sqrt{b}$ or $b^{1/2}$ to denote this element. 
For any $a\in\scrA$, $a^*a\in\positives{\scrA}$, and we write $|a|$ to denote $\sqrt{a^*a}$.
If $a\geq 0$ is invertible, we call $a$ strictly positive and write $a > 0$. 
We write $\selfadj{\scrA}$ to denote the set of elements $a\in\scrA$ with $a^* = a$, and we call such $a$ self-adjoint; note that $\positives{\scrA}\subset\selfadj{\scrA}$.
We say $p\in\scrA$ is a projection if $p^* = p = p^2$, and, if $\scrA$ is unital with unit $1_\scrA$, we call $u\in\scrA$ unitary if $u^*u = uu^* = 1_\scrA$. 
A bounded linear functional $\varphi\in\dual{\scrA}$ is called a state if $\varphi(a)\geq 0$ for all $a\in\positives{\scrA}$ and $\varphi(1_\scrA) = 1$. 
% %
A state $\tau\in\dual{\scrA}$ is called a trace if $\tau(ab) = \tau(ba)$ for all $a, b\in\scrA$, and $\tau$ is called faithful if $\ker\tau\cap\positives{\scrA} = \set{0}$.
% %
% We call a von Neumann algebra $\scrM$ tracial if there exists a faithful trace $\tau\in\dual{\scrM}$.
%

%
Let $\scrA$ be a $C^*$-algebra, and for $d\in\mbN$ let $\matrices$ denote the $d\times d$ matrices with entries in $\mbC$. 
Note that $\matrices\cong\bops{\mbC^d}$ is itself a $C^*$-algebra.
Make the algebraic tensor product $\scrA\otimes\matrices$ into a $*$-algebra by extending $(a\otimes b)^* = a^*\otimes b^*$ linearly, and make this $*$-algebra into a $C^*$-algebra by letting $\banachnorm{\cdot}{\scrA\otimes\matrices}$ be the unique norm making $\scrA\otimes\matrices$ into a $C^*$-algebra \cite[Theorem 6.3.9]{Murphy2007C-AlgebrasTheory}.
A map $\psi\in\linears{\scrA}$ is called positive if $\psi\seq{\positives{\scrA}}\subseteq\positives{\scrA}$, and $\psi$ is called completely positive if for all $d\in\mbN$ the map $\psi\otimes\identityop{\matrices}\in\linears{\scrA\otimes\matrices}$ is positive, where $\identityop{\matrices}\in\linears{\matrices}$ is the identity. 
Any positive map $\psi\in\linears{\scrA}$ is a $*$-map, meaning $\psi(a^*) = \psi(a)^*$ for all $a\in\scrA$. 
Call $\psi\in\linears{\scrA}$ unital if $\psi(1_\scrA) = 1_\scrA$, and if $\tau$ is a trace on $\scrA$, call say that $\psi$ is $\tau$-preserving if $\tau\circ\psi = \tau$. 
From \cite{Choi1974AC-Algebras}, any completely positive unital map $\psi\in\bops{\scrA}$ satisfies the Schwarz inequality 
\begin{equation}\label{Eqn:Schwarz_inequality}
    \psi(a^*a)\geq \psi(a)^*\psi(a)
\end{equation}
for all $a\in\scrA$. 
In general, we call a map satisfying (\ref{Eqn:Schwarz_inequality}) a Schwarz map. 
For facts about $C^*$-algebras, von Neumann algebras, and completely positive maps, one may consult the standard references \cite{Murphy2007C-AlgebrasTheory}, \cite{Takesaki1979TheoryI, Takesaki2003TheoryII}, and \cite{Paulsen2003CompletelyAlgebras}, respectively.
In this work, we are primarily concerned with various Banach subspaces, Hilbert subspaces, and von Neumann subalgebras of the vector space of random matrices. 
Towards this end, let us recall the relevant matrix-analytic notions and set terminology. 
We let $\mbI = 1_{\matrices}$ denote the identity matrix in $\matrices$. 
For any $a\in\matrices$, let $\proj{a}$ denote the projection onto $\ran(a)$.
For $a\in\matrices$, let $\tr{a}$ denote the sum of the eigenvalues of $a$, and let $\infnorm{a}$ be the operator norm on $a$ by viewing $\matrices$ as $\bops{\mbC^d}$ when $\mbC^d$ is given the standard Euclidean Hilbert space structure.
Let $\tracenorm{\cdot}$ denote the trace norm $\tracenorm{\cdot} = \tr{|\cdot|}$. 
We let $\innerHS{\cdot}{\cdot}$ denote the Hilbert-Schmidt inner product on $\matrices$
\begin{equation}
    \innerHS{a}{b}
    =
    \tr{a^*b}\qquad a, b\in\matrices.
\end{equation}
We write $\hsnorm{\cdot}$ to denote the norm on $\matrices$ corresponding to this inner product. 
For a map $\psi\in\linears{\matrices}$, let $\|\psi\|_{\#}$ denote the operator norm of $\psi$ with respect to the $\|\cdot\|_\#$ norm on $\matrices$, where $\#\in\set{1, \infty, \operatorname{HS}}$.
We let $\states$ denote the set 
\begin{equation}
    \states 
    =
    \set{
    \rho\in\positives{\matrices}\,\,:\,\,\tr{\rho} = 1
    }.
\end{equation}
By the Riesz representation theorem applied to $\seq{\matrices, \innerHS{\cdot}{\cdot}}$, it is easy to see that $\states$ is in bijective correspondence with the set of states in $\dual{\matrices}$; for this reason we call $\rho\in\states$ a state. 
Note that $\linears{\matrices} = \bops{\matrices}$. 
Given $\psi\in\linears{\matrices}$, we write $\psi^\dagger$ to denote the adjoint of $\psi$ with respect to the inner product $\innerHS{\cdot}{\cdot}$.
We call a map $\phi\in\linears{\matrices}$ a quantum channel if $\phi$ is completely positive and $\operatorname{Tr}{\cdot}$-preserving, and we denote the set of all such maps by $\channels$.
From the Riesz representation theorem, $\phi\in\channels$ if and only if $\phi^\dagger$ is completely positive and unital; in particular, the adjoint of a quantum channel satisfies the Schwarz inequality (\ref{Eqn:Schwarz_inequality}), a fact we use regularly in the following. 
Fix $\seq{\Omega, \mcF, \mu}$ a standard probability space. 
For a measurable function $f:\Omega\to\mbC$, let $\avg{f}$ denote the quantity 
\begin{equation}
    \avg{f}
        =
    \int_\Omega 
    f(\omega)\,\dee\mu(\omega)
\end{equation}
whenever it is defined.
Let $\rmatrices$ denote the set of random  $d\times d$ matrices, 
\begin{equation}
    \rmatrices 
        =
    \set{a:\Omega\to\matrices\text{ measurable}}/\sim,
\end{equation}
where $\sim$ denotes almost everywhere equivalence. 
We let $\positives{\rmatrices}$ (resp. $\selfadj{\rmatrices}$) denote the subset of $\rmatrices$ consisting of $a\in\rmatrices$ with $a\in \positives{\matrices}$ (resp. $a\in\selfadj{\matrices}$) almost surely and we call such a random matrix positive (resp. self-adjoint).
For $a\in\positives{\rmatrices}$ with $a>0$ almost surely, we write $a>0$. 
We call a random matrix $a\in\rmatrices$ a (random) projection (resp. unitary) if $a$ is a projection (resp. unitary) almost surely. 
Let $\rstates$ denote the subset of $\positives{\rmatrices}$ of all $\rho\in\positives{\rmatrices}$ with $\tr{\rho} = 1$ almost surely. 
For $a\in\rmatrices$, we often implicitly fix a representative of $a$ and write $a_\omega$ to denote the evaluation of this representative at a point $\omega\in\Omega$. 
Note that $\rmatrices$ is a $*$-algebra with pointwise almost everywhere operations. 
We identify $\matrices$ with the subalgebra of $\rmatrices$ consisting of $a'\in\rmatrices$ for which there is $a\in\matrices$ with $a' = a$ almost surely.
For $a\in\rmatrices$, let $\linfnorm{a}$ be the norm 
\begin{equation}
    \linfnorm{a}
        =
    \operatorname{ess-sup}_{\omega\in\Omega}\infnorm{a_\omega}
\end{equation}
Then let $\Linfty{\Omega}{\matrices}$ denote the von Neumann subalgebra of $\rmatrices$ defined by 
\begin{equation}
    \Linfty{\Omega}{\matrices}
    =
    \set{a\in\rmatrices\,\,:\,\, \linfnorm{a}<\infty}.
\end{equation}
It is not hard to see that $\Linfty{\Omega}{\matrices}\cong L^\infty(\Omega)\otimes\matrices$, and that $ a\mapsto 
    \avg{\trlower{a}}$
defines a faithful trace on $\Linfty{\Omega}{\matrices}$.
We denote this trace by $\Linftrace{\cdot}$. 
It is a standard fact that $\predual{L^\infty(\Omega)}\cong L^1(\Omega)$, hence $\predual{\Linfty{\Omega}{\matrices}}\cong \Lone{\Omega}{\matrices}$, where 
\begin{equation}
    \Lone{\Omega}{\matrices}
    =
    \set{a\in\rmatrices
        \,\,:\,\,
        \Linftrace{|a|}<\infty
        }.
\end{equation}
For $a, b\in\rmatrices$, let $\Ltwozinner{a}{b}{\mbI}$ denote 
\begin{equation}
    \Ltwozinner{a}{b}{\mbI}
    =
    \int_\Omega 
    \innerHS{a_\omega}{b_\omega}\,\dee\mu(\omega)
\end{equation}
(whenever this quantity is defined), and let $L^2(\mbI)$ be the subset of $\rmatrices$ given by 
\begin{equation}
    L^2(\mbI)
    =
    \set{a\in\rmatrices
        \,\,:\,\,
         \Ltwozinner{a}{a}{\mbI}<\infty
    }.
\end{equation}
We see that $L^2(\mbI)$ is a Hilbert space with $\Ltwozinner{\cdot}{\cdot}{\mbI}.$
\subsection{The operator-theoretic perspective on disordered quantum processes}
Recall from the introduction that we said the disordered quantum process defined by $\seq{\theta, \phi}$ was reduced by a projection $p\in\rmatrices$ if 
\begin{equation}
    \phi_{\theta(\omega)}\!\seq{p_\omega\matrices p_\omega}
    \subseteq p_{\theta(\omega)}\matrices p_{\theta(\omega)}
\end{equation}
held almost surely. 
We now rephrase this condition in terms of certain operators on $\rmatrices$, which is a useful perspective enabling functional analytic techniques. 
We say a map $\theta:\Omega\to\Omega$ is a measure preserving transformation (\textit{mpt}) if for all events $F\in\mcF$, we have that $\prob{\theta^{-1}(F)} = \prob{F}$.  
We call $\theta$ invertible if $\theta^{-1}$ exists and is measurable. 
We call $\theta$ ergodic if the hypothesis 
\begin{equation}
\text{for all events $F\in\mcF$, }
    \prob{\theta^{-1}(F)\bigtriangleup F}
    =
    0
    \text{ implies that }
    \prob{F}\in\set{0, 1}
\end{equation}
is satisfied, where $\bigtriangleup$ denotes the symmetric difference. 
For a measurable map $T:\Omega\to\Omega$ and a measurable function $f:\Omega\to\mbC$, we let $\koopman{T}(f)$ denote the function $f\circ T$. 
It is clear that whenever $T$ is an invertible mpt, $\koopman{T}$ defines a unitary operator in $\bops{L^2(\Omega)}$. 
%
%Whenever we speak of $\sigma(\koopman{T})$, we shall be referring to the spectrum of $\koopman{T}$ when $\koopman{T}$ is viewed as an element of $\bops{L^2(\Omega)}$.
%
It is a standard fact that an invertible mpt $T$ is ergodic if and only if the eigenvalue of $\koopman{T}$ at $\lambda = 1$ is nondegenerate. 
Let $\mbT = \set{\lambda\in\mbC\,\,:\,\, |\lambda| = 1}$. 
It is known \cite{Strmer1974SpectraTransformations} that $\sigma(\koopman{T}) = \mbT$ whenever $T$ is an ergodic invertible mpt and we view $\koopman{T}$ as an operator on $L^2(\Omega)$. 
Let us now recall some facts about ergodic, invertible mpt that we will make use of in the following. 
\begin{prop}\label{Prop:Basics_of_ergodicity}
     Let $\theta:\Omega\to\Omega$ be an invertible mpt.
    The following are equivalent. 
    \begin{enumerate}[label = (\alph*)]
        \item $\theta$ is ergodic.

        \item $\theta^{-1}$ is ergodic.

        \item For any $E\in\mcF$ with $\theta^{-1}(E)\subset E$, $\mu[E]\in\set{0, 1}$.

        \item For any measurable function $f:\Omega\to\mbR$, the almost sure equality $\koopman{\theta}(f) = f$ implies that $f$ is constant. 
    \end{enumerate}
\end{prop}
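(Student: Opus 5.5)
We will prove the cycle of implications (a)$\Rightarrow$(c)$\Rightarrow$(d)$\Rightarrow$(a), and treat (a)$\Leftrightarrow$(b) separately. Throughout we use that $\theta$ is invertible and measure preserving, so $\mu[\theta(F)] = \mu[\theta^{-1}(F)] = \mu[F]$ for every $F\in\mcF$.

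\textbf{(a)$\Leftrightarrow$(b).} Apply $\theta$ to the symmetric difference: since $\theta$ is a bijection,
\begin{equation*}
\theta\seq{\theta^{-1}(F)\bigtriangleup F} = F\bigtriangleup\theta(F).
\end{equation*}
Because $\theta$ preserves $\mu$, the two sets have the same measure. Hence $F$ is $\theta$-invariant mod null sets exactly when it is $\theta^{-1}$-invariant mod null sets. The two ergodicity conditions therefore quantify over the same class of events, and they coincide.

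\textbf{(a)$\Rightarrow$(c).} Let $\theta^{-1}(E)\subset E$. The set difference $E\setminus\theta^{-1}(E)$ has measure $\mu[E]-\mu[\theta^{-1}(E)] = 0$. So $\mu[\theta^{-1}(E)\bigtriangleup E] = 0$, and ergodicity gives $\mu[E]\in\set{0,1}$.

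\textbf{(c)$\Rightarrow$(d).} This is the step needing a small construction, since (c) only concerns sets that are exactly forward-contained, while $f\circ\theta = f$ holds only almost surely. Let $f$ be real measurable with $f\circ\theta = f$ a.s., and fix $t\in\mbR$. Put $A_t = \set{f>t}$. Then $\theta^{-1}(A_t)\bigtriangleup A_t$ is null. Define
\begin{equation*}
E_t = \bigcup_{n\ge 0}\theta^{-n}(A_t).
\end{equation*}
This set satisfies $\theta^{-1}(E_t)\subset E_t$. It differs from $A_t$ by a null set, because each $\theta^{-n}(A_t)\bigtriangleup A_t$ is null, using invariance of $\mu$ under $\theta^{-1}$ and the triangle inequality for symmetric differences. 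By (c), $\mu[A_t]\in\set{0,1}$ for every $t$. Since $t\mapsto\mu[A_t]$ is monotone, taking $c = \sup\set{t : \mu[A_t]=1}$ (over rational $t$) shows $f = c$ a.s. The constant $c$ is finite because $f$ is real-valued.

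\textbf{(d)$\Rightarrow$(a).} Let $\mu[\theta^{-1}(F)\bigtriangleup F]=0$ and take $f = 1_F$. Then $\koopman{\theta}(f) = 1_{\theta^{-1}(F)} = f$ a.s. By (d), $f$ is constant, so $\mu[F]\in\set{0,1}$.

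The only delicate point is the passage from almost-invariant to strictly invariant sets in (c)$\Rightarrow$(d), handled by the union construction $E_t$.
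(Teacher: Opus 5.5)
Your proof is correct. The cycle (a)$\Rightarrow$(c)$\Rightarrow$(d)$\Rightarrow$(a), together with the symmetric-difference argument for (a)$\Leftrightarrow$(b), covers every case, and the forward union $E_t=\bigcup_{n\ge 0}\theta^{-n}(\{f>t\})$ correctly turns almost-invariance into the exact inclusion that (c) requires.

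The paper gives no argument of its own here; it only points to \cite[Ch. 4]{Viana2015FoundationsTheory} and \cite[Ch. 1]{Walters1982AnTheory}. What you wrote is essentially the standard textbook argument found there: upgrade almost-invariant sets via unions of preimages, then use level sets to handle functions.
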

\begin{proof}
    See \cite[Ch. 4]{Viana2015FoundationsTheory} or \cite[Ch. 1]{Walters1982AnTheory}.
\end{proof}
From this, we can prove the following standard fact of which we make use at multiple times in our technical arguments. 
\begin{lemma}\label{Lem:Increasing_implies_constant}
    Assume $\theta:\Omega\to\Omega$ is an ergodic mpt. 
    If $f:\Omega\to\mbR$ satisfies $\koopman{\theta}(f)\leq f$ almost surely, then there is $C\in\mbR$ with $f = C$ almost surely. 
\end{lemma}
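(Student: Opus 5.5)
The plan is to reduce to the invariant-function characterization in Proposition \ref{Prop:Basics_of_ergodicity}(d) through the level sets of $f$. Fix $t\in\mbR$ and set $E_t = \set{f > t}$. From $\koopman{\theta}(f)\leq f$ almost surely, if $\theta(\omega)\in E_t$ then $t < f(\theta(\omega)) \leq f(\omega)$, so $\omega\in E_t$. That is, $\theta^{-1}(E_t)\subseteq E_t$ up to a null set.

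The next step is to turn this one-sided inclusion into genuine invariance. Since $\theta$ is measure preserving, $\prob{\theta^{-1}(E_t)} = \prob{E_t}$. Because $\theta^{-1}(E_t)\subseteq E_t$ modulo null sets, we get $\prob{E_t\setminus\theta^{-1}(E_t)} = \prob{E_t} - \prob{\theta^{-1}(E_t)} = 0$. Hence $\prob{\theta^{-1}(E_t)\bigtriangleup E_t} = 0$, and ergodicity gives $\prob{E_t}\in\set{0,1}$. Alternatively, one can apply Proposition \ref{Prop:Basics_of_ergodicity}(c) directly after discarding a null set. The only subtlety in the whole argument is this step, since the inclusion holds only almost surely. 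It is handled either by the symmetric-difference form of ergodicity, or by replacing $E_t$ with $\bigcap_{n\geq 0}\theta^{-n}(E_t)$, which differs from $E_t$ by a null set and is exactly backward invariant.

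Finally, observe that $t\mapsto \prob{E_t}$ is nonincreasing and takes only the values $0$ and $1$. It tends to $1$ as $t\to-\infty$ and to $0$ as $t\to+\infty$, because $f$ is real-valued and hence finite everywhere. Set $C = \sup\set{t : \prob{E_t} = 1}$, which is finite. Then $f > t$ almost surely for every rational $t < C$, and $f\leq t$ almost surely for every rational $t > C$. Taking the countable intersection over rationals gives $f = C$ almost surely.
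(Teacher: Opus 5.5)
Your proof is correct, but it takes a different route from the paper's.

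\textbf{The paper's route.} The paper works with two-sided windows $A=\{f\in(c-\varepsilon,c+\varepsilon)\}$ of positive measure. Monotonicity alone cannot stop an orbit from dropping out of such a window through the bottom, so the paper uses Poincar\'e recurrence: almost every $\omega\in A$ returns to $A$ at some time $n$. The chain $c-\varepsilon<f(\theta^n\omega)\le\cdots\le f(\theta\omega)\le f(\omega)<c+\varepsilon$ then traps $f(\theta\omega)$ in the window. This gives $A\subseteq\theta^{-1}(A)$ a.e., and ergodicity forces $\mu(A)=1$. Nonconstancy is then refuted by contradiction, using two disjoint windows of positive mass.

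\textbf{Your route.} You use one-sided superlevel sets $E_t=\{f>t\}$. For these, $\theta^{-1}(E_t)\subseteq E_t$ a.e.\ follows directly from $f\circ\theta\le f$, so recurrence is not needed at all. Measure preservation alone turns the one-sided inclusion into invariance modulo null sets. The symmetric-difference definition of ergodicity then applies verbatim, without the invertibility assumed in Proposition~\ref{Prop:Basics_of_ergodicity}.

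\textbf{Comparison.} Your argument is more elementary and constructive: it identifies the constant as $C=\sup\{t:\mu(\{f>t\})=1\}$, the essential infimum of $f$. The paper's argument proceeds by contradiction. Neither argument needs integrability of $f$. This matters, because the usual trick $\int(f-f\circ\theta)\,d\mu=0$ is unavailable for a general measurable $f$.

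\textbf{A small correction to your aside.} The set $\bigcap_{n\ge 0}\theta^{-n}(E_t)$ satisfies $F\subseteq\theta^{-1}(F)$, not $\theta^{-1}(F)\subseteq F$. The union $\bigcup_{n\ge 0}\theta^{-n}(E_t)$ is the modification that meets the hypothesis of part (c) exactly. This does not affect your main argument.
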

\begin{proof}
    Let $\varepsilon>0$ and let $c\in\mbR$ be such that $\prob{f\in (c - \varepsilon, c + \varepsilon)} > 0$. 
    By Poincaré recurrence, for almost every $\omega\in\Omega$ with $f(\omega)\in (c - \varepsilon, c + \varepsilon)$, there is $n\in\mbN$ such that $f(\theta^n(\omega))\in (c - \varepsilon, c + \varepsilon)$.
    Because $\koopman{\theta}(f)\leq f$ almost surely, we have that 
    \begin{equation}
        c - \varepsilon < f(\theta^n(\omega))\leq\cdots \leq f(\theta(\omega))\leq f(\omega) < c + \varepsilon,
    \end{equation}
    so in particular $f(\theta(\omega))\in (c - \varepsilon, c + \varepsilon)$. 
    So, by Proposition \ref{Prop:Basics_of_ergodicity}, we conclude that $\prob{f\in (c - \varepsilon, c + \varepsilon)} = 1$. 
    Therefore, if $f$ were nonconstant, there would be distinct $c_1, c_2\in\mbR$ and $\varepsilon>0$ small enough so that $ (c_1 - \varepsilon, c_1 + \varepsilon)\cap  (c_2 - \varepsilon, c_2 + \varepsilon) = \emptyset$ and
    \begin{equation}
        \prob{f\in (c_1 - \varepsilon, c_1 + \varepsilon)}>0\quad\text{and}\quad
        \prob{f\in (c_2 - \varepsilon, c_2 + \varepsilon)}>0,
    \end{equation}
    which, by the above argument, implies $ \prob{f\in (c_1 - \varepsilon, c_1 + \varepsilon)}=
        \prob{f\in (c_2 - \varepsilon, c_2 + \varepsilon)} = 1$, an absurdity. 
    Therefore, it must be that $f$ is almost surely constant, which ends the proof. 
\end{proof}
Now we may introduce the operator-theoretic perspective on ergodic quantum processes. 
For a measurable $\psi:\Omega\to\linears{\matrices}$ and $T:\Omega\to\Omega$, we let $\glop{T}{\psi}\in\linears{\rmatrices}$ denote the linear operator defined by 
\begin{equation}
    \begin{split}
        \glop{T}{\psi}:\rmatrices&\to\rmatrices\\
        \glop{T}{\psi}(a)_\omega &:= \psi_{\omega}\!\seq{a_{T(\omega)}}.
    \end{split}
\end{equation}
Phrased differently, if we let $M_\psi\in\linears{\rmatrices}$ be the operator $M_\psi(a)_\omega = \psi_\omega(a_\omega)$, and if we identify $\koopman{T}$ with the linear operator in $\linears{\rmatrices}$ defined by $\koopman{T}(a)_\omega = a_{T(\omega)}$, then we see that 
\begin{equation}
    \glop{T}{\psi}
    =
    M_\psi\circ\koopman{T}.
\end{equation}
If $T:\Omega\to\Omega$ is invertible, $\glopdag{T}{\psi}$ to denote $\glop{T^{-1}}{\koopman{T^{-1}}(\psi^\dagger)}$. 
That is, $\glopdag{T}{\psi}$ is the map 
\begin{equation}
    \glopdag{T}{\psi}(a)_\omega
        =
    \psi^\dagger_{T^{-1}(\omega)}\seq{a_{T^{-1}(\omega)}}.
\end{equation}
We may now rephrase reducibility in terms of $\glop{T}{\psi}$.
\begin{definition}[Reducing projections]\label{Defn:Reducing_projection_for_glop}
    Let $T:\Omega\to\Omega$ be measurable, fix $\psi:\Omega\to\linears{\matrices}$, and let $p\in\rmatrices$ be a projection
    We say that $p$ reduces $\glop{T}{\psi}$ if for all $a\in\rmatrices$ with $pap = a$, we have that $p\glop{T}{\psi}(a)p = \glop{T}{\psi}(a)$, and we call such $p$ a reducing projection. 
    If the only projection reducing $\glop{T}{\psi}$ is $p = \mbI$, we call $\glop{T}{\psi}$ irreducible. 
\end{definition}
\begin{remark}
    The relationship between this version of reducibility and the one given in the introduction is the following: 
    if $T$ is invertible, then notice that $\glop{T^{-1}}{\psi}$ is reduced by $p$ is equivalent to 
    \begin{equation}
        \koopman{T}\!\seq{\glop{T^{-1}}{\psi}(pap)}
        =
        \koopman{T}(p)\koopman{T}\!\seq{\glop{T^{-1}}{\psi}(a)}\koopman{T}(p)
    \end{equation}
    for all $a\in\rmatrices$, which, evaluated pointwise, becomes 
    \begin{equation}
        \psi_{T(\omega)}\!\seq{p_\omega a_\omega p_\omega}
        =
        p_{T(\omega)}\psi_{T(\omega)}(a_\omega) p_{T(\omega)},
    \end{equation}
    i.e., $\psi_{T(\omega)}\!\seq{p_\omega \matrices p_\omega}\subseteq p_{T(\omega)}\matrices p_{T(\omega)}$.
\end{remark}
In \cite{Ekblad2024ReducibilityProcesses}, the following characterization of reducibility was given.
\begin{prop}[\texorpdfstring{\cite{Ekblad2024ReducibilityProcesses}}{l}]\label{Prop:Characterization_of_reducing_proj}
    Let $T:\Omega\to\Omega$ be measurable and assume $\psi:\Omega\to\linears{\matrices}$ is positive almost surely.
    Let $p\in\rmatrices$ be a projection. 
    The following are equivalent. 
    \begin{enumerate}[label = (\alph*)]
        \item $p$ reduces $\glop{T}{\psi}$. 

        \item $p\glop{T}{\psi}(p)p = \glop{T}{\psi}(p)$.

        \item There is $f:\Omega\to (0, \infty)$ such that $\glop{T}{\psi}(p)\leq f p$ almost surely.
    \end{enumerate}
    If we assume that there is a constant $C\in (0, \infty)$ such that $\|\glop{T}{\psi}(x)\|_\infty\leq C \|x\|_\infty$ almost surely for all $x\in\rmatrices$, then the above conditions are equivalent to 
    \begin{enumerate}[label = (d)]
        \item  $\glop{T}{\psi}(p)\leq C p$ almost surely. 
    \end{enumerate}
    Moreover, if $T$ is an invertible mpt, $p$ reduces $\glop{T}{\psi}$ if and only if $\mbI - p$ reduces $\glopdag{T}{\psi}$.
\end{prop}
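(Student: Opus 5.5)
The plan is to go around the cycle (a)$\Rightarrow$(b)$\Rightarrow$(c)$\Rightarrow$(a), using only pointwise matrix facts plus positivity of $\psi_\omega$, and then to treat (d) and the duality statement separately. Two elementary matrix facts do most of the work. First, for $b\in\positives{\matrices}$ and a projection $q$, the conditions $b=qbq$, $(\mbI-q)b(\mbI-q)=0$ and $b\leq cq$ for some $c>0$ are all equivalent. Each says $\ker q\subseteq\ker b$; the middle one uses $\|b^{1/2}(\mbI-q)\|^2=\|(\mbI-q)b(\mbI-q)\|$. Second, if $b=qbq$ then $b\leq\infnorm{b}\,q$. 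Throughout, write $\Psi=\glop{T}{\psi}$, so $\Psi(a)_\omega=\psi_\omega(a_{T(\omega)})$.

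For (a)$\Rightarrow$(b), take $a=p$. For (b)$\Rightarrow$(c), use $\Psi(p)_\omega\geq 0$ and the second fact to get $\Psi(p)_\omega\leq\infnorm{\Psi(p)_\omega}p_\omega$, and set $f=\infnorm{\Psi(p)}+1$, which is measurable and strictly positive. For (c)$\Rightarrow$(a), first take $a\geq0$ with $pap=a$. Then $a_{T(\omega)}\leq\infnorm{a_{T(\omega)}}\,p_{T(\omega)}$, and positivity of $\psi_\omega$ gives
\[
\Psi(a)_\omega\leq \infnorm{a_{T(\omega)}}\,f(\omega)\,p_\omega .
\]
The first fact then yields $p\Psi(a)p=\Psi(a)$. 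For a general $a$ with $pap=a$, split $a$ into real and imaginary parts. Each part is self-adjoint and still compressed by $p$. The positive and negative parts of a self-adjoint $h$ with $php=h$ are functions of $h$; since $h$ commutes with $p$ and vanishes on $\ker p$, they are again compressed by $p$. These parts are measurable by continuous functional calculus. Linearity of $\Psi$ then finishes (a).

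For (d), one direction is immediate: (d) is a special case of (c) with the constant function $f=C$. Conversely, if (a) holds, then $\Psi(p)=p\Psi(p)p\leq\infnorm{\Psi(p)_\omega}\,p_\omega$. The assumed bound gives $\infnorm{\Psi(p)_\omega}\leq C\infnorm{p}\leq C$, which is (d).

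The main obstacle is the final duality claim, since it needs a global change of variables rather than a pointwise argument. By the first fact and $\Psi(p)\geq0$, $p$ reduces $\Psi$ iff $\tr{(\mbI-p_\omega)\psi_\omega(p_{T(\omega)})}=0$ a.s. This integrand is nonnegative, so the condition holds iff its integral over $\Omega$ vanishes; the integral may a priori be infinite, but that causes no trouble for nonnegative functions. Now move $\psi_\omega$ across the Hilbert–Schmidt pairing and substitute $\omega\mapsto T^{-1}(\omega)$, which is legitimate because $T$ is an invertible mpt. The integral becomes
\[
\int_\Omega \tr{\psi^\dagger_{T^{-1}(\omega)}\!\seq{(\mbI-p)_{T^{-1}(\omega)}}\,p_\omega}\,\dee\mu(\omega)
=\int_\Omega\tr{p_\omega\,\glopdag{T}{\psi}(\mbI-p)_\omega}\,\dee\mu(\omega).
\]
Since $\psi^\dagger_\omega$ is positive, $\glopdag{T}{\psi}(\mbI-p)\geq0$, so this integrand is nonnegative too. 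It vanishes a.s. iff $p\,\glopdag{T}{\psi}(\mbI-p)\,p=0$, i.e. $(\mbI-p)\glopdag{T}{\psi}(\mbI-p)(\mbI-p)=\glopdag{T}{\psi}(\mbI-p)$. By (b)$\Leftrightarrow$(a) applied to $\glopdag{T}{\psi}=\glop{T^{-1}}{\koopman{T^{-1}}(\psi^\dagger)}$, this says exactly that $\mbI-p$ reduces $\glopdag{T}{\psi}$.
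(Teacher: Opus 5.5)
Your proof is correct. The paper does not prove this proposition; it imports it from \cite{Ekblad2024ReducibilityProcesses} and states it without argument, so there is no in-paper route to compare yours against. What you give is a complete, self-contained proof with three ingredients:
\begin{itemize}
\item the pointwise equivalences for $b\geq 0$ and a projection $q$, namely $b=qbq$ iff $(\mbI-q)b(\mbI-q)=0$ iff $b\leq cq$ for some $c>0$;
\item positivity of $\psi_\omega$, with the reduction of a general $a$ satisfying $pap=a$ to four positive pieces still compressed by $p$;
\item a trace-duality argument for the last claim.
\end{itemize}
All the steps check out, including the self-adjointness of $\psi^\dagger_\omega(\mbI-p_\omega)$ and the identification of $\glopdag{T}{\psi}$ with $\glop{T^{-1}}{\koopman{T^{-1}}(\psi^\dagger)}$ needed to reuse (b)$\Leftrightarrow$(a).

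Three small remarks.

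First, in (c)$\Rightarrow$(a) you use $p_{T(\omega)}a_{T(\omega)}p_{T(\omega)}=a_{T(\omega)}$ for a.e.\ $\omega$. This requires $\mu(T^{-1}(N))=0$ whenever $\mu(N)=0$. That nonsingularity of $T$ is not stated in the proposition. It is already forced, though, by the requirement that $\glop{T}{\psi}$ be well defined on a.e.-equivalence classes, so it is a harmless implicit hypothesis, but you should name it.

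Second, in the duality step the integral is not really needed. Since $T$ is an invertible mpt, both $T$ and $T^{-1}$ preserve null sets. Hence the nonnegative function $F(\omega)=\tr{\psi_\omega^\dagger(\mbI-p_\omega)\,p_{T(\omega)}}$ vanishes a.s.\ iff $F\circ T^{-1}$ does. Your formulation via $\int F\,\dee\mu=\int F\circ T^{-1}\,\dee\mu$, allowing the value $+\infty$, is correct; it just encodes this same fact.

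Third, in (a)$\Rightarrow$(d) the norm hypothesis is somewhat ambiguously phrased. Your bound $\infnorm{\Psi(p)_\omega}\leq C$ holds under any sensible reading of it, since $\infnorm{p_\omega}\leq 1$ for every $\omega$.
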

In this work, we shall study $\glop{T}{\psi}$ when $T$ is an invertible ergodic mpt and $\psi$ is a random quantum channel. 
Towards this end, let us fix a distinguished invertible ergodic mpt $\theta:\Omega\to\Omega$ and random quantum channel 
\begin{equation}
    \phi:\Omega\to\channels,
\end{equation}
and let $\opL\in\linears{\rmatrices}$ denote the linear operator
\begin{equation}
    \opL 
    =
    \glop{\theta^{-1}}{\phi}.
\end{equation}
That is, $\opL(a)_\omega = \phi_\omega(a_{\theta^{-1}(\omega)})$. 
Let $\opLdag$ denote $\glop{\theta^{-1}}{\phi}^\dagger = \glop{\theta}{\koopman{\theta}(\phi^\dagger)}$.
In \cite{Ekblad2024ReducibilityProcesses}, the theory of the maps $\opL$ and $\opLdag$ was studied and developed, where the main result was the characterization (Theorem \ref{Thmx:irreducibility}) given in the introduction. 
Let us rephrase that characterization (and state the stronger version of the theorem proved in \cite{Ekblad2024ReducibilityProcesses}) in terms of $\opL$ now. 
Let $\FixL$ denote the set of random matrices $a\in\rmatrices$ such that $\opL(a) = a$. 
\begin{thmx}[\texorpdfstring{\cite{Ekblad2024ReducibilityProcesses}}{l}]
\label{Thmx:Irreducibility_classification_operator}
    The following are equivalent. 
    \begin{enumerate}[label = (\alph*)]
        \item $\opL$ is irreducible. 

        \item $\FixL$ is a one-dimensional subspace of $\rmatrices$. 

        \item There exists a unique $\varrho\in\rstates$ such that $\opL(\varrho) = \varrho$, and this $\varrho$ satisfies $\varrho>0$.
    \end{enumerate}
\end{thmx}
In this work, we are concerned with the situation where $\opL$ is irreducible, and the central computational object is the unique state $\varrho\in\rstates$ of the above theorem. 
This state is precisely the unique steady state of $\eqp$ described in the introduction, and we make frequent reference to it, so let us name it accordingly. 
\begin{definition}[Unique steady state of \texorpdfstring{$\opL$}{l}]
    For irreducible $\opL$, we call the unique state $\varrho\in\rstates$ of Theorem \ref{Thmx:Irreducibility_classification_operator} the \textit{unique steady state} of $\opL$. 
\end{definition}
We now turn towards studying the spectral theory of $\opL$ and $\opLdag$. 
\subsection{Basic properties of eigenspectrum}
Let us set notation and collect preliminary facts. 
Our basic goal here is to establish some of the basic group-theoretic properties of the eigenspectrum of $\eqp$ discussed in the introduction.
It is often necessary to specify which closed subspace of $\rmatrices$ we are viewing $\opL$ and $\opLdag$ as acting on, especially when discussing the spectrum.
Let $\opL_1\in\linears{L^1\seq{\Omega; \matrices}, \rmatrices}$ denote the map $\opL\vert_{L^1\seq{\Omega; \matrices}}$, and write $\opLdag_1$ to denote the restriction of $\opLdag$ to $\Linfty{\Omega}{\matrices}$.
\begin{lemma}\label{Lem:Operator_norms_of_opL_and_adjoint_formula}
The map $\opL_1$ defines an element of $\bops{L^1\!\seq{\Omega; \matrices}}$ satisfying $\banachnorm{\opL_1}{L^1\!\seq{\Omega; \matrices}} = 1$, and, under the duality relation $\dual{L^1\seq{\Omega; \matrices}} = \Linfty{\Omega}{\matrices}$, we have $\opLdag_1 = \dual{\opL_1}$.
%
% %
% %
\end{lemma}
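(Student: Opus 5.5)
The plan has three parts: show that $\opL_1$ maps $\Lone{\Omega}{\matrices}$ into itself with norm at most one, show the norm is exactly one, and then verify the adjoint formula through a change of variables.

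\textbf{Step 1: contractivity.} Fix $a\in\Lone{\Omega}{\matrices}$. Measurability of $\opL(a)$ follows because $\omega\mapsto\phi_\omega$ and $\omega\mapsto a_{\theta^{-1}(\omega)}$ are measurable, and evaluation of a linear map is continuous. For the bound, use that every quantum channel is a trace-norm contraction: $\tracenorm{\phi_\omega(x)}\leq\tracenorm{x}$ for all $x\in\matrices$. This is the dual statement to $\infnorm{\phi^\dagger_\omega(y)}\leq\infnorm{y}$, which holds because $\phi_\omega^\dagger$ is unital and positive. Then
\[
\avg{\tracenorm{\opL(a)}}=\int_\Omega\tracenorm{\phi_\omega(a_{\theta^{-1}(\omega)})}\,\dee\mu(\omega)\leq\int_\Omega\tracenorm{a_{\theta^{-1}(\omega)}}\,\dee\mu(\omega)=\avg{\tracenorm{a}},
\]
where the last equality uses that $\theta^{-1}$ preserves $\mu$. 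Hence $\opL_1\in\bops{\Lone{\Omega}{\matrices}}$ with norm at most $1$.

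\textbf{Step 2: the norm equals one.} Take any positive $a$ with $\avg{\tr{a}}=1$, for instance the constant $a=\mbI/d$. Since each $\phi_\omega$ is trace preserving and maps positives to positives, $\opL(a)$ is positive and
\[
\avg{\tracenorm{\opL(a)}}=\avg{\tr{\phi_\omega(a_{\theta^{-1}(\omega)})}}=\avg{\tr{a\circ\theta^{-1}}}=1.
\]
So the norm is attained and equals $1$.

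\textbf{Step 3: the adjoint formula.} The pairing is $\langle b,a\rangle=\avg{\tr{b^*a}}$, or its bilinear version; the computation is the same either way. First, $\opLdag$ maps $\Linfty{\Omega}{\matrices}$ to itself, since $\infnorm{\phi^\dagger_{\theta(\omega)}(b_{\theta(\omega)})}\leq\infnorm{b_{\theta(\omega)}}$. Now compute
\[
\avg{\tr{b_\omega^*\,\phi_\omega(a_{\theta^{-1}(\omega)})}}=\avg{\tr{\phi^\dagger_\omega(b_\omega)^*\,a_{\theta^{-1}(\omega)}}},
\]
using the definition of $\phi_\omega^\dagger$ and the fact that $\phi_\omega^\dagger$ is a $*$-map. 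Substituting $\omega\mapsto\theta(\omega)$, which preserves $\mu$, turns the right-hand side into
\[
\avg{\tr{\phi^\dagger_{\theta(\omega)}(b_{\theta(\omega)})^*\,a_\omega}}=\avg{\tr{\opLdag(b)_\omega^*\,a_\omega}}.
\]
All integrals are absolutely convergent, because $|\tr{x^*y}|\leq\infnorm{x}\tracenorm{y}$ pointwise. Therefore $\dual{\opL_1}(b)=\opLdag(b)$ for every $b\in\Linfty{\Omega}{\matrices}$, which is exactly $\opLdag_1=\dual{\opL_1}$.

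\textbf{Main obstacle.} The only delicate point is bookkeeping. One has to match the formula $\opLdag=\glop{\theta}{\koopman{\theta}(\phi^\dagger)}$ with the change of variables, so that the index lands on $\theta(\omega)$ rather than $\theta^{-1}(\omega)$. One also has to handle the conjugation in the sesquilinear pairing carefully, which is where the $*$-map property of $\phi^\dagger_\omega$ enters.
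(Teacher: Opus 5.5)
Your proposal is correct and takes essentially the same route as the paper, which simply invokes trace preservation together with the Russo--Dye theorem to get $\banachnorm{\opL_1}{L^1(\Omega;\matrices)}=1$ and calls the identity $\opLdag_1=\dual{\opL_1}$ immediate. You supply the details the paper omits: trace-norm contractivity by duality, attainment of the norm at $\mbI/d$, and the measure-preserving change of variables $\omega\mapsto\theta(\omega)$. The only cosmetic point is that the $*$-map property of $\phi^\dagger_\omega$ is needed for the bilinear pairing, not the sesquilinear one.
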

\begin{proof}
    That $\banachnorm{\opL_1}{L^1(\Omega; \matrices)} = 1$ follows immediately from the fact that $\phi$ is almost surely trace-preserving taken with the Russo-Dye theorem \cite[Theorem 3.39]{Watrous2018TheInformation}, and it is immediate that  $\opLdag_1 = \dual{\opL_1}$.
\end{proof}
Now, let $\mbT = \set{\lambda\in\mbC\,\,:\,\, |\lambda| = 1}$. 
From this lemma and Theorem \ref{Thmx:Irreducibility_classification_operator}, we know that $\opL_1$ always has a fixed point, so in particular $\specrad{\opL} = 1$ hence $\perspec{\opL_1}\subseteq\mbT$. 
We write $\Lambda_\opL$ to denote $\perspec{\opL_1}\cap\eigspec{\opL_1}$, we write $\Lambda_{\opL}^\dagger$ to denote $\perspec{\opLdag_1}\cap\eigspec{\opLdag_1}$, and we write $\Lambda_\theta$ to denote $\eigspec{\koopman{\theta}\vert_{L^1(\Omega)}}$. 
The next lemma shows that we need not worry about the domain of $\opL$ and $\opLdag$ when considering eigenspectrum, i.e., the eigenspectrum of $\opL$ and $\opLdag$ when restricted to $\Linfty{\Omega}{\matrices}$ (or $\Lone{\Omega}{\matrices}$) coincides with the eigenspectrum of $\opL$ and $\opLdag$ as operators on $\rmatrices$.
\begin{lemma}
    For any $\alpha\in\mbC$ and $x\in\rmatrices$, if either $\opL(x) = \alpha x$ or $\opLdag(x) = \alpha x$, then $x\in\Linfty{\Omega}{\matrices}$. 
\end{lemma}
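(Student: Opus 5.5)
The plan is to get a pointwise norm inequality from the eigen-equation and then apply Lemma \ref{Lem:Increasing_implies_constant} to the resulting norm function. This only works when $|\alpha|\geq 1$, and that restriction is needed: for $\alpha=0$ the claim is false as literally stated. Take $\phi_\omega(a)=\tr{a}\rho$ for a fixed state $\rho$, and $x_\omega=f(\omega)(e_{11}-e_{22})$ with $f$ unbounded; then $\opL(x)=0$ but $x$ is unbounded. So I will prove the lemma for $|\alpha|\geq 1$, which contains the peripheral case $|\alpha|=1$ used afterwards.

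\emph{Case $\opLdag(x)=\alpha x$.} Here $\opLdag(x)_\omega=\phi^\dagger_{\theta(\omega)}(x_{\theta(\omega)})$. Each $\phi^\dagger_{\theta(\omega)}$ is completely positive and unital, so by the Russo--Dye theorem it is a contraction for $\infnorm{\cdot}$. Setting $g(\omega)=\infnorm{x_\omega}$, which is measurable and finite, the eigen-equation gives
\begin{equation*}
|\alpha|\, g(\omega)=\infnorm{\phi^\dagger_{\theta(\omega)}(x_{\theta(\omega)})}\leq g(\theta(\omega))
\end{equation*}
almost surely. Since $|\alpha|\geq1$, this yields $g\leq \koopman{\theta}(g)$. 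Then $-g$ satisfies $\koopman{\theta}(-g)\leq -g$, and Lemma \ref{Lem:Increasing_implies_constant} shows that $g$ is almost surely constant. Hence $\linfnorm{x}<\infty$.

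\emph{Case $\opL(x)=\alpha x$.} The argument is the same with the trace norm and $\theta^{-1}$. Quantum channels are trace-norm contractions. Setting $h(\omega)=\tracenorm{x_\omega}$, the relation $\phi_\omega(x_{\theta^{-1}(\omega)})=\alpha x_\omega$ gives
\begin{equation*}
|\alpha|\,h(\omega)\leq h(\theta^{-1}(\omega)),
\end{equation*}
so $h\leq \koopman{\theta^{-1}}(h)$. By Proposition \ref{Prop:Basics_of_ergodicity}, $\theta^{-1}$ is an ergodic invertible mpt. Applying Lemma \ref{Lem:Increasing_implies_constant} to $-h$ with $\theta^{-1}$ in place of $\theta$ shows that $h$ is almost surely constant. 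Since $\infnorm{\cdot}\leq\tracenorm{\cdot}$ on $\matrices$, we get $x\in\Linfty{\Omega}{\matrices}$.

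The only real point to check is that Lemma \ref{Lem:Increasing_implies_constant} applies. It needs a real-valued measurable function, which we have because $x$ takes values in $\matrices$. It needs no integrability assumption, which matters since a priori we only know $x\in\rmatrices$. And it needs the monotone inequality, which is exactly what $|\alpha|\geq1$ provides; this is the step that fails for $|\alpha|<1$, as the counterexample above shows.
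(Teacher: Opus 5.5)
Your argument is correct for $|\alpha|\geq 1$, and your $\alpha=0$ example is a genuine counterexample (for $d\geq 2$ and non-atomic $\Omega$). So the lemma as stated does need a restriction of this kind. The paper's own proof tacitly uses $|\alpha|=1$ in the $\opLdag$ case, where it writes $\infnorm{x}=\infnorm{\opLdag(x)}$; its preliminary Poincar\'e-recurrence step only gives $|\alpha|\leq 1$, which does not justify that equality. Your example only witnesses $\alpha=0$, but failures with $0<|\alpha|<1$ can also be constructed, so $|\alpha|\geq1$ is the right hypothesis.

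In the $\opLdag$ case your argument coincides with the paper's. In the $\opL$ case it differs, and yours is the one that works. The paper shows that $|\tr{x}|$ is a.s. constant and then invokes a bound $\infnorm{\cdot}\leq D|\tr{\cdot}|$. That bound is false for traceless matrices such as $e_{11}-e_{22}$.

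It fails exactly in the cases of interest. For irreducible $\opL$ and peripheral $\alpha\notin\Lambda_\theta$, the eigenvectors are multiples of $u\varrho$ with $\tr{u\varrho}=0$ (Lemma~\ref{Lem:The unitary representation is faithful}). So the constant is $0$ and no bound follows. Running the monotonicity argument on the genuine norm $\tracenorm{x_\omega}$, which channels contract, repairs this. It also repairs the analogous $|\tr{x}|$ step used to get $x\in\Lone{\Omega}{\matrices}$ in Proposition~\ref{Prop:Polar_decomp}(a), since $\Linfty{\Omega}{\matrices}\subset\Lone{\Omega}{\matrices}$.

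The only extra information in the paper's approach is that no nonzero eigenvector has $|\alpha|>1$. Your argument gives that at no cost: if $|\alpha|>1$, then $h$ is a.s. constant and $|\alpha|h\leq h$ forces $h=0$.
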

\begin{proof}
    First, we show that any such $\alpha$ satisfies $|\alpha|\leq 1$.
    For contradiction, suppose that $x\in\rmatrices$ is nonzero and satisfies $\opL(x) = \alpha x$ almost surely with $|\alpha| > 1$. 
    By the Russo-Dye theorem \cite[Theorem 3.39]{Watrous2018TheInformation}, we know that $\onenorm{\phi_\omega} = 1$ almost surely, as $\phi$ is trace preserving almost surely. 
    Therefore, from $\opL(x) = \alpha x$, we discover that 
    \begin{equation}\label{Eqn:alpha_leq_1}
        |\alpha|\onenorm{x_{\theta(\omega)}}
        =
        \onenorm{\phi_{\theta(\omega)}(x_{\omega})}
        \leq 
        \onenorm{x_\omega}
    \end{equation}
    holds almost surely. 
    Now let $\varepsilon>0$ be such that $\prob{\onenorm{x} \geq \varepsilon} > 0$.
    Then for almost every $\omega$ for which $\onenorm{x_\omega}\geq \varepsilon$, Poincaré recurrence implies that there exists a strictly increasing sequence $\seq{n_k}_{k\in\mbN}\subset\mbN$ with $\varepsilon
        \leq \onenorm{x_{\theta^{n_k}(\omega)}}$ for all $k$. 
    But from (\ref{Eqn:alpha_leq_1}), we know that 
    \begin{equation}
        \varepsilon
        \leq 
        \onenorm{x_{\theta^{n_k}(\omega)}}
        \leq 
        |\alpha|^{-n_k}
        \onenorm{x_\omega}
    \end{equation}
    for almost every $\omega$ with $\onenorm{x_\omega}\geq \varepsilon$ and all $k$. 
    Since $n_k\to\infty$ as $k\to\infty$, however, this implies $\varepsilon = 0$, a contradiction. 
    Therefore, $|\alpha|\leq 1$. 
    Arguing similarly in the case that $x\in\rmatrices$ satisfies $\opLdag(x) = \alpha x$, we may apply the Russo-Dye theorem once more to find that 
    \begin{equation}\label{Eqn:alpha_leq_1_eqn_2}
        |\alpha|\infnorm{x_{\omega}}
        =
        \infnorm{\phi_{\theta(\omega)}^\dagger(x_{\theta(\omega)})}
        \leq 
        \infnorm{x_{\theta(\omega)}}
    \end{equation}
    holds almost surely, so we conclude once more from this that $|\alpha|\leq 1$, as claimed. 
    Now, if $\opL(x) = \alpha x$, then since $\phi$ is trace preserving almost surely, we know that $|\tr{\koopman{\theta^{-1}}(x)}| = |\alpha| |\tr{x}|\leq |\tr{x}|$. 
    Thus, by Lemma \ref{Lem:Increasing_implies_constant}, we conclude that $|\tr{x}| = C$ almost surely for some deterministic $C\in (0, \infty)$. 
    So, because there is a universal constant $D\in (0, \infty)$ such that $D|\tr{\cdot}|\geq \infnorm{\cdot}$, we conclude that $\infnorm{x}\leq CD$, i.e., $x\in\Linfty{\Omega}{\matrices}$.
    On the other hand, if $\opLdag(x) = \alpha x$, then $\infnorm{x} = \infnorm{\opLdag(x)}\leq \infnorm{\koopman{\theta}(x)}$, so, by Lemma \ref{Lem:Increasing_implies_constant}, ergodicity of $\theta$ implies there is a deterministic constant $C$ with $\infnorm{x} = C$ almost surely, which immediately implies $x\in\Linfty{\Omega}{\matrices}$.
\end{proof}
\begin{lemma}\label{Lem:Eigenspectrum_koopman_in_opL}
    $\Lambda_\theta\subseteq \Lambda_\opL\cap\Lambda_\opL^\dagger$. 
\end{lemma}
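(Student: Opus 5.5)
Given $\beta\in\Lambda_\theta$, the plan is to multiply a known fixed point of $\opL$ (resp. $\opLdag$) by the eigenfunction of the Koopman operator corresponding to $\beta$. Fix $\beta\in\Lambda_\theta$ and a nonzero $f\in L^1(\Omega)$ with $\koopman{\theta}(f) = \beta f$ almost surely.

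\textbf{Step 1: normalize $f$.} Since $\theta$ preserves $\mu$ and $f\in L^1(\Omega)$, we have $|\beta|\,\avg{|f|} = \avg{|f\circ\theta|} = \avg{|f|}$, so $|\beta| = 1$. Then $|f|\circ\theta = |f|$, and ergodicity (Proposition \ref{Prop:Basics_of_ergodicity}) gives that $|f|$ is a nonzero constant. After rescaling we may take $|f| = 1$ almost surely, so $f$, $\bar f$, and all products below are bounded. Iterating, $f\circ\theta^{-1} = \beta^{-1}f = \bar\beta f$.

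\textbf{Step 2: the eigenvector for $\opL^\dagger$.} Take $x = f\,\mbI$. Since $\phi^\dagger_\omega$ is unital and linear, $\opLdag(x)_\omega = \phi^\dagger_{\theta(\omega)}(f(\theta(\omega))\mbI) = \beta f(\omega)\mbI$. Hence $\beta\in\eigspec{\opLdag_1}$, with eigenvector $x\in\Linfty{\Omega}{\matrices}$.

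\textbf{Step 3: the eigenvector for $\opL$.} Here we need a nonzero fixed point of $\opL$ in $L^1$. If $\opL$ is irreducible, Theorem \ref{Thmx:Irreducibility_classification_operator} provides $\varrho\in\rstates$; in general the existence of a fixed state of $\opL_1$ is what the paper asserts after Lemma \ref{Lem:Operator_norms_of_opL_and_adjoint_formula}, and I would cite that. Set $y = \bar f\varrho$. Then $\opL(y)_\omega = \phi_\omega(\bar f(\theta^{-1}\omega)\varrho_{\theta^{-1}\omega}) = \overline{\bar\beta f(\omega)}\,\varrho_\omega = \beta\,\bar f(\omega)\varrho_\omega$. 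Thus $\opL(y) = \beta y$, with $y\in L^1$ since $|\bar f| = 1$ and $\tr{\varrho} = 1$.

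\textbf{Step 4: conclude.} Since $|\beta| = 1 = \specrad{\opL_1}$, and $\specrad{\opLdag_1} = 1$ by duality with $\opL_1$ (Lemma \ref{Lem:Operator_norms_of_opL_and_adjoint_formula}), $\beta$ lies in the peripheral spectrum of both operators. Therefore $\beta\in\Lambda_\opL\cap\Lambda_\opL^\dagger$.

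\textbf{Main obstacle.} The only delicate point is keeping the conjugation and the direction of $\theta$ straight in Step 3: one must use $\bar f$ rather than $f$, because $\opL$ composes with $\theta^{-1}$. A secondary point is justifying a fixed state when irreducibility is not assumed. For that I would cite the standard Markov–Kakutani or Cesàro-averaging argument in $L^1$, or simply restrict to the standing irreducible setting.
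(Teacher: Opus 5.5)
Your proposal is correct and takes essentially the paper's route: $f\mbI$ is the eigenvector for $\opLdag$, and $\bar f\varrho = f^{-1}\varrho$ is the eigenvector for $\opL$, built from the steady state (so irreducibility is used implicitly, exactly as in the paper). The differences are minor. You get $|\beta| = 1$ by integrating $|f\circ\theta| = |\beta|\,|f|$ rather than via the pointwise ergodic theorem. You also compute $\opL(\bar f\varrho) = \beta\bar f\varrho$ directly, instead of the paper's detour of replacing $\alpha$ by $\bar\alpha$.
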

\begin{proof}
    For any nonzero $f\in L^1(\Omega)$ with $\koopman{\theta}(f) = \alpha f$, we have that $|\alpha||f| = |\koopman{\theta}(f)|$, hence $|\alpha|^n|f| = |\koopman{\theta}^n(f)|$. 
    Therefore, by the pointwise ergodic theorem, the almost sure equality
    \begin{equation}
        \avg{|f|} = \lim_{N}
        \frac{1}{N}
        \sum_{n=1}^N 
        |\koopman{\theta}^n(f)|
        =
        |f|
        \lim_{N}
        \frac{1}{N}
        \sum_{n=1}^N 
        |\alpha|^n
    \end{equation}
    implies that $|\alpha| = 1$. 
    In particular, $|f| = |\koopman{\theta}(f)|$, so since $\theta$ is ergodic, $|f| = \avg{|f|}$ almost surely. 
    Because $f$ is nonzero, we may assume without loss of generality that $|f| = 1$ almost surely. 
    Now, it is clear that 
    \begin{equation}
        \opLdag(f\mbI)
        =
        \alpha f \mbI,
    \end{equation}
    so $\Lambda_\theta\subseteq\Lambda_\opL^\dagger$ already. 
    On the other hand, notice that because $\theta$ is invertible, $\koopman{\theta}(f) = \alpha f$ is equivalent to $\overline{\alpha} f = \koopman{\theta^{-1}}(f)$. 
    So, if $\varrho$ is the unique steady state of $\opL$, then we see that 
    \begin{equation}
        \opL(f^{-1}\varrho)
        =
        \overline{\alpha}f\varrho.
    \end{equation}
    But this held for arbitrary $\alpha\in\Lambda_\theta$, so in particular if we notice that $\koopman{\theta}(f^{-1}) = \overline{\alpha}f^{-1}$, then replacing $\alpha$ with $\overline{\alpha}$ in the above, we discover that $\alpha\in\Lambda_\opL$, ending the proof. 
\end{proof}
One of the main attributes of complete positivity is that completely positive and unital maps satisfy the Schwarz inequality. 
Because $\phi^\dagger$ is completely positive and unital almost surely, $\opLdag$ satisfies the Schwarz inequality 
\begin{equation}
    \opLdag(a^*a)
        \geq
    \opLdag(a)^*\opLdag(a)
\end{equation}
for all $a\in \rmatrices$. 
This has various implications for the spectral theory of $\opL$ and $\opLdag$, as we shall now see. 
Most pertinent to this is the general work of Groh on unital Schwarz maps \cite{Groh1981TheC-Algebras, Groh1983OnC-Algebras, Groh1984UniformW-Algebras, Groh1984UniformlyC-algebras}, which, although too general to give many specific results for $\opLdag$, nevertheless gives us a basic framework to start from.
Let us collect these facts and some easy corollaries as they apply to $\opLdag$ in a single proposition.
\begin{prop}\label{Prop:Polar_decomp}
    Assume $\opL$ is irreducible with unique steady state $\varrho$. 
    Let $\alpha\in\mbT$.
    \begin{enumerate}[label = (\alph*)]
        \item $\operatorname{dim}\set{x\in\rmatrices\,\,:\,\, \opL(x) = \alpha x} \leq 1$.
        In particular, all $\alpha\in\Lambda_\opL$ are simple eigenvalues. 
        
        \item For any $x\in\rmatrices$ with $\opL(x) = \alpha x$, there is a unique constant $\lambda\in\mbC$ and a unique unitary $u\in\rmatrices$ such that $x = \lambda u \varrho$.
    \end{enumerate}
\end{prop}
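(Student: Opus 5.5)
We would work mostly on the dual side, where $\opLdag$ is a unital Schwarz map, and then transfer the result to $\opL$ using the steady state $\varrho$. Throughout, by the preceding lemma any eigenvector of $\opL$ or $\opLdag$ lies in $\Linfty{\Omega}{\matrices}$, so the trace $\Linftrace{\cdot}$ pairings below are finite.

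\emph{Step 1: peripheral eigenvectors of $\opLdag$ are multiples of unitaries.}
\begin{itemize}
\item Since $\opL(\varrho)=\varrho$, the faithful normal state $\varphi(a):=\Linftrace{\varrho a}$ on $\Linfty{\Omega}{\matrices}$ is $\opLdag$-invariant. It is faithful because $\varrho>0$ almost surely.
\item Let $\opLdag(y)=\beta y$ with $|\beta|=1$. The Schwarz inequality gives $\opLdag(y^*y)\ge \opLdag(y)^*\opLdag(y)=y^*y$.
\item Applying $\varphi$ to both sides yields equality of expectations. Faithfulness then forces $\opLdag(y^*y)=y^*y$.
\item A positive fixed point $z$ of $\opLdag$ is a multiple of $\mbI$ under irreducibility. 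For its spectral projections one has $\opLdag(z)\le \|z\|_\infty$, and a spectral-projection argument together with Proposition \ref{Prop:Characterization_of_reducing_proj} (using $p\leftrightarrow \mbI-p$ duality) shows each spectral projection reduces $\opL$.
\item Hence $y^*y=c\mbI$. By the same argument applied to $yy^*$, we get $y=\lambda u$ with $u$ a random unitary.
\item Equality in the Schwarz inequality also puts $u$ in the multiplicative domain of $\opLdag$, so $\opLdag(u^*a)=\opLdag(u)^*\opLdag(a)$ for all $a$.
\end{itemize}

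\emph{Step 2: from $\opLdag$-eigenvectors to $\opL$-eigenvectors and back.} Given such a unitary $u$ with $\opLdag(u)=\beta u$, we would verify $\opL(u^*\varrho)=\overline{\beta}\,u^*\varrho$ (or the analogous identity with appropriate placement). We test against arbitrary $a\in\Linfty{\Omega}{\matrices}$:
\[
\Linftrace{\opL(u^*\varrho)a}=\Linftrace{\varrho\,\opLdag(a u^*)}=\Linftrace{\varrho\,\opLdag(a)\,\overline{\beta}u^*}.
\]
The last equality uses the multiplicative domain property. We would then use $\opLdag$-invariance of $\varphi$ and the multiplicative domain again to move $u^*$ back, giving $\overline{\beta}\Linftrace{u^*\varrho a}$.

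For the converse direction, start from $\opL(x)=\alpha x$ with $x\neq0$. As in the proof of the preceding lemma, $\onenorm{x_{\theta(\omega)}}=\onenorm{\phi_{\theta(\omega)}(x_\omega)}\le\onenorm{x_\omega}$, and Lemma \ref{Lem:Increasing_implies_constant} forces $\onenorm{x}$ to be constant with equality almost surely. Write the polar decomposition $x=w|x|$ and extend $w$ measurably to a unitary. The equality $\tr{w_{\theta(\omega)}^*\phi_{\theta(\omega)}(x_\omega)}=\alpha\onenorm{x_\omega}$ gives $\tr{(\phi^\dagger_{\theta(\omega)}(w^*_{\theta(\omega)})-\alpha w_\omega^*)x_\omega}=0$. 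Since $\|\phi^\dagger(w^*)\|_\infty\le1$, this should force $\phi^\dagger_{\theta(\omega)}(w_{\theta(\omega)}^*)$ to agree with $\alpha w_\omega^*$ on the support of $|x_\omega|$.

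\emph{Main obstacle.} This last step is where we expect the difficulty. Unless $x$ has full support, it gives only a compression identity. To complete it we would first show $\opL(|x|)=|x|$, for instance by an equality-case argument in $\tracenorm{\phi(x)}\le\tr{\phi(|x|)}$. Then $|x|=c\varrho$ by Theorem \ref{Thmx:Irreducibility_classification_operator}, so $|x|$ has full support. The compression identity then becomes a genuine eigen-equation $\opLdag(w^*)=\alpha w^*$ with $w$ unitary, and $x=c\,w\varrho$. This gives (b), with uniqueness because $\varrho$ is invertible: fixing $\lambda\ge0$ (equivalently, absorbing the phase into $u$) gives uniqueness, since $u=\lambda^{-1}x\varrho^{-1}$.

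\emph{Step 3: proof of (a).} If $x_1=\lambda_1u_1\varrho$ and $x_2=\lambda_2u_2\varrho$ are two $\alpha$-eigenvectors, then $u_1^*$ and $u_2^*$ are $\opLdag$-eigenvectors in the multiplicative domain. Hence $u_1u_2^*$ satisfies $\opLdag(u_1u_2^*)=u_1u_2^*$, with the phases cancelling. By Step 1 (positive/fixed points of $\opLdag$ are scalar under irreducibility, and fixed points of a Schwarz map with faithful invariant state form an algebra spanned by their positive parts) $u_1u_2^*$ is a constant. Therefore $x_1$ and $x_2$ are proportional, which is (a).
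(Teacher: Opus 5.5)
Your overall architecture works, but the step everything rests on is left unproved. That step is $\opL(|x|)=|x|$. It gives $|x|=c\varrho$ via Theorem~\ref{Thmx:Irreducibility_classification_operator}, hence full support, hence the genuine eigen-equation $\opLdag(w^*)=\alpha w^*$. Your proof of (a) in Step~3 also runs through (b), so both parts depend on it. The hint you give does not close it as stated. Equality $\tracenorm{\phi(y)}=\tr{\phi(|y|)}$ would force $|\phi(y)|=\phi(|y|)$ by the usual ``positive operator with zero trace'' argument only if you had the operator inequality $|\phi(y)|\le\phi(|y|)$, and that inequality fails for channels. For the diagonal pinching $\phi$ on $2\times 2$ matrices and $y=\ketbra{e_1}{e_1}+\ketbra{e_1}{e_2}$, one has $|\phi(y)|=\ketbra{e_1}{e_1}\not\le 2^{-1/2}\mbI=\phi(|y|)$. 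The paper does not prove this step by hand either: it imports $\opL(|x|)=|x|$ from Groh's Proposition~2.3(a), and it gets (a) directly from Groh's Theorem~2.2.

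The missing idea is $2$-positivity. For $y\in\matrices$ one has $\left[\begin{array}{cc}|y^*|&y\\ y^*&|y|\end{array}\right]\ge0$. Applying $\phi\otimes\mathrm{id}$ gives $\phi(y)=A^{1/2}KB^{1/2}$ with $A=\phi(|y^*|)$, $B=\phi(|y|)$ and $\infnorm{K}\le 1$. Write $\phi(y)=v|\phi(y)|$ with $v$ unitary, and use $\tr{A}=\tr{B}=\tracenorm{y}$:
\[
\tracenorm{\phi(y)}=\innerHS{K^*A^{1/2}v}{B^{1/2}}\le\hsnorm{K^*A^{1/2}v}\,\hsnorm{B^{1/2}}\le(\tr{A})^{1/2}(\tr{B})^{1/2}=\tracenorm{y}.
\]
If equality holds, then $K^*A^{1/2}v=B^{1/2}$ (equal norms and a nonnegative inner product). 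Hence $\phi(y)=vB$ and $|\phi(y)|=B=\phi(|y|)$. Take $y=x_\omega$ and $\phi=\phi_{\theta(\omega)}$, and combine this with your correct a.s.\ equality $\tracenorm{x_{\theta(\omega)}}=\tracenorm{x_\omega}$. This yields $\opL(|x|)=|x|$, and after that the rest of your plan goes through.

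Once repaired, your route to (a) is a legitimate self-contained substitute for Groh's Theorem~2.2. That route passes through dual unitary eigenvectors in the multiplicative domain, so that $u_1u_2^*\in\operatorname{Fix}(\opLdag)=\mbC\mbI$. It is essentially the content of Lemma~\ref{Lem:Eigenmat_of_opLdag_in_multdom} and Proposition~\ref{Prop:Fix_opLdag_is_CI}, which do not use the present proposition, so there is no circularity. Your normalization $\lambda\ge0$ is genuinely needed for the uniqueness claim in (b).

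Two minor slips:
\begin{itemize}
\item The first half of Step~2 is unused and has a conjugation error. From $\opLdag(u)=\beta u$ one gets $\opL(u\varrho)=\overline{\beta}u\varrho$, equivalently $\opL(u^*\varrho)=\beta u^*\varrho$. The first equality in your display should read $\Linftrace{u^*\varrho\,\opLdag(a)}$.
\item In Step~1, the spectral projections of a positive fixed point are fixed only because fixed points lie in the multiplicative domain. Simpler is the paper's observation that $z\in\operatorname{Fix}(\opLdag)$ implies $\opL(z\varrho)=z\varrho$.
\end{itemize}
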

\begin{proof}
    The main thing to note here is that $\opL_1$ is the preadjoint of the unital Schwarz map $\opLdag_1$ on the von Neumann algebra $\Linfty{\Omega}{\matrices}$, which is precisely the scenario in which the results of Groh apply. 
    With this in mind, let us prove (a).
    First, notice that if $\opL(x) = \alpha x$, then because $\phi$ is trace preserving almost surely, we have that $|\tr{\koopman{\theta^{-1}}(x)}| = |\tr{x}|$ almost surely. 
    So, since $\theta^{-1}$ is ergodic, this implies $|\tr{x}| = C$ for some constant $C<\infty$. 
    Thus, $x\in\Lone{\Omega}{\matrices}$. 
    Therefore, $\set{x\in\rmatrices\,\,:\,\, \opL(x) = \alpha x}\subset \Lone{\Omega}{\matrices}$, so (a) follows directly from \cite[Theorem 2.2]{Groh1984UniformlyC-algebras} upon noting that, since $\opL$ is irreducible, Theorem \ref{Thmx:Irreducibility_classification_operator} implies that $\dim\ker\seq{1 - \opL_1} = 1$. 
    Next, to see (b), from \cite[Proposition 2.3 (a)]{Groh1984UniformW-Algebras} we have that $\opL(|x|) = |x|$ for any $x$ as in (b).
    Therefore, by irreducibility, $|x| = \lambda \varrho$ for some $\lambda\in\mbC$. 
    Because $\varrho>0$, we know that $x$ is almost surely invertible. 
    In particular, the polar decomposition $x = u\varrho$ defines a unique unitary $u\in\rmatrices$, as desired. 
\end{proof}
\begin{remark}
    The spectral theory of a certain sort of unital Schwarz maps on von Neumann algebras satisfying a certain sort of irreducibility was studied extensively by Groh \cite{Groh1981TheC-Algebras, Groh1983OnC-Algebras, Groh1984UniformW-Algebras, Groh1984UniformlyC-algebras}. 
    However, the sort of irreducibility considered there is far too restrictive for our purposes: indeed, the spectral results for irreducible maps considered by Groh apply to $\opL^\dagger$ only if $\theta:\Omega\to\Omega$ is a cyclic permutation of a finite set. 
    See Appendix \ref{App:Groh irred} for more details.
\end{remark}
Next, we focus our attention on $\opLdag$. 
As mentioned above, $\opLdag_1$ is a unital Schwarz map. 
A basic object associated to such a map is the \textit{multiplicative domain}.
Recall that the multiplicative domain $\mcM_\psi$ of a linear map $\psi:\mcA\to\mcA$ of a $C^*$-algebra $\mcA$ is the set 
\begin{equation}
    \mcM_{\psi}
        :=
    \set{a\in\mcA\,\,:\,\,\psi(ab) = \psi(a)\psi(b)\text{ and }\psi(ba) = \psi(b)\psi(a)\text{ for all }b\in\mcA}.
\end{equation}
Under the assumption that $\psi$ is a unital Schwarz map, the well-known theorem of Choi \cite[Theorem 3.1]{Choi1974AC-Algebras} gives that 
\begin{equation}
    \multdom{\psi}
        =
    \set{a\in\scrA\,\,:\,\,
    \psi(a^*a) = \psi(a)^*\psi(a)
    \text{ and }
    \psi(aa^*) = \psi(a)\psi(a)^*}.
\end{equation}
This fact will prove useful at multiple points during the rest of this paper. 
We shall apply this theorem to the unital Schwarz maps $\opLdag:\Linfty{\Omega}{\matrices}\to\Linfty{\Omega}{\matrices}$ and $\phi_\omega^\dagger:\matrices\to\matrices$, and in particular we will make note of the following fact that is clear from the definitions.
\begin{lemma}\label{Lem:Mult_dom_direct_integral}
    $\multdom{\opLdag} = \set{x\in\rmatrices\,\,:\,\, x_\omega\in\multdom{\phi^\dagger_\omega}\text{ for a.e. $\omega$}}$.
\end{lemma}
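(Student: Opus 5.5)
The plan is to reduce both inclusions to pointwise statements about the matrix maps $\phi^\dagger_\omega$, using Choi's characterization of the multiplicative domain of a unital Schwarz map together with the fact that $\theta$ is an invertible measure preserving transformation. Recall that $\opLdag(a)_\omega = \phi^\dagger_{\theta(\omega)}\seq{a_{\theta(\omega)}}$. The first observation concerns the set $S := \set{x\in\rmatrices\,\,:\,\, x_\omega\in\multdom{\phi^\dagger_\omega}\text{ for a.e. }\omega}$. Since $\theta$ is an invertible mpt, the event $\set{\omega : x_\omega\in\multdom{\phi^\dagger_\omega}}$ has full measure if and only if its preimage $\set{\omega : x_{\theta(\omega)}\in\multdom{\phi^\dagger_{\theta(\omega)}}}$ does. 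Measurability of this event follows from Choi's description, which expresses membership as the vanishing of two measurable matrix-valued functions. Hence $x\in S$ if and only if $x_{\theta(\omega)}\in\multdom{\phi^\dagger_{\theta(\omega)}}$ for a.e. $\omega$, and I will work with this shifted form throughout.

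\emph{Inclusion $S\subseteq\multdom{\opLdag}$.} Take $x\in S$ and any $b$. For a.e. $\omega$ we have
\begin{equation*}
\opLdag(xb)_\omega = \phi^\dagger_{\theta(\omega)}\seq{x_{\theta(\omega)}b_{\theta(\omega)}} = \phi^\dagger_{\theta(\omega)}\seq{x_{\theta(\omega)}}\phi^\dagger_{\theta(\omega)}\seq{b_{\theta(\omega)}} = \seq{\opLdag(x)\opLdag(b)}_\omega ,
\end{equation*}
where the middle equality uses $x_{\theta(\omega)}\in\multdom{\phi^\dagger_{\theta(\omega)}}$ and the definition of the multiplicative domain in $\matrices$. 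The identity $\opLdag(bx) = \opLdag(b)\opLdag(x)$ follows in the same way. This direction needs nothing beyond the definitions.

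\emph{Inclusion $\multdom{\opLdag}\subseteq S$.} Suppose $x\in\multdom{\opLdag}$. Specializing the defining identities to $b = x^*$ gives $\opLdag(x^*x) = \opLdag(x^*)\opLdag(x) = \opLdag(x)^*\opLdag(x)$, where the second step uses that $\opLdag$ is a $*$-map. Similarly, $\opLdag(xx^*) = \opLdag(x)\opLdag(x)^*$. Evaluating these at $\omega$ shows that for a.e. $\omega$,
\begin{equation*}
\phi^\dagger_{\theta(\omega)}\seq{x^*_{\theta(\omega)}x_{\theta(\omega)}} = \phi^\dagger_{\theta(\omega)}\seq{x_{\theta(\omega)}}^*\phi^\dagger_{\theta(\omega)}\seq{x_{\theta(\omega)}},
\end{equation*}
and the analogous identity holds for $x_{\theta(\omega)}x^*_{\theta(\omega)}$. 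Since $\phi^\dagger_{\theta(\omega)}$ is a unital completely positive, hence unital Schwarz, map on $\matrices$, Choi's theorem \cite[Theorem 3.1]{Choi1974AC-Algebras} yields $x_{\theta(\omega)}\in\multdom{\phi^\dagger_{\theta(\omega)}}$ for a.e. $\omega$. By the first observation, $x\in S$.

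The only point requiring care, and the one I would treat as the potential obstacle, is the choice of ambient algebra. If the multiplicative domain is taken inside $\Linfty{\Omega}{\matrices}$, then the bounded $b$ used above (namely $b = x^*$, and arbitrary bounded $b$ in the first inclusion) are admissible, and the pointwise arguments go through verbatim. Nothing in either direction uses unboundedness or Choi's theorem at the level of $\opLdag$; Choi is only applied fiberwise in $\matrices$. Consequently the identification holds equally for $\rmatrices$ or $\Linfty{\Omega}{\matrices}$ as the ambient algebra, intersected with that algebra as appropriate.
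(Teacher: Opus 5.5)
Your proof is correct. The paper gives no argument, only the remark that the identity is clear from the definitions. Your first inclusion is exactly that direct pointwise check, so the real difference is in the reverse inclusion.

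There, the purely definitional route tests the identities $\opLdag(xb)=\opLdag(x)\opLdag(b)$ and $\opLdag(bx)=\opLdag(b)\opLdag(x)$ against deterministic $b$ running over a basis of $\matrices$, e.g.\ the matrix units. Each choice of $b$ gives an almost-sure identity. Finite-dimensionality of $\matrices$ then gives a single common null set, and linearity yields $x_{\theta(\omega)}\in\multdom{\phi^\dagger_{\theta(\omega)}}$ for a.e.\ $\omega$, with no appeal to Choi's theorem.

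You instead use the single test element $b=x^*$ and apply Choi's theorem fiberwise. This sidesteps exchanging the quantifiers ``for all $b$'' and ``for a.e.\ $\omega$''. The cost is needing the Schwarz property of each $\phi^\dagger_\omega$, which is harmless here since $\phi^\dagger_\omega$ is unital completely positive and the paper states Choi's theorem just before the lemma.

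Your closing caveat about the ambient algebra is correct and slightly sharper than the paper's wording. The right-hand side contains unbounded elements such as $f\mbI$ with $f$ an unbounded scalar function. So if $\multdom{\opLdag}$ is taken inside $\Linfty{\Omega}{\matrices}$, as the paper says it intends, the right-hand side must be intersected with $\Linfty{\Omega}{\matrices}$.
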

Now, let us prove the following useful fact.
\begin{lemma}\label{Lem:Eigenmat_of_opLdag_in_multdom}
    Assume $\opL$ is irreducible and let $\alpha\in\mbT.$
    If $x\in\rmatrices$ satisfies $\opLdag(x) = \alpha x$, then $x\in\multdom{\opLdag}$.
\end{lemma}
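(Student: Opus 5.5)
Since $\opLdag$ is unital Schwarz, Choi's theorem reduces the claim to showing equality in the Schwarz inequality for both $x$ and $x^*$. The plan is to show that the positive defect $\opLdag(x^*x) - x^*x$ vanishes by pairing it against the faithful steady state $\varrho$.

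First, $x\in\Linfty{\Omega}{\matrices}$ by the preceding lemma, so every quantity below is bounded. Since $\opLdag(x) = \alpha x$ with $|\alpha| = 1$, the Schwarz inequality gives
\begin{equation*}
    \opLdag(x^*x) \geq \opLdag(x)^*\opLdag(x) = |\alpha|^2 x^*x = x^*x.
\end{equation*}
Hence $d := \opLdag(x^*x) - x^*x$ is a positive element of $\Linfty{\Omega}{\matrices}$.

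Next, pair $d$ against $\varrho$ using the $L^1$--$L^\infty$ duality of Lemma \ref{Lem:Operator_norms_of_opL_and_adjoint_formula}, under which $\opLdag_1 = \dual{\opL_1}$. Since $\opL(\varrho) = \varrho$ and $\varrho\in\Lone{\Omega}{\matrices}$,
\begin{equation*}
    \Linftrace{\varrho\, d} = \Linftrace{\opL(\varrho)\, x^*x} - \Linftrace{\varrho\, x^*x} = 0.
\end{equation*}
Because $\varrho>0$ almost surely and $d\geq 0$, we have $\tr{\varrho_\omega d_\omega} = \tr{d_\omega^{1/2}\varrho_\omega d_\omega^{1/2}} \geq 0$ pointwise. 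A vanishing average of this nonnegative function forces it to vanish almost surely, and then $\varrho_\omega>0$ gives $d_\omega = 0$ almost surely. Thus $\opLdag(x^*x) = \opLdag(x)^*\opLdag(x)$.

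The same argument applies to $x^*$. Because $\opLdag$ is a $*$-map, $\opLdag(x^*) = \overline{\alpha}x^*$ with $|\overline{\alpha}| = 1$, so we obtain $\opLdag(xx^*) = \opLdag(x)\opLdag(x)^*$. Choi's characterization of the multiplicative domain for the unital Schwarz map $\opLdag_1$ on $\Linfty{\Omega}{\matrices}$ then yields $x\in\multdom{\opLdag}$. One could alternatively conclude pointwise via Lemma \ref{Lem:Mult_dom_direct_integral}, applying Choi's theorem to each $\phi^\dagger_\omega$.

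The only delicate point is the duality pairing itself: $\varrho$ must lie in $L^1$ (it does, having trace $1$ almost surely), and $\opLdag$ must genuinely be the dual of $\opL$ on these spaces, which Lemma \ref{Lem:Operator_norms_of_opL_and_adjoint_formula} guarantees.
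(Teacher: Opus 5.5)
Your proof is correct and follows the paper's argument essentially verbatim. In both, the Schwarz inequality together with $|\alpha|=1$ makes $\opLdag(x^*x)-x^*x$ positive, its pairing with the faithful steady state $\varrho$ vanishes because $\opL(\varrho)=\varrho$, the same is done for $x^*$, and Choi's theorem finishes. Your explicit appeal to the preceding lemma to place $x$ in $L^\infty$, so that the pairing is finite, is a small detail the paper leaves implicit.
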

\begin{proof}
    If $x\in\rmatrices$ satisfies $\opLdag(x) = \alpha x$, then notice that $\opLdag(x)^*\opLdag(x) = x^*x$, since $|\alpha| = 1$. 
    Therefore, by the Schwarz inequality, we have that $\opLdag(x^*x)\geq x^*x$. 
    Thus, if we let $\varrho$ be the unique steady state of $\opL$, then from
    \begin{equation}
        \avg{\tr{\varrho\opLdag(x^*x)}}
        =
        \avg{\tr{\opL(\varrho) x^*x}}
        =
        \avg{\tr{\varrho x^*x}}
    \end{equation}
    we conclude that $\avg{\tr{\varrho\seq{\opLdag(x^*x) - x^*x}}} = 0$. 
    Since $\varrho>0$ and $\opLdag(x^*x) - x^*x\geq 0$, this implies that $\opLdag(x^*x) = x^*x$. 
    Arguing similarly, we find that $\opLdag(xx^*) = xx^*$, so $x\in\multdom{\opLdag}$.
\end{proof}
This discussion has the following useful corollary. 
Let $\operatorname{Fix}(\opLdag) = \set{x\in\rmatrices\,\,:\,\,\opLdag(x) = x}$.
\begin{prop}\label{Prop:Fix_opLdag_is_CI}
    Assume $\opL$ is irreducible. 
    Then $\operatorname{Fix}(\opLdag) = \mbC\mbI$.
\end{prop}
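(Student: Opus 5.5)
Take $x\in\operatorname{Fix}(\opLdag)$; we must show $x\in\mbC\mbI$. The plan is to use the multiplicative domain (Lemma \ref{Lem:Eigenmat_of_opLdag_in_multdom}) to turn $\operatorname{Fix}(\opLdag)$ into a $*$-algebra, split it into spectral projections, and rule out nontrivial projections using irreducibility via Proposition \ref{Prop:Characterization_of_reducing_proj}.

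\textbf{Step 1: $\operatorname{Fix}(\opLdag)$ is a $*$-subalgebra.} By the earlier lemma on eigenvectors, $x\in\Linfty{\Omega}{\matrices}$. By Lemma \ref{Lem:Eigenmat_of_opLdag_in_multdom} with $\alpha=1$, $x\in\multdom{\opLdag}$. Since $\opLdag$ is a $*$-map, $x^*$ is also fixed. If $x,y$ are fixed, then $x\in\multdom{\opLdag}$ gives
\[
\opLdag(xy)=\opLdag(x)\opLdag(y)=xy .
\]
Hence $\operatorname{Fix}(\opLdag)$ is a unital $*$-subalgebra of $\Linfty{\Omega}{\matrices}$.

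\textbf{Step 2: reduce to projections.} Since $\operatorname{Fix}(\opLdag)$ is a $*$-algebra, it suffices to treat self-adjoint $x$, writing a general element as $\tfrac{x+x^*}{2}+i\tfrac{x-x^*}{2i}$. For self-adjoint $x$, every polynomial in $x$ is fixed. Pointwise, $x_\omega$ has at most $d$ eigenvalues, but these vary with $\omega$, so polynomials alone do not obviously produce the spectral projections. Two routes handle this.
\begin{itemize}
\item[(i)] Fixedness is preserved under bounded a.e. limits of uniformly bounded sequences, since $\opLdag$ acts pointwise and continuously at each $\omega$ on $\koopman{\theta}$-shifts. Choosing polynomials $q_n$ with $q_n(t)\to 1_{(c,\infty)}(t)$ boundedly on $[-\|x\|_\infty,\|x\|_\infty]$ and pointwise off $\{c\}$, a.e. convergence requires $\mu[c\in\sigma(x_\omega)]=0$. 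All but countably many $c$ satisfy this, which suffices.
\item[(ii)] More cleanly: since $\opLdag$ acts linearly, fixed spectral functions $f(x)$ for continuous $f$ are limits of polynomials uniformly in $\omega$, and spectral projections are monotone limits of $f_n(x)$ with $f_n\uparrow 1_{(c,\infty)}$. Pointwise, $f_n(x_\omega)\to 1_{(c,\infty)}(x_\omega)$ for every $\omega$, and $\opLdag(f_n(x))_\omega=\phi^\dagger_{\theta(\omega)}(f_n(x)_{\theta(\omega)})$ converges correspondingly.
\end{itemize}
Either route gives that the spectral projections $p=1_{(c,\infty)}(x)$ lie in $\operatorname{Fix}(\opLdag)$.

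\textbf{Step 3: fixed projections are trivial (main step).} Let $p\in\operatorname{Fix}(\opLdag)$ be a projection. By the final clause of Proposition \ref{Prop:Characterization_of_reducing_proj}, applied with $T=\theta^{-1}$ and $\glopdag{\theta^{-1}}{\phi}$, which is $\opLdag$ up to the paper's conventions, $\mbI-p$ reduces $\opL$ if and only if $p$ reduces $\opLdag$. Since $\opLdag(p)=p$, criterion (b) of that proposition holds trivially, so $p$ reduces $\opLdag$. Hence $\mbI-p$ reduces $\opL$, and by irreducibility either $\mbI-p=\mbI$ or $\mbI-p$ is not a projection.

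Under the convention that projections are nonzero, this forces $p\in\{0,\mbI\}$. As an alternative that avoids conventions, apply the same reasoning with $\mbI-p$, which is also fixed, to get $p\in\{0,\mbI\}$ directly. I expect the main difficulty to be matching the direction conventions between $\opL$, $\opLdag$ and $\glopdag{T}{\psi}$ in that proposition.

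\textbf{Step 4: conclude.} Every spectral projection $1_{(c,\infty)}(x)$ is therefore $0$ or $\mbI$ a.s., for all $c$ outside a countable set. It follows that $x_\omega$ is a.s. a scalar multiple of $\mbI$, with one scalar for all $\omega$: the scalar is $\sup\{c: 1_{(c,\infty)}(x)=\mbI\}$. Hence $x\in\mbC\mbI$. Conversely, $\opLdag(\mbI)=\mbI$ because $\phi^\dagger$ is unital, so $\operatorname{Fix}(\opLdag)=\mbC\mbI$.
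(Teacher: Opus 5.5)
Your proof is correct, but it takes a different route from the paper. The paper never decomposes $x$. Once $x$ is in the multiplicative domain, it pairs $\phi_{\theta(\omega)}(x_\omega\varrho_\omega)$ with an arbitrary test matrix $a$. It then uses $\phi^\dagger_{\theta(\omega)}(a\,x_{\theta(\omega)})=\phi^\dagger_{\theta(\omega)}(a)\,x_\omega$ and $\opL(\varrho)=\varrho$ to get $\opL(x\varrho)=x\varrho$. One-dimensionality of the fixed space of $\opL$ (Theorem~\ref{Thmx:Irreducibility_classification_operator}) gives $x\varrho=\lambda\varrho$, and $\varrho>0$ gives $x=\lambda\mbI$. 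This is a short transfer from a fixed point of $\opLdag$ to a fixed point of $\opL$, with no functional calculus or limiting argument. The same intertwining $u\mapsto u\varrho$ is what the paper reuses for $\alpha\neq 1$ in Proposition~\ref{Prop:Lambda_opL_is_Lambda_opL_dagger}.

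You instead run the classical von Neumann--algebra argument. The multiplicative domain makes $\operatorname{Fix}(\opLdag)$ a $*$-algebra closed under a.e.\ limits, so it contains the spectral projections of its self-adjoint elements. Each fixed projection $p$ then yields the reducing projection $\mbI-p$ of $\opL$ through the duality clause of Proposition~\ref{Prop:Characterization_of_reducing_proj}. This costs the spectral-calculus step. In return, it links $\operatorname{Fix}(\opLdag)$ directly to the projection definition of irreducibility. It also uses the steady state only through its existence and faithfulness (inside Lemma~\ref{Lem:Eigenmat_of_opLdag_in_multdom}), not its uniqueness.

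One small correction: your ``convention-free'' alternative in Step~3 does not handle a case like $p=1_A\mbI$ with $0<\mu(A)<1$, where both $p$ and $\mbI-p$ vanish on sets of positive measure. If random projections are required to be nonzero almost surely, close this gap with ergodicity. The set $E=\{\omega: p_\omega=\mbI\}$ satisfies $\theta^{-1}(E)\subseteq E$ because $\phi^\dagger$ is unital and $p_\omega=\phi^\dagger_{\theta(\omega)}(p_{\theta(\omega)})$, so $\mu(E)\in\{0,1\}$. If $\mu(E)=0$, then $\mbI-p$ is nonzero almost surely and your argument applies.
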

\begin{proof}
    By Lemma \ref{Lem:Eigenmat_of_opLdag_in_multdom}, we know that $\operatorname{Fix}(\opLdag)\subset \multdom{\opLdag}$.
    Now, let $a\in\matrices$ be arbitrary. 
    Then for almost every $\omega\in\Omega$, the fact that $\opLdag(x) = x$ and Lemma \ref{Lem:Mult_dom_direct_integral} together imply 
    \begin{align*}
        \tr{\phi_{\theta(\omega)}(x_\omega \varrho_\omega) a}
        =
        \tr{\phi_{\theta(\omega)}^\dagger(x_{\theta(\omega)})\varrho_\omega \phi_{\theta(\omega)}^\dagger(a)}
        &=
        \tr{\varrho_\omega \phi_{\theta(\omega)}^\dagger(a x_{\theta(\omega)})}
        \\
        &= 
        \tr{ \phi_{\theta(\omega)}(\varrho_\omega) a x_{\theta(\omega)}}
        \\
        &= 
        \tr{ x_{\theta(\omega)}\varrho_{\theta(\omega)} a}.
    \end{align*}
    That is, $\opL(x\varrho) = x\varrho$, since $a$ was arbitrary. 
    So, since $\FixL = \mbC\varrho$, there is $\lambda\in\mbC$ such that $x\varrho = \lambda\varrho$, but since $\varrho>0$, this implies $x = \lambda\mbI$, as desired. 
\end{proof}
\begin{cor}\label{Cor:Eigenmat_of_opLdag_unitary}
    Assume $\opL$ is irreducible and let $\alpha\in\mbT$. 
    If $x\in\rmatrices$ satisfies $\opLdag(x) = \alpha x$, then there is a unitary $u\in\rmatrices$ and a constant $\lambda\in\mbC$ with $x = \lambda u$. 
\end{cor}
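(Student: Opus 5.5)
Combine Lemma \ref{Lem:Eigenmat_of_opLdag_in_multdom} with the polar decomposition, and use Proposition \ref{Prop:Fix_opLdag_is_CI} to force the modulus part to be a constant multiple of $\mbI$. We may assume $x\neq 0$, since otherwise we take $\lambda=0$ and $u=\mbI$.

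\textbf{Step 1: the modulus of $x$ is fixed.} Suppose $\opLdag(x)=\alpha x$ with $\alpha\in\mbT$. By Lemma \ref{Lem:Eigenmat_of_opLdag_in_multdom}, $x\in\multdom{\opLdag}$. The proof of that lemma also shows $\opLdag(x^*x)=x^*x$. By Proposition \ref{Prop:Fix_opLdag_is_CI}, $x^*x=c\mbI$ for some $c\in\mbC$, and since $x^*x\geq 0$ is nonzero, $c>0$.

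\textbf{Step 2: define $u$.} Set $\lambda=\sqrt{c}$ and $u=\lambda^{-1}x$. Then $u^*u=\mbI$ almost surely. Since $u_\omega$ is a square matrix, a left inverse is a two-sided inverse, so $uu^*=\mbI$ almost surely as well, and $u$ is a random unitary with $x=\lambda u$.

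\textbf{Step 3: measurability.} $u$ is a scalar multiple of the measurable $x$, so $u\in\rmatrices$ automatically.

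The only possible subtlety is Step 1: the statement that $x^*x$ is fixed must be extracted from the proof of Lemma \ref{Lem:Eigenmat_of_opLdag_in_multdom}. Alternatively, it follows from the multiplicative domain property, since
\[
\opLdag(x^*x)=\opLdag(x)^*\opLdag(x)=|\alpha|^2x^*x=x^*x .
\]
Either way, the argument closes immediately.
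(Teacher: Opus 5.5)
Your proof is correct and follows essentially the paper's argument: Lemma~\ref{Lem:Eigenmat_of_opLdag_in_multdom} and Proposition~\ref{Prop:Fix_opLdag_is_CI} force $x^*x$ to be a positive constant multiple of $\mbI$, after which you normalize. The only difference is that the paper also uses $\opLdag(xx^*) = xx^*$ to get $uu^*=\mbI$, whereas you obtain it from finite-dimensionality; your normalization $u=\lambda^{-1}x$ with $\lambda=\sqrt{c}$ is the correct one, and the paper's written $u := \lambda x$ is a slip.
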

\begin{proof}
    Assume $x\neq 0$. 
    From Lemma \ref{Lem:Eigenmat_of_opLdag_in_multdom}, we know that $\opLdag(x^*x) = x^*x$ and $\opLdag(xx^*) = xx^*$. 
    Therefore, by the previous proposition, we conclude that there is a constant $\lambda'\in\mbC$ such that $x^*x = \lambda'\mbI$ and $xx^* = \overline{\lambda'}\mbI$. 
    But $x^*x\geq 0$, so $\lambda'\geq 0$, hence $\overline{\lambda'} = \lambda'$. 
    Let $\lambda = \sqrt{\lambda'}$.
    Then $u := \lambda x$ is unitary and satisfies $x = \lambda^{-1} u$, which concludes the proof.
\end{proof}
With this in hand, we can now prove that $\Lambda_\opL^\dagger$ is a group.
\begin{prop}\label{Prop:Lambda_opL_dagger_is_a_group}
    Assume $\opL$ is irreducible. Then $\Lambda_\opL^\dagger$ forms a group under multiplication. 
    % %
\end{prop}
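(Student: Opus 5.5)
The plan is to verify the group axioms directly on eigenvectors, using that eigenvectors of $\opLdag$ for peripheral eigenvalues lie in the multiplicative domain and are multiples of unitaries.

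\textbf{Identity and products.} First, $1\in\Lambda_\opL^\dagger$, since $\opLdag(\mbI)=\mbI$ ($\phi^\dagger$ is unital almost surely). Now let $\alpha,\beta\in\Lambda_\opL^\dagger$ with eigenvectors $x,y$, so $\opLdag(x)=\alpha x$ and $\opLdag(y)=\beta y$.
\begin{itemize}
\item By Corollary \ref{Cor:Eigenmat_of_opLdag_unitary}, after rescaling we may take $x=u$ and $y=v$ to be random unitaries; in particular $uv\neq 0$.
\item By Lemma \ref{Lem:Eigenmat_of_opLdag_in_multdom}, $u\in\multdom{\opLdag}$. Applying this to the unital Schwarz map $\opLdag_1$ on $\Linfty{\Omega}{\matrices}$ (unitaries are bounded, so we are in that algebra) gives
\begin{equation*}
\opLdag(uv)=\opLdag(u)\opLdag(v)=\alpha\beta\, uv.
\end{equation*}
\item Hence $\alpha\beta\in\Lambda_\opL^\dagger$; it is peripheral because $|\alpha\beta|=1$.
\end{itemize}

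\textbf{Inverses.} Take $u$ as above with $\opLdag(u)=\alpha u$. Since $\opLdag$ is a $*$-map (it is positive), $\opLdag(u^*)=\overline{\alpha}u^*$. Because $|\alpha|=1$, we have $\overline{\alpha}=\alpha^{-1}$, and $u^*\neq0$, so $\alpha^{-1}\in\Lambda_\opL^\dagger$. Associativity and commutativity are inherited from $\mbC$, so $\Lambda_\opL^\dagger$ is an abelian subgroup of $\mbT$.

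\textbf{Main point to check.} The only delicate step is justifying the multiplicative-domain identity $\opLdag(ub)=\opLdag(u)\opLdag(b)$ for arbitrary $b$. Choi's theorem gives this for the unital Schwarz map $\opLdag_1$ on the $C^*$-algebra $\Linfty{\Omega}{\matrices}$, and Lemma \ref{Lem:Eigenmat_of_opLdag_in_multdom} places $u$ in that multiplicative domain. Since $u$ and $v$ are bounded, no domain issues arise, and the argument is complete.
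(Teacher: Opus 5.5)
Your proposal is correct and takes essentially the same route as the paper: you rescale eigenvectors to unitaries via Corollary~\ref{Cor:Eigenmat_of_opLdag_unitary}, get closure under products from multiplicativity on the multiplicative domain (Lemma~\ref{Lem:Eigenmat_of_opLdag_in_multdom}), and get inverses from the $*$-map property of the positive map $\opLdag$. Your explicit check that $1$ lies in the set (by unitality) and that $\alpha\beta$ is peripheral are small details the paper leaves implicit.
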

\begin{proof}
    Let $\alpha, \gamma\in\Lambda_\opL^\dagger$ and let $u, v\in\rmatrices$ be the unitaries guaranteed by Corollary \ref{Cor:Eigenmat_of_opLdag_unitary} such that $\opLdag(u) = \alpha u$ and $\opLdag(v) = \gamma v$. 
    By Lemma \ref{Lem:Eigenmat_of_opLdag_in_multdom}, we know that $\opLdag(uv) = \opLdag(u)\opLdag(v) = \alpha\gamma uv$, so since $u$ and $v$ are unitary, $uv\neq 0$ hence $\alpha\gamma\in\Lambda_\opL^\dagger$. 
    Lastly, since $\opLdag$ is positive, we know that $\opLdag(u^*) = \opLdag(u)^* = \overline{\alpha}u^*$, hence $\overline{\alpha} = \alpha^{-1}\in\Lambda_\opL^\dagger$, concluding the proof.
\end{proof}
We are now in a position to conclude that $\Lambda_\opL$ is a group, which will follow from the next proposition. 
In the following, we let $G_\opL$ denote the set 
\begin{equation}
    G_\opL
        :=
    \set{\text{unitary }u\in\rmatrices\,\,:\,\, \text{there is $\alpha\in\Lambda_\opL^\dagger$ with }\opLdag(u) = \alpha u},
\end{equation}
which, as we saw in the proof of Proposition \ref{Prop:Lambda_opL_dagger_is_a_group}, forms a group under matrix multiplication. 
\begin{prop}\label{Prop:Lambda_opL_is_Lambda_opL_dagger}
    Assume $\opL$ is irreducible with unique steady state $\varrho$.
    \begin{enumerate}[label = (\alph*)]
        \item For $\alpha\in\Lambda_\opL^\dagger$ and $u\in G_\opL$ with $\opLdag(u) =\alpha u$, $\opL(u\varrho) = \overline{\alpha} u\varrho$.

        \item For $\alpha\in\Lambda_\opL$ and $x\in\rmatrices$ with $\opL(x) = \alpha x$, then $\opLdag(u) = \overline{\alpha} u$, where $u$ is the unitary from Proposition \ref{Prop:Polar_decomp}.
    \end{enumerate}
    In particular, $\Lambda_\opL = \Lambda_\opL^\dagger$.
\end{prop}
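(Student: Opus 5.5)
Both parts reduce to multiplicative-domain identities for $\opLdag$, which then transfer to $\opL$ by duality against arbitrary test matrices, exactly as in the proof of Proposition \ref{Prop:Fix_opLdag_is_CI}.

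For (a), let $u\in G_\opL$ with $\opLdag(u) = \alpha u$. By Lemma \ref{Lem:Eigenmat_of_opLdag_in_multdom} and Lemma \ref{Lem:Mult_dom_direct_integral}, $u_{\theta(\omega)}\in\multdom{\phi^\dagger_{\theta(\omega)}}$ and $\phi^\dagger_{\theta(\omega)}(u_{\theta(\omega)}) = \alpha u_\omega$ for a.e. $\omega$. Since $\phi^\dagger$ is a $*$-map, $\phi^\dagger_{\theta(\omega)}(u^*_{\theta(\omega)}) = \overline{\alpha}u^*_\omega$, and $u^*$ also lies in the multiplicative domain. For fixed $a\in\matrices$ we compute
\begin{equation*}
\tr{\phi_{\theta(\omega)}(u_\omega\varrho_\omega)a}
=\tr{u_\omega\varrho_\omega\phi^\dagger_{\theta(\omega)}(a)}
=\alpha\,\tr{\varrho_\omega\,\phi^\dagger_{\theta(\omega)}(u^*_{\theta(\omega)})^*\phi^\dagger_{\theta(\omega)}(a)}.
\end{equation*}
The first equality is duality. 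For the second, write $u_\omega = \alpha\,\phi^\dagger_{\theta(\omega)}(u^*_{\theta(\omega)})^*$, using $\phi^\dagger(u^*)^* = \phi^\dagger(u)$ together with the relation $\phi^\dagger_{\theta(\omega)}(u_{\theta(\omega)}) = \alpha u_\omega$; this is the step where the phase convention must be tracked carefully. The multiplicative-domain property then turns the right-hand side into
\begin{equation*}
\overline{\alpha}^{-1}\cdots\;\tr{\varrho_\omega\,\phi^\dagger_{\theta(\omega)}(u_{\theta(\omega)}a)} = \tr{\varrho_{\theta(\omega)}u_{\theta(\omega)}a}\cdot(\text{phase}).
\end{equation*}
Since $a$ is arbitrary, this shows $\opL(u\varrho)$ is a unimodular multiple of $u\varrho$. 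The phase comes out as $\overline{\alpha}$ because $\opL$ is the preadjoint, so the eigenvalue gets conjugated relative to the identity used in the proof of Proposition \ref{Prop:Fix_opLdag_is_CI}. I will carry out this computation in the order: duality, rewrite $u$ through $\phi^\dagger$, multiplicative domain, and finally $\phi(\varrho_\omega) = \varrho_{\theta(\omega)}$.

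For (b), let $\opL(x)=\alpha x$. By Proposition \ref{Prop:Polar_decomp}, $x=\lambda u\varrho$, so we may assume $\opL(u\varrho)=\alpha u\varrho$ with $u$ unitary. I want to show $\opLdag(u) = \overline{\alpha}u$.

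1. Apply the Schwarz inequality: $\opLdag(u^*u)=\mbI\geq \opLdag(u^*)^*\opLdag(u^*)$, so $\|\opLdag(u^*)\|_\infty\le 1$.
2. Pair against $\varrho$: $\avg{\tr{\varrho u^*\opLdag(u^*)^*}}$, or the appropriate variant, equals $\avg{\tr{\opL(u\varrho)\,\cdot\,}}$-type expressions. This gives $\avg{\tr{u\varrho\, \opLdag(u^*)}} = \alpha\,\avg{\tr{u\varrho u^*}} = \alpha$.
3. By Cauchy--Schwarz in the $\varrho$-weighted inner product $\langle a,b\rangle_\varrho = \avg{\tr{\varrho a^*b}}$, with $\|u^*\|_\varrho = 1$ and $\|\opLdag(u^*)\|_\varrho\le 1$, equality $|\langle u^*, \opLdag(u^*)\rangle_\varrho|=1$ forces $\opLdag(u^*)=\alpha u^*$ (up to the conjugation convention), using $\varrho>0$ for faithfulness.
4. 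Take adjoints: $\opLdag(u)=\overline{\alpha}u$.

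With (a) and (b), $\Lambda_\opL$ and $\Lambda_\opL^\dagger$ are each closed under conjugation into the other, and Proposition \ref{Prop:Lambda_opL_dagger_is_a_group} makes $\Lambda_\opL^\dagger$ closed under conjugation. Hence $\Lambda_\opL=\Lambda_\opL^\dagger$.

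The main obstacle is step 3 of (b). It requires the Cauchy--Schwarz equality case together with faithfulness of the weighted inner product, and bookkeeping of $\alpha$ versus $\overline{\alpha}$ throughout.
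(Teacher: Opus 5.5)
Your plan for (a) follows the paper's route: duality, rewrite $u_\omega$ through $\phi^\dagger_{\theta(\omega)}$, use the multiplicative domain, then use $\phi_{\theta(\omega)}(\varrho_\omega)=\varrho_{\theta(\omega)}$. However, your displayed chain has two slips, and you should fix them rather than absorb them into ``(phase)''.

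First, from $\phi^\dagger_{\theta(\omega)}(u_{\theta(\omega)})=\alpha u_\omega$ you get $u_\omega=\overline{\alpha}\,\phi^\dagger_{\theta(\omega)}(u_{\theta(\omega)})$, not $\alpha\,\phi^\dagger_{\theta(\omega)}(u^*_{\theta(\omega)})^*$. The detour through $u^*$ is unnecessary.

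Second, cyclicity gives $\tr{u_\omega\varrho_\omega X}=\tr{\varrho_\omega X u_\omega}$, not $\tr{\varrho_\omega u_\omega X}$. Since $u_\omega$ and $\varrho_\omega$ need not commute, your ordering would identify the output with a multiple of $\varrho u$ rather than $u\varrho$.

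The correct chain is
\begin{align*}
\tr{\phi_{\theta(\omega)}(u_\omega\varrho_\omega)a}
&=\tr{u_\omega\varrho_\omega\phi^\dagger_{\theta(\omega)}(a)}
=\overline{\alpha}\,\tr{\varrho_\omega\phi^\dagger_{\theta(\omega)}(a)\phi^\dagger_{\theta(\omega)}(u_{\theta(\omega)})}\\
&=\overline{\alpha}\,\tr{\varrho_\omega\phi^\dagger_{\theta(\omega)}(au_{\theta(\omega)})}
=\overline{\alpha}\,\tr{u_{\theta(\omega)}\varrho_{\theta(\omega)}a}.
\end{align*}
So the phase $\overline{\alpha}$ is produced by the computation itself. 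Your remark that the preadjoint conjugates the eigenvalue is not a valid justification: under the bilinear pairing in which $\opLdag_1$ is the dual of $\opL_1$, duality alone does not conjugate eigenvalues.

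Your part (b) is correct and takes a genuinely different route from the paper. The paper's argument runs as follows:
\begin{itemize}
\item it cites a general fact about contractions on Banach spaces (Groh) to get $\Lambda_\opL\subseteq\Lambda_\opL^\dagger$;
\item it uses the group property to find an eigenunitary $v$ with $\opLdag(v)=\overline{\alpha}v$;
\item it applies (a) to get $\opL(v\varrho)=\alpha v\varrho$;
\item it invokes simplicity (Proposition~\ref{Prop:Polar_decomp}(a)) and uniqueness of the polar decomposition to conclude that $u$ is a unimodular multiple of $v$.
\end{itemize}
You instead show directly that the polar part $u$ is an $\opLdag$-eigenvector. You use only duality, the Schwarz inequality, the equality case of Cauchy--Schwarz, and faithfulness of $\varrho$, which is the same mechanism as Lemma~\ref{Lem:Eigenmat_of_opLdag_in_multdom}. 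This makes the argument self-contained: it avoids the external citation, simplicity, and part (a).

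One correction is needed in step 3. The quantity you computed in step 2 is
\[
\avg{\tr{u\varrho\,\opLdag(u^*)}}=\avg{\tr{\varrho\,\opLdag(u)^*u}}=\langle\opLdag(u),u\rangle_\varrho .
\]
It is not $\langle u^*,\opLdag(u^*)\rangle_\varrho=\avg{\tr{\varrho u\,\opLdag(u^*)}}$, which you cannot evaluate from the hypothesis.

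With the right pairing, $\|u\|_\varrho=1$ and $\|\opLdag(u)\|_\varrho^2\le\avg{\tr{\varrho\,\opLdag(u^*u)}}=1$. Then $|\alpha|=1$ forces $\opLdag(u)\varrho^{1/2}=c\,u\varrho^{1/2}$, hence $\opLdag(u)=cu$ because $\varrho>0$. Finally $\overline{c}=\alpha$ gives $\opLdag(u)=\overline{\alpha}u$ directly, with no adjoint step.

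Your closing deduction of $\Lambda_\opL=\Lambda_\opL^\dagger$ from (a), (b) and the group property is fine.
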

\begin{proof}
    By Proposition \ref{Prop:Lambda_opL_dagger_is_a_group}, it is clear that (a) and (b) together imply $\Lambda_\opL = \Lambda_\opL^\dagger$, so it remains to prove (a) and (b).
    First, to see (a), let $a\in\matrices$ be fixed. 
    Then for almost every $\omega\in\Omega$, from $\opLdag(u) = \alpha u$, we have
    \begin{align*}
        \tr{\phi_{\theta(\omega)}\!\seq{u_\omega\varrho_\omega }
            a}
        = 
            \tr{u_\omega\varrho_\omega
            \phi_{\theta(\omega)}^\dagger(a)}
            &= 
            \overline{\alpha}
            \tr{
            \varrho_\omega \phi_{\theta(\omega)}^\dagger(a)\phi_{\theta(\omega)}^\dagger(u_{\theta(\omega)})}
            \\
        &= 
        \overline{\alpha}
            \tr{
            \varrho_\omega \phi_{\theta(\omega)}^\dagger(au_{\theta(\omega)})} 
            &&\text{by Lemma \ref{Lem:Eigenmat_of_opLdag_in_multdom}}\\
            &= 
            \overline{\alpha}
         \tr{
            \varrho_{\theta(\omega)} au_{\theta(\omega)}}
            &&\text{since $\opL(\varrho) = \varrho$}\\
            &= 
            \overline{\alpha}
         \tr{
            u_{\theta(\omega)}\varrho_{\theta(\omega)} a}.
    \end{align*}
    Thus, because $a\in\matrices$ was arbitrary, we conclude that $\phi_{\theta(\omega)}\!\seq{u_\omega\varrho_\omega }
    =
    \overline{\alpha}u_{\theta(\omega)}\varrho_{\theta(\omega)}$ holds almost surely, i.e., $\opL(u\varrho) = \overline{\alpha} u\varrho$.
    To see (b), note that since $\opL_1$ is a contraction on a Banach space and $\dual{\opL_1} = \opL_1^\dagger$, the general fact about contractions on Banach spaces \cite[Proposition 3.1]{Groh1984UniformlyC-algebras} implies that $\Lambda_\opL \subseteq \Lambda_{\opL}^\dagger$.
    Therefore, since from Proposition \ref{Prop:Polar_decomp} (a) we know that every eigenvalue of $\opL_1$ is simple, we conclude from part (a) above and the uniqueness in Proposition \ref{Prop:Polar_decomp} (b) that $\opLdag(u) = \overline{\alpha} u$, as claimed. 
\end{proof}
Therefore, by Proposition \ref{Prop:Lambda_opL_dagger_is_a_group}, we conclude from the above that $\Lambda_\opL$ is a group, which we record as a corollary. 
\begin{cor}\label{Cor:Lambda_is_a_group}
    Assume $\opL$ is irreducible. 
    Then $\Lambda_\opL$ forms a group under multiplication.
\end{cor}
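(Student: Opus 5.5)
The statement to prove is Corollary \ref{Cor:Lambda_is_a_group}: if $\opL$ is irreducible, then $\Lambda_\opL$ is a group under multiplication. The plan is to transport the group structure of $\Lambda_\opL^\dagger$ to $\Lambda_\opL$ through the identification $\Lambda_\opL = \Lambda_\opL^\dagger$ of Proposition \ref{Prop:Lambda_opL_is_Lambda_opL_dagger}. The earlier results already carry essentially all the weight, so the work is mainly to assemble them in the right order.

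First, Proposition \ref{Prop:Lambda_opL_dagger_is_a_group} shows that $\Lambda_\opL^\dagger$ is a group. Recall its proof: Corollary \ref{Cor:Eigenmat_of_opLdag_unitary} lets us take unitary eigenvectors $u$ and $v$ with eigenvalues $\alpha$ and $\gamma$. Lemma \ref{Lem:Eigenmat_of_opLdag_in_multdom} places these in the multiplicative domain, so $\opLdag(uv) = \alpha\gamma\, uv$. Positivity of $\opLdag$ gives $\opLdag(u^*) = \overline{\alpha}\, u^*$. The element $1$ belongs to $\Lambda_\opL^\dagger$ because $\opLdag(\mbI) = \mbI$. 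Second, Proposition \ref{Prop:Lambda_opL_is_Lambda_opL_dagger} gives $\Lambda_\opL = \Lambda_\opL^\dagger$ as sets, so $\Lambda_\opL$ inherits closure under products and inverses.

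The set equality in the second step is the only substantive point, and it rests on two inclusions. The inclusion $\Lambda_\opL \subseteq \Lambda_\opL^\dagger$ comes from the Banach-space duality fact \cite[Proposition 3.1]{Groh1984UniformlyC-algebras} applied to the contraction $\opL_1$. For the reverse inclusion, part (a) of Proposition \ref{Prop:Lambda_opL_is_Lambda_opL_dagger} shows that $\alpha \in \Lambda_\opL^\dagger$ gives $\overline{\alpha} \in \Lambda_\opL$ via the eigenvector $u\varrho$. Since $\Lambda_\opL^\dagger$ is already known to be closed under conjugation, $\alpha = \overline{\overline{\alpha}}$ also lies in $\Lambda_\opL$.

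Because these inputs are established earlier, no real obstacle remains. The one point needing care is the conjugation bookkeeping between $\alpha$ and $\overline{\alpha}$ in the reverse inclusion, which closure of $\Lambda_\opL^\dagger$ under inverses (equivalently, under conjugation on $\mbT$) resolves.
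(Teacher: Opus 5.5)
Your proposal is correct and is the paper's argument: the corollary is recorded as an immediate consequence of Proposition \ref{Prop:Lambda_opL_dagger_is_a_group} (the peripheral eigenvalues of the adjoint form a group) combined with the set equality in Proposition \ref{Prop:Lambda_opL_is_Lambda_opL_dagger}. Your handling of the reverse inclusion is also correct: you apply part (a) to $\overline{\alpha}$ and use that the adjoint's eigenvalue set is closed under conjugation, which spells out bookkeeping the paper leaves implicit.
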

With this in hand, we conclude this section by proving basic group-theoretic properties of $\Lambda_\opL$. 
Let $\Gamma_\opL$ be the quotient group 
\begin{equation}
    \Gamma_\opL :=
    \Lambda_\opL / \Lambda_\theta,
\end{equation}
which coincides with $\GammaGroup$ from the introduction. 
We write $\alpha\Lambda_\theta$ to denote the coset in $\Gamma_\opL$ represented by $\alpha\in\Lambda_\opL$. 
Our goal now is to prove that $\Gamma_\opL$ is finite. 
Recall that $G_\opL$ was defined to be the set of unitary eigenmatrices of $\opLdag$. 
Let $H_\theta$ denote the subgroup of $G_\opL$ defined by 
\begin{equation}
    H_\theta 
        =
    \set{f\mbI
        \,\,:\,\,
    f\in L^\infty(\Omega)\text{ satisfies }|f| = 1\text{ and }\koopman{\theta}(f) = \beta f\text{ for some }\beta\in\Lambda_\theta 
    }.
\end{equation}
Then we have the following reformulation of $\Gamma_\opL$. 
For $u\in G_\opL$, write $u H_\theta$ to denote the coset representative of $u$ in $G_\opL/H_\theta$.
\begin{lemma}\label{Lem:G_mod_H_is_isomorphic_to_Gamma}
    Assume $\opL$ is irreducible. 
    The map 
    \begin{equation}
        \begin{split}
            G_\opL/H_\theta &\to \Gamma_\opL\\ 
            u H_\theta &\mapsto \text{$\alpha\Lambda_\theta$ where $\alpha\in\Lambda_\opL$ is such that $\opLdag(u) = \alpha u$}
        \end{split}
    \end{equation}
    is a well-defined group isomorphism. 
\end{lemma}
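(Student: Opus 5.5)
The map is best treated as a composite. First I would build the homomorphism $\Psi: G_\opL\to\Lambda_\opL$, $u\mapsto\alpha$ where $\opLdag(u)=\alpha u$. Composing with the quotient $\Lambda_\opL\to\Gamma_\opL$ gives a map $G_\opL\to\Gamma_\opL$, and I would show that its kernel is exactly $H_\theta$.

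\emph{Well-definedness and homomorphism.} For $u\in G_\opL$ the eigenvalue $\alpha$ is uniquely determined, since $u\neq 0$ is unitary. It lies in $\Lambda_\opL^\dagger=\Lambda_\opL$ by Proposition~\ref{Prop:Lambda_opL_is_Lambda_opL_dagger}. Take $u,v\in G_\opL$ with eigenvalues $\alpha,\gamma$. Lemma~\ref{Lem:Eigenmat_of_opLdag_in_multdom} places $u$ in the multiplicative domain, so $\opLdag(uv)=\opLdag(u)\opLdag(v)=\alpha\gamma\, uv$, exactly as in the proof of Proposition~\ref{Prop:Lambda_opL_dagger_is_a_group}. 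Hence $\Psi$ is a group homomorphism. Next, $H_\theta\subseteq G_\opL$: for $f\mbI$ with $|f|=1$ and $\koopman{\theta}(f)=\beta f$, we have $\opLdag(f\mbI)_\omega=\phi^\dagger_{\theta(\omega)}(f(\theta(\omega))\mbI)=\beta f(\omega)\mbI$ by unitality. So $\Psi(H_\theta)\subseteq\Lambda_\theta$, and the induced map $G_\opL/H_\theta\to\Gamma_\opL$ is a well-defined homomorphism.

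\emph{Surjectivity.} Given $\alpha\in\Lambda_\opL=\Lambda_\opL^\dagger$, pick a nonzero eigenvector $x$ of $\opLdag$ for $\alpha$. By Corollary~\ref{Cor:Eigenmat_of_opLdag_unitary}, $x=\lambda u$ with $u$ unitary. Then $u\in G_\opL$ and $\Psi(u)=\alpha$, so $\Psi$ is onto $\Lambda_\opL$.

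\emph{Injectivity of the induced map.} This is the main step. Suppose $u\in G_\opL$ has $\opLdag(u)=\beta u$ with $\beta\in\Lambda_\theta$. Choose $f$ with $|f|=1$ and $\koopman{\theta}(f)=\beta f$; such an $f$ exists, normalized as in the proof of Lemma~\ref{Lem:Eigenspectrum_koopman_in_opL}. Set $w=\bar f u$, which is unitary. Since $f$ is scalar, $\opLdag(w)_\omega=\overline{f(\theta(\omega))}\,\phi^\dagger_{\theta(\omega)}(u_{\theta(\omega)})=\bar\beta\,\overline{f(\omega)}\,\beta u_\omega=w_\omega$. So $w\in\operatorname{Fix}(\opLdag)=\mbC\mbI$ by Proposition~\ref{Prop:Fix_opLdag_is_CI}, and $w=c\mbI$ with $|c|=1$. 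Therefore $u=(cf)\mbI\in H_\theta$, and the kernel of $G_\opL\to\Gamma_\opL$ is exactly $H_\theta$.

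The one point needing care is that $f$ in $H_\theta$ is only defined up to a.e.\ equivalence, and the scalar factor must commute through $\phi^\dagger$ pointwise. This is immediate from linearity, so I expect no genuine obstacle.
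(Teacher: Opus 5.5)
Your proof is correct. The well-definedness, homomorphism, and surjectivity steps match the paper's. The paper checks directly that $u=fv$ with $f\mbI\in H_\theta$ forces $\alpha=\gamma\beta$; this is your observation $\Psi(H_\theta)\subseteq\Lambda_\theta$, phrased without the first isomorphism theorem.

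The genuine difference is in injectivity. The paper moves to the $\opL$ side. It uses Proposition~\ref{Prop:Lambda_opL_is_Lambda_opL_dagger}(a) to get $\opL(u\varrho)=\overline{\alpha}u\varrho$, and Lemma~\ref{Lem:Eigenspectrum_koopman_in_opL} to get $\opL(f\varrho)=\overline{\alpha}f\varrho$. It then invokes simplicity of the peripheral eigenvalues of $\opL$ (Proposition~\ref{Prop:Polar_decomp}(a), which rests on Groh's theorem) to conclude $u\varrho=cf\varrho$, hence $u=cf\mbI$.

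You instead stay entirely on the $\opLdag$ side. Dividing $u$ by the scalar eigenfunction gives a fixed point $\overline{f}u$ of $\opLdag$, and Proposition~\ref{Prop:Fix_opLdag_is_CI} finishes the argument. Because $f\mbI$ is scalar, your verification that $\overline{f}u$ is fixed uses only linearity of $\phi^\dagger_\omega$. Proposition~\ref{Prop:Fix_opLdag_is_CI} is itself proved from the multiplicative domain and $\operatorname{Fix}(\opL)=\mbC\varrho$, without Groh.

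So your route avoids both the $\opL$--$\opLdag$ eigenvalue transfer and Groh's simplicity theorem, and it is shorter. The paper's route reuses machinery it needs elsewhere anyway: the correspondence $u\leftrightarrow u\varrho$ and simplicity of eigenvalues in $\Lambda_\opL$.
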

\begin{proof}
    We first show well-definedness. 
    Let $u, v\in G_\opL$ satisfy $u H_\theta = v H_\theta$. 
    Then there is $f\in H_\theta$ such that $u = fv$. 
    So, if we let $\alpha, \gamma, \beta\in\Lambda_\opL$ satisfy $\opLdag(u) = \alpha u$, $\opLdag(v) = \gamma v$, and $\koopman{\theta}(f) = \beta f$, then $u = fv$ implies $\alpha = \gamma\beta$. 
    Therefore, $\alpha\Lambda_\theta = \gamma\Lambda_\theta$, so the map is well-defined. 
    Next, this map is a group homomorphism, since $\opLdag(u) = \alpha u$ and $\opLdag(v) = \beta v$ for $\alpha, \beta\in\Lambda_\opL$ implies $\opLdag(uv) = \alpha\beta uv$ by
    Lemma \ref{Lem:Eigenmat_of_opLdag_in_multdom}. 
    Thus, the image of $uv H_\theta$ under the above map is $\alpha\beta \Lambda_\theta = (\alpha \Lambda_\theta)(\beta\Lambda_\theta)$. 
    It is immediate this homomorphism is surjective---by the definition of $G_\opL$---and injectivity follows from the observation that if $\opLdag(u) = \alpha u$ for $\alpha \in\Lambda_\theta$ (i.e., $u H_\theta\mapsto \Lambda_\theta$), then Proposition \ref{Prop:Lambda_opL_is_Lambda_opL_dagger} implies that $\opL(u\varrho) = \overline{\alpha} u \varrho$, where $\varrho$ is the unique steady state of $\opL$. 
    However, $\opL(f\varrho) = \overline{\alpha} f \varrho$ for some $f\in L^\infty(\Omega)$ with $|f| = 1$ almost surely, by Lemma \ref{Lem:Eigenspectrum_koopman_in_opL}. 
    Therefore, by Proposition \ref{Prop:Polar_decomp} (a), we conclude that $u H_\theta = f H_\theta = H_\theta$, which gives the desired injectivity. 
\end{proof}
\begin{lemma}\label{Lem:The unitary representation is faithful}
    Assume $\opL$ is irreducible with unique steady state $\varrho$. 
    Then any $u\in G_\opL\setminus H_\theta$ satisfies $\tr{u \varrho} = 0$ almost surely. 
\end{lemma}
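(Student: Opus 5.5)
The plan is to show that the scalar random variable $f := \tr{u\varrho}$ is an eigenfunction of $\koopman{\theta}$ whose eigenvalue is the eigenvalue of $u$. From this, $f \neq 0$ would force $u \in H_\theta$ by Lemma \ref{Lem:G_mod_H_is_isomorphic_to_Gamma}.

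Fix $u\in G_\opL\setminus H_\theta$ and let $\alpha\in\Lambda_\opL^\dagger$ satisfy $\opLdag(u)=\alpha u$. Pointwise this reads $\phi^\dagger_{\theta(\omega)}(u_{\theta(\omega)}) = \alpha u_\omega$ almost surely. Set $f_\omega := \tr{u_\omega\varrho_\omega}$. Since $u$ is unitary and $\varrho$ is a state, $|f|\leq 1$, so $f\in L^\infty(\Omega)\subseteq L^1(\Omega)$. Using $\opL(\varrho)=\varrho$, i.e.\ $\varrho_{\theta(\omega)}=\phi_{\theta(\omega)}(\varrho_\omega)$, and duality, I compute
\begin{equation*}
    f_{\theta(\omega)}
    = \tr{u_{\theta(\omega)}\phi_{\theta(\omega)}(\varrho_\omega)}
    = \tr{\phi^\dagger_{\theta(\omega)}(u_{\theta(\omega)})\varrho_\omega}
    = \alpha f_\omega
\end{equation*}
for a.e.\ $\omega$. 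Thus $\koopman{\theta}(f)=\alpha f$.

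Next, I use $|\alpha|=1$ to get $\koopman{\theta}(|f|)=|f|$. By ergodicity (Proposition \ref{Prop:Basics_of_ergodicity}(d)), $|f|$ is almost surely equal to a constant $c\geq 0$. If $c>0$, then $f$ is a nonzero $L^1$ eigenfunction of $\koopman{\theta}$, so $\alpha\in\Lambda_\theta$. Then the isomorphism $G_\opL/H_\theta\to\Gamma_\opL$ of Lemma \ref{Lem:G_mod_H_is_isomorphic_to_Gamma} sends $uH_\theta$ to the trivial coset $\alpha\Lambda_\theta=\Lambda_\theta$. Injectivity of that isomorphism gives $u\in H_\theta$, which is a contradiction. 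Hence $c=0$, i.e.\ $\tr{u\varrho}=0$ almost surely.

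The only delicate point is the dichotomy "$f=0$ a.s.\ or $f$ is nowhere vanishing". This is exactly what the constancy of $|f|$ supplies, so that nonvanishing on a set of positive measure already yields a genuine eigenfunction. Everything else is routine bookkeeping with the conventions $\opLdag(u)_\omega=\phi^\dagger_{\theta(\omega)}(u_{\theta(\omega)})$ and $\opL(\varrho)=\varrho$.
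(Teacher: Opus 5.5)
Your proof is correct and follows the paper's route: both show that $\tr{u\varrho}$ is a $\koopman{\theta}$-eigenfunction with eigenvalue $\alpha\notin\Lambda_\theta$ (via injectivity in Lemma \ref{Lem:G_mod_H_is_isomorphic_to_Gamma}) and therefore vanishes. The paper gets the scalar eigen-relation by taking the trace of $\opL(u\varrho)=\overline{\alpha}u\varrho$ from Proposition \ref{Prop:Lambda_opL_is_Lambda_opL_dagger}(a), whereas you get it directly by duality from $\opL(\varrho)=\varrho$, which is slightly more elementary; your constancy-of-$|f|$ step is harmless but unnecessary, since any nonzero $f\in L^1(\Omega)$ with $\koopman{\theta}(f)=\alpha f$ already witnesses $\alpha\in\Lambda_\theta$.
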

\begin{proof}
    Let $u\in G_\opL$ and $\alpha\in\Lambda_\opL$ be such that $\opLdag(u) = \alpha u$. 
    By Proposition \ref{Prop:Lambda_opL_is_Lambda_opL_dagger}, we know that $\opL(u\varrho) = \overline{\alpha}u \varrho$, hence 
    \begin{equation}\label{Eqn:Trace_is_eigenfunction}
        \overline{\alpha}\operatorname{Tr}\seq{u\varrho}
        =
        \tr{\opL(u\varrho)}
        =
        \tr{\koopman{\theta^{-1}}\!\seq{u\varrho}}.
    \end{equation}
    Because $u \not\in H_\theta$, we know that $\alpha\not\in\Lambda_\theta$, hence $\overline{\alpha}\not\in\Lambda_\theta$. 
    Thus, (\ref{Eqn:Trace_is_eigenfunction}) implies $\tr{u\varrho} = 0$ almost surely, as claimed. 
\end{proof}
\begin{prop}\label{Prop:Gamma is a finite group}
    Assume $\opL$ is irreducible. 
    Then $|\Gamma_\mfL|\leq d^2$.
\end{prop}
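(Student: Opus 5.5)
We prove $|\Gamma_\opL|\leq d^2$ by building a set of pointwise-orthogonal matrices indexed by $\Gamma_\opL$ in a space of dimension $d^2$. By Lemma \ref{Lem:G_mod_H_is_isomorphic_to_Gamma}, it suffices to bound $|G_\opL/H_\theta|$.

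Suppose $u_1,\dots,u_m\in G_\opL$ represent distinct cosets of $H_\theta$. For $j\neq k$, the product $u_j^*u_k$ lies in $G_\opL$, since $G_\opL$ is a group, and $u_j^*u_k\notin H_\theta$ because the cosets are distinct. Lemma \ref{Lem:The unitary representation is faithful} then gives $\tr{u_j^*u_k\varrho}=0$ almost surely. Define the weighted inner product $\langle a,b\rangle_\omega := \tr{a^*b\varrho_\omega}$ on $\matrices$. Since $\varrho_\omega>0$ almost surely, this is a genuine inner product, namely $\innerHS{a\varrho_\omega^{1/2}}{b\varrho_\omega^{1/2}}$. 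Moreover $\langle u_j,u_j\rangle_\omega = \tr{\varrho_\omega} = 1$. Hence for almost every $\omega$, the matrices $u_{1;\omega},\dots,u_{m;\omega}$ are orthonormal in a $d^2$-dimensional inner product space, so $m\leq d^2$.

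The orthogonality relations are countably many for any countable family of representatives. So if $\Gamma_\opL$ were infinite, we could choose $m=d^2+1$ distinct cosets and obtain a contradiction on a full-measure set of $\omega$. Therefore $|G_\opL/H_\theta|\leq d^2$, and with it $|\Gamma_\opL|\leq d^2$.

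The main point needing care is that the products $u_j^*u_k$ really lie in $G_\opL\setminus H_\theta$. This is where the group structure from Proposition \ref{Prop:Lambda_opL_dagger_is_a_group} and the multiplicativity from Lemma \ref{Lem:Eigenmat_of_opLdag_in_multdom} enter. The eigenvalue of $u_j^*u_k$ is $\overline{\alpha_j}\alpha_k$, and this is not in $\Lambda_\theta$ precisely because the cosets differ, which is the input required by Lemma \ref{Lem:The unitary representation is faithful}. Everything else is linear algebra applied pointwise.
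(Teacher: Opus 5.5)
Your proposal is correct and is essentially the paper's proof. You reduce to bounding $|G_\opL/H_\theta|$ via Lemma \ref{Lem:G_mod_H_is_isomorphic_to_Gamma}, use Lemma \ref{Lem:The unitary representation is faithful} to get $\tr{u_j^*u_k\varrho}=0$ for representatives of distinct cosets, and conclude by orthogonality in the $\varrho_\omega$-weighted inner product on $\matrices$. Your handling of the almost-sure quantifiers, fixing $d^2+1$ cosets so only finitely many null sets are discarded, is if anything a bit cleaner than the paper's averaging step.
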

\begin{proof}
    By the previous lemma, it suffices to show $G_\opL/H_\theta$ is finite and bounded in size by $d^2$. 
    To do this, we show that any subset $S\subset G_\opL/H_\theta$ satisfies $|S|\leq d^2$.
    For any subset $S\subset G_\opL/H_\theta$ and an element $s\in G_\opL/H_\theta$, let $u_s\in G_\opL$ be a choice of representative of $s$ in $G_\opL$. 
    Notice that for any $s\neq t$ in $S$, we have that $u_s^*u_t\not\in H_\theta$. 
    In particular, by Lemma \ref{Lem:The unitary representation is faithful}, we know that $\tr{\varrho_\omega u_{s; \omega}^*u_{t; \omega}} = 0$ for almost every $\omega\in\Omega$. 
    On the other hand, we know that $\varrho>0$ almost surely, so 
    \begin{equation}
        \begin{split}
            \matrices\times\matrices &\to \mbC\\
            (a, b) &\mapsto \tr{\varrho_\omega a^* b}
        \end{split}
    \end{equation}
    defines a nondegenerate inner product on $\matrices$. 
    So, with respect to this inner product, we know that $\set{u_{s; \omega}}_{s\in S}$ forms an orthogonal set, hence $|\set{u_{s; \omega}}_{s\in S}|\leq d^2$ almost surely. 
    Therefore, $|S| = \avg{|\set{u_{s}}_{s\in S}|} \leq \operatorname{dim}(\matrices) = d^2$, which concludes the proof. 
\end{proof}
With Proposition \ref{Prop:Gamma is a finite group} in hand, the following proposition follows from basic group theory. 
We say that $\Lambda_\theta$ is torsion if for all $\beta\in\Lambda_\theta$, there is $n\in\mbN$ such that $\beta^n = 1$. 
Examples of $\theta$ with $\Lambda_\theta$ torsion include all weakly mixing $\theta$, in addition to cyclic permutations of finite sets. 
\begin{prop}\label{Prop:Cyclic group Gamma when torsion}
    If $\Lambda_\theta$ is torsion, then $\Gamma_\opL$ is a cyclic group.
\end{prop}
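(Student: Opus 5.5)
The plan is to reduce the statement to the structure theory of torsion subgroups of the circle group $\mbT$. Proposition \ref{Prop:Gamma is a finite group} will supply finiteness, and the torsion hypothesis on $\Lambda_\theta$ will then show that all of $\Lambda_\opL$ consists of roots of unity.

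\textbf{Step 1: $\Lambda_\opL$ is torsion.} By Corollary \ref{Cor:Lambda_is_a_group} and Lemma \ref{Lem:Eigenspectrum_koopman_in_opL}, $\Lambda_\theta\subseteq\Lambda_\opL\subseteq\mbT$ are abelian groups. By Proposition \ref{Prop:Gamma is a finite group}, $\Gamma_\opL$ is finite of order $m\leq d^2$. Fix $\alpha\in\Lambda_\opL$. Then $\alpha^m\in\Lambda_\theta$. Since $\Lambda_\theta$ is torsion, there is $k\in\mbN$ with $\alpha^{mk}=1$. Hence $\Lambda_\opL$ is a subgroup of the group $\mu_\infty$ of all roots of unity.

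\textbf{Step 2: primary decomposition.} The group $\mu_\infty$ is the direct sum over primes $p$ of the Prüfer groups $\mu_{p^\infty}$. Every torsion abelian group is the direct sum of its $p$-primary parts, so
\begin{equation*}
\Lambda_\opL=\bigoplus_p A_p \quad\text{and}\quad \Lambda_\theta=\bigoplus_p B_p,
\end{equation*}
where $B_p\subseteq A_p\subseteq\mu_{p^\infty}$. Taking quotients prime by prime gives $\Gamma_\opL\cong\bigoplus_p A_p/B_p$.

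\textbf{Step 3: each primary piece is cyclic.} The subgroups of $\mu_{p^\infty}$ are exactly the finite groups $\mu_{p^j}$ together with $\mu_{p^\infty}$ itself, and they form a chain. Consider the possible cases for $A_p/B_p$.
\begin{itemize}
\item If $A_p=\mu_{p^k}$ is finite, then $B_p=\mu_{p^j}$ with $j\leq k$, and $A_p/B_p\cong\mbZ/p^{k-j}\mbZ$ is cyclic.
\item If $A_p=\mu_{p^\infty}$ and $B_p=\mu_{p^\infty}$, the quotient is trivial.
\item If $A_p=\mu_{p^\infty}$ and $B_p$ is finite, the quotient is again isomorphic to $\mu_{p^\infty}$, which is infinite. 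This contradicts the finiteness of $\Gamma_\opL$, so this case does not occur.
\end{itemize}
In every admissible case, $A_p/B_p$ is a finite cyclic $p$-group. Moreover, only finitely many of these quotients are nontrivial, since $\Gamma_\opL$ is finite.

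\textbf{Step 4: conclusion.} $\Gamma_\opL$ is a finite direct sum of cyclic groups of pairwise coprime prime-power orders. By the Chinese remainder theorem, it is cyclic.

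\textbf{Main obstacle.} The only delicate point is that $\Lambda_\opL$ itself may be infinite, for instance a full Prüfer group when $\Lambda_\theta$ is. So one cannot simply invoke ``finite subgroups of $\mbT$ are cyclic''. Step 3 handles this by using the chain structure of subgroups of $\mu_{p^\infty}$ together with finiteness of the quotient.

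An alternative route to the same conclusion is to check that $\Gamma_\opL$ has at most $p$ elements of order dividing $p$ for every prime $p$. One would lift such elements to $A_p$ and use that $A_p$ is locally cyclic.
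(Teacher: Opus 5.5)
Your proof is correct, but it differs from the paper's in both halves.

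\emph{Torsion step.} The paper invokes a splitting $\Lambda\cong\operatorname{Tor}(\Lambda)\oplus\operatorname{Irr}(\Lambda)$ for subgroups of $\mbT$. From this it writes $\Gamma_\opL$ as a direct sum of a piece built from the $\operatorname{Tor}$ parts and a piece built from the $\operatorname{Irr}$ parts. Finiteness of $\Gamma_\opL$ then forces the torsion-free group $\operatorname{Irr}(\Lambda_\opL)$ to be finite, hence trivial. Your Step 1 reaches the same conclusion in one line by Lagrange ($\alpha^{|\Gamma_\opL|}\in\Lambda_\theta$) and needs no splitting. That is also more robust: such a splitting does not hold for arbitrary subgroups of $\mbT$. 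All the paper really uses there is that $\Gamma_\opL$ surjects onto $\Lambda_\opL/\operatorname{Tor}(\Lambda_\opL)$ when $\Lambda_\theta$ is torsion.

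\emph{Cyclicity step.} The paper does not decompose into primes. It writes the torsion group $\Lambda_\opL$ as an increasing union of cyclic groups $\mu_{n_k}$ and notes $\Lambda_\theta\cap\mu_{n_k}=\mu_{m_k}$. Thus $\Gamma_\opL$ is an increasing union of cyclic groups of orders $n_k/m_k$, and finiteness forces the chain to stabilize. This is essentially your ``alternative route'' via local cyclicity. It is shorter and needs neither your case analysis nor the Chinese remainder theorem. Your primary decomposition instead describes $\Gamma_\opL$ prime by prime. For instance, it shows explicitly that any full Pr\"ufer component of $\Lambda_\opL$ must already lie in $\Lambda_\theta$.
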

\begin{proof}
    See Appendix \ref{App:Group}.
\end{proof}

%%%
 
\section{Periodic properties}\label{Sec:Spectral theory}
Now that we have established the basic structural properties of $\Lambda_\opL$, we may now describe the way in which the periodic properties of $\opL$ (hence the corresponding ergodic quantum process $\eqp$) are encoded in it. 
To streamline presentation, let us introduce some notation and terminology.
Note that, as a consequence of Proposition \ref{Prop:Gamma is a finite group}, for all $\alpha\in\Lambda_\opL$, there is a unique minimal $N\in\mbN$ such that $\alpha^N\in\Lambda_\theta$: we write $N_\alpha$ to denote this minimal $N$. 
By a similar token, from Lemma \ref{Lem:G_mod_H_is_isomorphic_to_Gamma}, for all eigenmatrices $u\in G_\opL$ corresponding to $\alpha\in\Lambda_\opL$, $u^{N_\alpha} = f\mbI$ for some eigenfunction $f$ of $\koopman{\theta}$ corresponding to $\alpha^{N_\alpha}$. 
Because we make reference to this collection of data so frequently, we shall give it a name.
\begin{definition}[Eigentuples of \texorpdfstring{$\opL$}{l}]\label{Def:Eigentuples}
    We call a tuple $\seq{\alpha, \beta, u, f}$ an eigentuple for $\opL$ if $\alpha\in\Lambda_\opL$, $\beta = \alpha^{N_\alpha}$, $u\in G_\opL$ is an eigenmatrix corresponding to $\alpha$, and $f$ is the eigenfunction of $\koopman{\theta}$ corresponding to $\beta\in\Lambda_\theta$ satisfying $u^{N_\alpha} = f\mbI$. 
\end{definition}
For $t\in\mbR$, write $\expnon{t}$ to denote $e^{2\pi i t}$, and for $N\in\mbN$, write $\expn{N}{t}$ to denote $e^{2\pi i t/N}$.
If $f:\Omega\to\mbT$ is a measurable function, there is a unique measurable function $a_f:\Omega\to [0, 1)$ such that $f = e(a_f)$.
In particular, if $f$ is an eigenfunction of $\theta$ with eigenvalue $\expnon{b}$, then by ergodicity we know that $|{f}|$ is constant, so we know that $f = |f|\expnon{a_f}$ for some measurable function $a_f:\Omega\to [0, 1)$ with $\expnon{a_f\circ\theta} = \expnon{a_f + b}$. 
Given $\gamma = e(t)\in\mbT$ and $n\in\mbN$, we let 
\begin{equation}
    \operatorname{roots}_n(\gamma)
    =
    \set{\expn{n}{t + k}}_{k\in\mbZ/n\mbZ},
\end{equation}
where we have identified $\mbZ/n\mbZ$ with $\set{0, \dots, n -1}$. 
We call a function $f:\Omega\to\mbC$ unimodular if $|f| = 1$ almost surely. 
If $f$ is a unimodular function, we write $\operatorname{roots}_n(f)$ to denote the random set defined by $\operatorname{roots}_n(f)_\omega = \operatorname{roots}_n(f(\omega))$.
The following key technical result gives us a first step for extracting periodicity from eigenspectrum. 
\begin{prop}\label{Prop:Almost sure spectrum of Ualphas}
    Assume $\opL$ is irreducible, and let $\seq{\alpha, \beta, u, f}$ be an eigentuple for $\opL$. 
    For a.e. $\omega\in\Omega$, 
    \begin{equation}
        \sigma(u_{\omega})
        =
        \operatorname{roots}_{{N_\alpha}}(f(\omega)).
    \end{equation} 
    In particular, $|\sigma(u_{\omega})| = {N_\alpha}$ and $\sigma(u_{\theta(\omega)}) = \alpha \sigma(u_{\omega})$ for a.e. $\omega\in\Omega$. 
\end{prop}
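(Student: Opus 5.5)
The plan is to compare the random spectral decomposition of $u_\omega$ with the weights $\tr{\varrho_\omega P}$ that $\varrho$ assigns to its spectral projections, and then use the trace-orthogonality of Lemma \ref{Lem:The unitary representation is faithful} to force all $N_\alpha$ candidate eigenvalues to appear with equal weight. Write $N = N_\alpha$.

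\emph{Step 1: an upper bound for the spectrum.} Since $u^{N} = f\mbI$ almost surely, every eigenvalue $\lambda$ of $u_\omega$ satisfies $\lambda^N = f(\omega)$. Hence $\sigma(u_\omega)\subseteq \operatorname{roots}_N(f(\omega))$ almost surely. Fix a measurable choice $r(\omega)$ of an $N$th root of $f(\omega)$, for instance $r = \expn{N}{a_f}$, and set $\zeta = \expnon{1/N}$. Then the spectral decomposition reads
\begin{equation*}
    u_\omega = \sum_{j\in\mbZ/N\mbZ} r(\omega)\zeta^j P_{j;\omega},
\end{equation*}
where $P_{j;\omega}$ is the spectral projection of $u_\omega$ onto the eigenvalue $r(\omega)\zeta^j$, allowed to be $0$. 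These projections are polynomials in $u_\omega$, so they can be chosen measurably.

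\emph{Step 2: the Fourier argument.} For $1\le k\le N-1$, the unitary $u^k$ lies in $G_\opL$ with eigenvalue $\alpha^k$, because $G_\opL$ is a group and $\opLdag$ is multiplicative on it by Lemma \ref{Lem:Eigenmat_of_opLdag_in_multdom}. By minimality of $N$, $\alpha^k\notin\Lambda_\theta$, so $u^k\notin H_\theta$. Lemma \ref{Lem:The unitary representation is faithful} then gives $\tr{u^k\varrho}=0$ almost surely. Put $w_j(\omega) = \tr{\varrho_\omega P_{j;\omega}}\ge 0$. Then for every $1\le k\le N-1$,
\begin{equation*}
    0 = \tr{\varrho_\omega u_\omega^k} = r(\omega)^k\sum_{j} \zeta^{jk} w_j(\omega),
\end{equation*}
and also $\sum_j w_j(\omega) = \tr{\varrho_\omega}=1$. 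So the discrete Fourier transform of $(w_j)_{j\in\mbZ/N\mbZ}$ vanishes at every nonzero frequency. Inverting, $w_j(\omega) = 1/N$ for all $j$. Since $\varrho_\omega>0$, we have $w_j>0$ exactly when $P_{j;\omega}\neq 0$. Therefore every root $r(\omega)\zeta^j$ is an eigenvalue, which gives $\sigma(u_\omega) = \operatorname{roots}_N(f(\omega))$ and $|\sigma(u_\omega)|=N$.

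\emph{Step 3: the rotation property.} Evaluating $\opLdag(u)=\alpha u$ pointwise gives $\phi^\dagger_{\theta(\omega)}(u_{\theta(\omega)}) = \alpha u_\omega$, with $u_{\theta(\omega)}$ in the multiplicative domain of $\phi^\dagger_{\theta(\omega)}$ by Lemmas \ref{Lem:Mult_dom_direct_integral} and \ref{Lem:Eigenmat_of_opLdag_in_multdom}. The multiplicative domain is a $C^*$-subalgebra on which $\phi^\dagger_{\theta(\omega)}$ is a unital $*$-homomorphism. Applying it to the spectral decomposition of $u_{\theta(\omega)}$ writes $\alpha u_\omega$ as a combination of the eigenvalues of $u_{\theta(\omega)}$ against mutually orthogonal projections summing to $\mbI$. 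Hence $\alpha\sigma(u_\omega)\subseteq\sigma(u_{\theta(\omega)})$. Both sets have exactly $N$ elements by Step 2, so they are equal.

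Alternatively, one can check this directly: $f(\theta(\omega))=\beta f(\omega)$ with $\beta=\alpha^N$, so $\operatorname{roots}_N(f(\theta(\omega))) = \alpha\operatorname{roots}_N(f(\omega))$.

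The main obstacle is Step 2, namely seeing that the trace-orthogonality of the powers $u^k$ forces uniform weights. Everything hinges on the minimality of $N_\alpha$ and on the strict positivity $\varrho>0$; the remaining steps are bookkeeping, including the measurable choice of $r$ and of the spectral projections.
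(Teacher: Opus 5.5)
Your proof is correct, and the key step, showing that all $N_\alpha$ roots occur, takes a genuinely different route from the paper's.

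\textbf{The paper's route.} It first proves the rotation property. It shows that $\phi^\dagger_{\theta(\omega)}$ restricts to a $*$-isomorphism $C^*(u_{\theta(\omega)})\to C^*(u_\omega)$ and deduces $\sigma(u_{\theta(\omega)})\subseteq\alpha\sigma(u_\omega)$. It then uses Lemma \ref{Lem:Increasing_implies_constant} to make $|\sigma(u_\omega)|=\ell$ almost surely constant. Only after that does it obtain the cardinality. It pairs the square-free annihilating polynomial $\prod_{\lambda\in\sigma(u_\omega)}(x-\lambda)$, of degree $\ell$, against $\varrho$. If $\ell<N_\alpha$, the vanishing of $\tr{u^k\varrho}$ for $1\le k\le \ell$ leaves only the nonzero constant term, which is a contradiction.

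\textbf{Your route.} You invert the discrete Fourier transform of the weights $w_j=\tr{\varrho_\omega P_{j;\omega}}$. The input is the same: Lemma \ref{Lem:The unitary representation is faithful} applied to $u^k$ for $1\le k\le N_\alpha-1$, together with $\varrho>0$.

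\textbf{What your route buys.}
\begin{itemize}
\item It yields strictly more: $\tr{\varrho\, p_k}=N_\alpha^{-1}$ for every spectral projection. The paper only reaches this later, as Corollary \ref{Cor:trace_against_p_is_N_alpha_inverse}, by a detour through steady-state uniqueness (Corollaries \ref{Cor:varrho_k_results} and \ref{Cor:varrho_is_sum_of_varrho_k}).
\item It works pointwise in $\omega$, so you never need the almost-sure constancy of $|\sigma(u_\omega)|$.
\item Your second derivation of the rotation property, from $f\circ\theta=\alpha^{N_\alpha}f$, removes the need for the $C^*(u)$-isomorphism in this proposition altogether. Your multiplicative-domain version is also fine, and it does not even need injectivity.
\item It sidesteps a small slip in the paper, which writes the constant term of that polynomial as $\det(u_\omega)$. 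The constant term is actually $\prod_{\lambda\in\sigma(u_\omega)}(-\lambda)$, which differs from the determinant when eigenvalues have multiplicity. The paper's argument survives because only the nonvanishing of this term matters.
\end{itemize}
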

\begin{proof}
    From Lemmas \ref{Lem:Mult_dom_direct_integral} and \ref{Lem:Eigenmat_of_opLdag_in_multdom}, $u_\omega\in\multdom{\phi_{\omega}^\dagger}$ for a.e. $\omega\in\Omega$. 
    We claim that, for a.e. $\omega\in\Omega$, the map $\phi_{\theta(\omega)}^\dagger$ defines an isometric $*$-isomorphism of $C^*$-algebras
    \begin{equation}\label{Eqn:Isometric_phidagger}
        \phi_{\theta(\omega)}^\dagger:C^*(u_{\theta(\omega)})
        \to 
        C^*(u_{\omega}),
    \end{equation}
    where for $a\in\matrices$, $C^*(a)$ denotes the $C^*$-algebra generated by $a$. 
    Indeed, the map (\ref{Eqn:Isometric_phidagger}) is a $*$-homomorphism by the positivity of $\phi_{\theta(\omega)}^\dagger$ and the fact that $u_\omega\in\multdom{\phi_{\omega}^\dagger}$ for a.e. $\omega$, and because $\phi_{\theta(\omega)}^\dagger(u_{\theta(\omega)}) = \alpha u_\omega$ for a.e. $\omega$, we have that (\ref{Eqn:Isometric_phidagger}) is a $*$-isomorphism.
    Because $*$-isomorphisms between $C^*$-algebras are isometric \cite[Theorem 2.1.7]{Murphy2007C-AlgebrasTheory}, so the claim is proved. 
    Therefore, by the claim, for a.e. $\omega\in\Omega$, if $p\in\matrices$ is any spectral projection of $u_{\theta(\omega)}$, $p\in C^*(u_{\theta(\omega)})$ hence $\phi_{\theta(\omega)}^\dagger(p)$ is a nonzero projection in $C^*(u_{\omega})$. 
    So, if we let $\set{p_{\lambda; \omega}}_{\lambda\in\sigma(u_{\omega})}$ denote the set of eigenprojections for $u_{\omega}$, then when we write 
    \begin{equation}
        u_{\omega}
        =
        \sum_{\lambda\in\sigma(u_{\omega})}
        \lambda p_{\lambda; \omega},
    \end{equation}
    the equation $\opLdag(u) = \alpha u$ gives  
    \begin{align}
         \sum_{\lambda\in \sigma(u_{\omega})}
        \alpha \lambda p_{\lambda; \omega}
        = 
        \alpha u_{\omega}
        &= 
        \phi_{\theta(\omega)}^\dagger(u_{\theta(\omega)})
        =
        \sum_{\lambda\in\sigma(u_{\theta(\omega)})}
        \lambda \phi_{\theta(\omega)}^\dagger(p_{\lambda; \theta(\omega)})
    \end{align}
    for almost every $\omega\in\Omega$. 
    So, since $\phi_{\theta(\omega)}^\dagger(p_{\lambda; \theta(\omega)})$ is a nonzero projection, we conclude that 
    \begin{equation}
         \sigma(u_{\theta(\omega)})
        \subseteq 
        \alpha\sigma(u_{\omega})
    \end{equation}
    holds for almost every $\omega\in\Omega$. 
    In particular, $\omega\mapsto g(\omega) = |\sigma(u_{\omega})|$ satisfies $0 \leq \koopman{\theta}(g)\leq g \leq d$, so by Lemma \ref{Lem:Increasing_implies_constant}, $g$ is almost surely constant. 
    Thus, the almost sure inclusion above is actually the almost sure equality
    \begin{equation}
        \sigma(u_{\theta(\omega)})
        = 
        \alpha\sigma(u_{\omega})
    \end{equation}
    To conclude the proof, therefore, all that remains is to show that $\sigma(u_{\omega}) =  \operatorname{roots}_{{N_\alpha}}(f(\omega))$ for almost every $\omega\in\Omega$. 
    To see this, we begin by noting $u^{{N_\alpha}} = f \mbI$ immediately implies the almost sure inclusion 
    \begin{equation}
        \sigma(u_{\omega})
            \subseteq 
        \operatorname{roots}_{{N_\alpha}}(f(\omega)).
    \end{equation}
    So, to show these sets are almost surely equal, it suffices to show that $|\sigma(u)| = {N_\alpha}$ almost surely.
    Now, consider the random polynomial
    \begin{equation}
        \chi_\omega(x)
            := 
        \prod_{\lambda\in\sigma(u_{\omega})}
        (x - \lambda)
        =
        \det(u_\omega)
        +
        \sum_{k=1}^\ell a_{k; \omega}x^k,
    \end{equation}
    where $\ell$ is the almost everywhere value of $|\sigma(u)|$, and $\operatorname{det}$ denotes the determinant.
    Then $\chi_\omega(u_{\omega}) = 0$ almost surely, so, letting $\varrho\in\rstates$ denote the unique fixed state of $\opL$, we know $\tr{u^k\varrho} = 0$ almost surely for all $k = 1, \dots, {N_\alpha} - 1$ by Lemma \ref{Lem:The unitary representation is faithful}. 
    So if, for a contradiction, $|\sigma(u)| = \ell < {N_\alpha}$, we would have that 
    \begin{align}
        0 = \tr{\chi_\omega(u_\omega)\varrho_\omega}
        &= 
        \operatorname{det}(u_\omega) + 
        \sum_{k=1}^\ell 
        a_{k; \omega}\tr{u_\omega^k\varrho_\omega}\\
        &= 
        \operatorname{det}(u_\omega)
    \end{align}
    almost surely, which contradicts the unitarity of $u$.
    Thus, it must be that $|\sigma(u)| = \ell = {N_\alpha}$ almost surely, which concludes the proof. 
\end{proof}
We may now give a proof of Theorem \ref{Thm:Finite_group}.
Recall the statement of this theorem. 
\FiniteGroup*
\begin{proof}
    Note that $\GammaGroup = \Gamma_\opL$, so $|\GammaGroup|\leq d^2$ follows from Proposition \ref{Prop:Gamma is a finite group}.
    The fact that each $\alpha\in\PerSpecEQP$ is simple follows from Proposition \ref{Prop:Polar_decomp} (a), and the fact that $N_\alpha \leq d$ follows from Proposition \ref{Prop:Almost sure spectrum of Ualphas} since $N_\alpha = \avg{|\sigma(u)|}\leq d$. 
\end{proof}
We shall also now take this opportunity to note the following improvement of Propositions \ref{Prop:Gamma is a finite group} and \ref{Prop:Cyclic group Gamma when torsion}.
\begin{cor}\label{Cor:theta_torsion_group_order_leq_d}
    Assume $\opL$ is irreducible. 
    If $\Lambda_\theta$ is torsion, then $\Gamma_\opL$ is a cyclic group with $|\Gamma_\opL|\leq d$.
\end{cor}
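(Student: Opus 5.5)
By Proposition \ref{Prop:Cyclic group Gamma when torsion}, the hypothesis that $\Lambda_\theta$ is torsion already gives that $\Gamma_\opL$ is cyclic. So the only new content is the sharper bound $|\Gamma_\opL|\leq d$, in place of the bound $d^2$ from Proposition \ref{Prop:Gamma is a finite group}. The plan is to pass from the size of a cyclic group to the order of a generator, and then to the spectral count in Proposition \ref{Prop:Almost sure spectrum of Ualphas}.

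Since $\Gamma_\opL$ is cyclic and, by Proposition \ref{Prop:Gamma is a finite group}, finite, choose $\alpha\in\Lambda_\opL$ whose coset $\alpha\Lambda_\theta$ generates $\Gamma_\opL$. Then $|\Gamma_\opL|$ equals the order of $\alpha\Lambda_\theta$ in the quotient group. This order is the least $n\in\mbN$ with $(\alpha\Lambda_\theta)^n=\Lambda_\theta$, that is, with $\alpha^n\in\Lambda_\theta$. By definition, this least $n$ is $N_\alpha$, so $|\Gamma_\opL|=N_\alpha$.

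To bound $N_\alpha$, form an eigentuple $\seq{\alpha,\beta,u,f}$ as in Definition \ref{Def:Eigentuples}. Proposition \ref{Prop:Almost sure spectrum of Ualphas} gives $|\sigma(u_\omega)|=N_\alpha$ for a.e. $\omega$. Since $u_\omega\in\matrices$ has at most $d$ distinct eigenvalues, $N_\alpha\leq d$; this is exactly the bound recorded in Theorem \ref{Thm:Finite_group}. Combining the two steps gives $|\Gamma_\opL|=N_\alpha\leq d$.

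No step here is a real obstacle. The substantive work, cyclicity in the torsion case, is deferred to Appendix \ref{App:Group} through Proposition \ref{Prop:Cyclic group Gamma when torsion}. The only point needing care is identifying the order of the coset $\alpha\Lambda_\theta$ with $N_\alpha$, which holds directly from the definitions.
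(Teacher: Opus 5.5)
Your proposal is correct and is essentially the paper's proof. The paper gets cyclicity from Proposition \ref{Prop:Cyclic group Gamma when torsion} and the bound $N_\alpha\leq d$ from Proposition \ref{Prop:Almost sure spectrum of Ualphas}, then concludes that every element of the cyclic group $\Gamma_\opL$ has order at most $d$; your explicit choice of a generator $\alpha\Lambda_\theta$ with order $N_\alpha$ is that same final step.
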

\begin{proof}
    By Proposition \ref{Prop:Cyclic group Gamma when torsion}, we know that $\Lambda_\theta$ being torsion implies $\Gamma_\opL$ is a finite cyclic group. 
    By Proposition \ref{Prop:Almost sure spectrum of Ualphas}, we know that ${N_\alpha}\leq d$ for any $\alpha\in\Lambda_\opL$.
    Thus, every element of the cyclic group $\Gamma_\opL$ has order bounded by $d$, which shows $|\Gamma_\opL|\leq d$. 
\end{proof}
% %
% %
% In particular, whenever $\theta$ is weakly mixing, $\Gamma_\opL$ is a finite cyclic group of order at most $d$.
%
%
\begin{remark}\label{Rem:Sharp}
    This upper bound is sharp: indeed, let $\set{\ket{e_i}}_{i=1}^d$ denote the standard basis of $\mbC^d$ and let $\psi:\matrices\to\matrices$ be the map defined by 
    \begin{equation}
        \psi(a)
        =
        \sum_{i=1}^dv_i a v_i^*
    \end{equation}
    where $v_i = \ketbra{e_i}{e_{i+1}}$ for $i = 1, \dots, d-1$ and $v_d = \ketbra{e_d}{e_1}$.
    One immediately verifies that $\psi$ defines a unital quantum channel. 
    Moreover, $\psi$ is irreducible: to see this, it suffices to notice that $d^{-1}\mbI$ is the unique element of $\states$ in $\operatorname{Fix}(\psi)$, which follows from the fact that, if $\psi(\rho) = \rho$ for some $\rho\in\states$, $\rho$ is necessarily diagonal, so by the structure of $\psi$, $\rho = d^{-1}\mbI$. 
    It is clear that if we let $\gamma = \exp({2\pi i/d})$ and define $w\in\matrices$ to be the unitary matrix $w = \sum_{k=1}^d \gamma^k \ketbra{e_k}{e_k}$, then $\psi(w) = \gamma w$, so by \cite{Evans1978SpectralC-Algebras} we have that $\perspec{\psi} \cong \mbZ/d\mbZ$, i.e., $|\Gamma_\opL| = d$. 
    (In this case, $\Omega = \set{1}$ is the trivial probability space so $\opL = \psi\in\bops{\matrices}$, and $\Gamma_\opL = \perspec{\psi}$.)
\end{remark}
We now prove Theorem \ref{Thm:Any_group}. 
\AnyGroup*
\begin{proof}
    For any abelian group $G$, we have that $G\cong \mbZ/n_1\mbZ\oplus\cdots\oplus\mbZ/n_k\mbZ$ for some $\set{n_1, \dots, n_k}\subset\mbN$. 
    Let $\Omega = \mbT^k$ with the uniform measure and define $\theta:\mbT^k\to\mbT^k$ by 
    \begin{equation*}
        \theta(\omega_1, \dots, \omega_k)
        =
        (\omega_1 + \alpha_1, \dots, \omega_k + \alpha_k)
    \end{equation*}
    for some rationally independent $\alpha_1, \dots, \alpha_k\in [0, 1]\setminus\mbQ$, where we have identified $\mbT$ with $\mbR/\mbZ$. 
    It is a standard fact that $\theta$ defines an ergodic map for the uniform measure, and that 
    \begin{equation*}
        \Lambda_\theta 
        =
        \set{
            e^{2\pi i (m_1\alpha_1 + \cdots + m_k\alpha_k)}
            \,\,:\,\, m_1, \dots, m_k\in\mbZ
        }.
    \end{equation*}
    Now let $d = n_1\cdots n_k$, and identify $\matrices$ with the $k$-fold tensor product 
    \begin{equation*}
        \matrices 
        \cong 
        \mbM_{n_1}\otimes\cdots\otimes \mbM_{n_k}.
    \end{equation*}
    For all $j\in\set{1, \dots, k}$, let $\Sigma^{(j)}_1$ and $\Sigma^{(j)}_3$ be the shift and clock matrices on $\mbM_{n_j}$, i.e., 
    \begin{equation*}
        \Sigma^{(j)}_1
        =
        \left[ 
        \begin{array}{cccccc}
            0 & 0 & 0 & \cdots & 0 & 1\\
            1 & 0 & 0 & \cdots & 0 & 0\\
            0 & 1 & 0 & \cdots & 0 & 0\\
            0 & 0 & 1 & \cdots & 0 & 0\\
            \vdots & \vdots & \vdots & \ddots & \vdots& \vdots\\
            0 & 0 & 0 & \cdots & 1 & 0
        \end{array}
        \right]
        \quad 
        \text{and}\quad 
        \Sigma^{(j)}_3
        =
        \operatorname{diag}(1, e^{2\pi i / n_j}, \dots, e^{2\pi i(n_j - 1) / n_j}). 
    \end{equation*}
    These satisfy the commutation relations $\Sigma^{(j)}_3\Sigma^{(j)}_1 = e^{2\pi i/n_j}\Sigma^{(j)}_1\Sigma^{(j)}_3$, and $(\Sigma^{(j)}_1)^{n_j} = (\Sigma^{(j)}_3)^{n_j} = I_{n_j}$. 
    We then define 
    \begin{equation*}
        \begin{split}
            S_j 
            =
        I_{n_1}\otimes\cdots \otimes I_{n_{j-1}}\otimes \Sigma^{(j)}_1\otimes I_{n_{j+1}}\otimes\cdots I_{n_k}, \\
        C_j 
        =
        I_{n_1}\otimes\cdots \otimes I_{n_{j-1}}\otimes \Sigma^{(j)}_3\otimes I_{n_{j+1}}\otimes\cdots I_{n_k}
        \end{split}
    \end{equation*}
    which are in $\mbM_d$. 
    Define 
    \begin{equation*}
        \Delta:\matrices\to\matrices,\quad \Delta(a)
        =
        \frac{1}{d}
        \sum_{j_1=1}^{n_1}
        \cdots
        \sum_{j_k = 1}^{n_k}
        C_{j_1}\cdots C_{j_k}
        a
        (C_{j_1}\cdots C_{j_k})^*,
    \end{equation*}
    which projects $a$ onto its diagonal components.

    For $j\in\set{1, \dots, k}$, define $f_j:\mbT^k\to\mbC$ by 
    \begin{equation*}
        f_j(\omega_1, \dots, \omega_k)
            =
        e^{2\pi i \omega_j/n_j},
    \end{equation*}
    and notice that there is a measurable function $c_j:\mbT^k\to\set{e^{2\pi i m/n_j}\,\,:\,\, m\in\set{1, \dots, n_j}}$ depending only on the $j$th coordinate such that 
    \begin{equation*}
        f_j(\theta(\omega_1, \dots, \omega_k))
        =
         e^{2\pi i (\omega_j + \alpha_j))/n_j}
        =
        e^{2\pi i \alpha_j/n_j}
        c_j(\omega_1, \dots, \omega_k)f_j(\omega_1, \dots, \omega_k).
    \end{equation*}
    Write $c_j(\omega_1, \dots, \omega_k) = e^{2\pi i p_j(\omega_1, \dots, \omega_k)/n_j}$ for a measurable function $p_j:\mbT^k\to\set{1, \dots, n_j}$. 
    We may now define our ergodic quantum process via the random quantum channel 
    \begin{equation*}
        \phi_{\omega}(a)
            := 
        V_\omega \Delta(a)V_\omega^*
        \quad \text{where }\quad 
        V_\omega 
        =
        S_1^{p_1(\omega)}\cdots S_k^{p_k(\omega)}.
    \end{equation*}
    Because $V$ is unitary, this is a unital quantum channel. 
    We now show that the quantum process defined by $\phi$ is irreducible, and since $\phi$ is a unital quantum channel, it suffices to show that 
    \begin{equation*}
        \phi_{\theta(\omega)}(x_{\theta(\omega)})
        =
        x_{\omega}
    \end{equation*}
    implies $x\in\mbC I$, i.e., we show the adjoint process is irreducible. 
    To do this, we first note that $\set{S_{\boldsymbol{v}} C_{\boldsymbol{u}}}_{\boldsymbol{v}, \boldsymbol{u}}$ forms an orthogonal basis of $\matrices$, where for a multi-index $\boldsymbol{v} = (v_1, \dots, v_k)\in [n_1]\times\cdots \times[n_k]$, we take 
    \begin{equation*}
        S_{\boldsymbol{v}}
        =
        \prod_{j=1}^k S_{j}^{v_j}
    \end{equation*}
    and $C_{\boldsymbol{u}}$ is defined similarly. 
    So, any $x$ may be written 
    \begin{equation*}
        x 
            =
        \sum_{\boldsymbol{v}, \boldsymbol{u}}
        g_{\boldsymbol{v}, \boldsymbol{u}} S_{\boldsymbol{v}}  C_{\boldsymbol{u}}
    \end{equation*}
    where $g_{\boldsymbol{v}, \boldsymbol{u}}:\mbT^k\to\mbC$ is measurable. 
    Because $\Delta$ projects onto the diagonal components, we see that 
    \begin{equation*}
        \phi_{\theta(\omega)}(x_{\theta(\omega)})
        =
        \phi_{\theta(\omega)}\seq{
        \sum_{\boldsymbol{u}}
        g_{\boldsymbol{0}, \boldsymbol{u}}(\theta(\omega))C_{\boldsymbol{u}}}.
    \end{equation*}
    Therefore, by the commutation relations between the clock and shift matrices, and the fact that $\phi_{\theta(\omega)}(x_{\theta(\omega)})
        =
        x_{\omega}$, we see that 
        \begin{equation*}
            g_{\boldsymbol{0}, \boldsymbol{u}}(\omega)
            =
             g_{\boldsymbol{0}, \boldsymbol{u}}(\theta(\omega))
             \prod_{j=1}^k c_{j}(\omega)^{-u_j}
        \end{equation*}
    holds almost surely. 
    But by definition, $c_{j}(\omega)^{-u_j} = e^{2\pi i u_j\alpha_j/n_j} f_j(\theta(\omega))^{-u_j} f_j(\omega)^{u_j}$, whence we conclude from the above that 
    \begin{equation*}
        h_{\boldsymbol{u}}(\theta(\omega))
        =
         \seq{\prod_{j=1}^k e^{-2\pi i u_j\alpha_j/n_j} }
         h_{\boldsymbol{u}}(\omega)
    \end{equation*}
    holds almost surely, where $h_{\boldsymbol{u}}(\omega) =  g_{\boldsymbol{0}, \boldsymbol{u}}(\omega)\prod_{j=1}^k f_j(\omega)^{-u_j}$. 
    However, $\prod_{j=1}^k e^{-2\pi i u_j\alpha_j/n_j}\in \Lambda_\theta$ if and only if $u_j = 0$ for all $j$, from which we conclude $h_{\boldsymbol{u}} = 0$ almost surely hence $g_{\boldsymbol{0}, \boldsymbol{u}} = 0$ almost surely whenever $\boldsymbol{u}\neq 0$. 
    This also shows that $g_{0, 0} = \lambda$ almost surely for some $\lambda\in\mbC$, so $x\in\mbC I$ and $\Phi$ is irreducible. 
    To see that $\GammaGroup$ for this $\Phi$ is isomorphic to $G$, we simply note that if for $\boldsymbol{m}\in [n_1]\times\cdots\times[n_k]$ we define 
    \begin{equation*}
        u_{\boldsymbol{m};\omega}
            :=
        \prod_{j=1}^k 
        e^{2\pi i \omega_j m_j/n_j}
        C_j^{m_j},
    \end{equation*}
    then we see that $u_{\boldsymbol{m}}(\omega)$ is invariant under $\Delta$, and therefore by the commutation relations between the clock and shift matrices, one computes 
    \begin{equation}
        \phi_{\theta(\omega)}(u_{\boldsymbol{m}; \theta(\omega)})
        =
        \seq{ \prod_{j=1}^k 
        e^{2\pi i \alpha_j m_j/n_j}}
        u_{\boldsymbol{m}; \omega},
    \end{equation}
    which shows that $\prod_{j=1}^k 
        e^{2\pi i \alpha_j m_j/n_j}\in \Lambda_\Phi$, and it is then clear that $\GammaGroup\cong G$, concluding the proof. 
\end{proof}
\subsection{Projections encoding periodicity}
Let us extract from Proposition \ref{Prop:Almost sure spectrum of Ualphas} an important piece of information, namely, that there is a measurable way to select the spectral projections of $u_\omega$. 
Specifically, for $\alpha\in\Lambda_\opL$, let $\mcP_\alpha = \set{p_k}_{k\in\mbZ/N_\alpha\mbZ}$ be the set of random projections defined as follows: 
let $\seq{\alpha, \beta, u, f}$ be an eigentuple for $\Phi$. 
For $k\in\mbZ/N_\alpha\mbZ$, we have that $\omega\mapsto \expn{N_\alpha}{a_f(\omega) + k}$ is a measurable function, and, moreover, $\expn{N_\alpha}{a_f(\omega) + k}\in\sigma(u_\omega)$ almost surely. 
We define $p_k\in\Linfty{\Omega}{\matrices}$ to be the random projection such that $p_{k; \omega}$ is spectral projection of $u_{\alpha; \omega}$ corresponding to $\expn{N_\alpha}{a_f(\omega) + k}$.
This is independent of choice of $u\in G_\opL$ since $\opLdag(v) = \alpha v$ for some $v\in G_\opL$ implies $v = \gamma u$ for some $\gamma\in\mbT$, hence the spectral projections of $v$ and $u$ coincide almost everywhere. 
Note also that, in the above, we saw that $\phi^\dagger_{\theta(\omega)}(p_{k; \theta(\omega)})$ was a projection for all $k$, due to $\phi^\dagger_{\theta(\omega)}: C^*(u_{\theta(\omega)})\to C^*(u_{\omega})$ being a $C^*$-algebra isomorphism. 
More generally, for any $n$, this shows that $\phi^\dagger_{\theta(\omega)}\circ\cdots\circ\phi^\dagger_{\theta^n(\omega)}(p_{k; \theta^n(\omega)})$ is a projection for all $k$. 
That is, $\seq{\opLdag}^n\!(p_k)\in\Linfty{\Omega}{\matrices}$ defines a projection for all $n$. 
We record this discussion as a lemma. 
\begin{lemma}\label{Lem:opLdag_of_p_is_a_projection}
    Assume $\opL$ is irreducible. 
    For any $\alpha\in\Lambda_\opL$ and $p\in\mcP_\alpha$, $\seq{\opLdag}^n\!(p)$ defines a projection for all $n\in\mbN$. 
\end{lemma}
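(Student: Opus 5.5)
We argue by induction on $n$, using the isomorphism from the proof of Proposition \ref{Prop:Almost sure spectrum of Ualphas}. Fix an eigentuple $\seq{\alpha, \beta, u, f}$ with $p = p_k\in\mcP_\alpha$. By definition, $p_{k;\omega}$ is the spectral projection of $u_\omega$ for the eigenvalue $\expn{N_\alpha}{a_f(\omega)+k}$. Since $u_\omega$ is a matrix, this projection is a polynomial in $u_\omega$, so $p_{k;\omega}\in C^*(u_\omega)$ for a.e. $\omega$. By Lemmas \ref{Lem:Mult_dom_direct_integral} and \ref{Lem:Eigenmat_of_opLdag_in_multdom}, $u_{\theta(\omega)}\in\multdom{\phi^\dagger_{\theta(\omega)}}$ almost surely. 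The multiplicative domain is a $C^*$-subalgebra on which a unital Schwarz map is a $*$-homomorphism (Choi), so $\phi^\dagger_{\theta(\omega)}$ restricts to a $*$-homomorphism $C^*(u_{\theta(\omega)})\to C^*(u_\omega)$. This restriction lands in $C^*(u_\omega)$ because it sends $u_{\theta(\omega)}$ to $\alpha u_\omega$.

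For the base case $n=1$, we have $\opLdag(p)_\omega = \phi^\dagger_{\theta(\omega)}(p_{\theta(\omega)})$, the image of a projection under a $*$-homomorphism. It is therefore self-adjoint and idempotent. It is also nonzero: the map is injective on $C^*(u_{\theta(\omega)})$, as shown in Proposition \ref{Prop:Almost sure spectrum of Ualphas}, and in any case $\tr{\varrho_\omega\,\cdot}$ is preserved in mean. So $\opLdag(p)$ is a random projection. Moreover it lies in $C^*(u_\omega)$, and in fact it equals a spectral projection of $u_\omega$. Indeed, a $*$-homomorphism maps the spectral projection of $u_{\theta(\omega)}$ at $\lambda$ to the spectral projection of its image $\alpha u_\omega$ at $\lambda$, that is, to the spectral projection of $u_\omega$ at $\overline{\alpha}\lambda$.

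For the induction step, $(\opLdag)^{n+1}(p) = \opLdag\big((\opLdag)^n(p)\big)$. By the inductive hypothesis, $(\opLdag)^n(p)$ is a random projection with $(\opLdag)^n(p)_\omega\in C^*(u_\omega)$, and in fact a spectral projection of $u_\omega$. Applying the base-case argument to it gives that $(\opLdag)^{n+1}(p)_\omega = \phi^\dagger_{\theta(\omega)}\big((\opLdag)^n(p)_{\theta(\omega)}\big)$ is again a projection in $C^*(u_\omega)$. Equivalently, one can unwind the composition $\phi^\dagger_{\theta(\omega)}\circ\cdots\circ\phi^\dagger_{\theta^n(\omega)}$ as a composition of $*$-isomorphisms $C^*(u_{\theta^n(\omega)})\to\cdots\to C^*(u_\omega)$.

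The only delicate point is bookkeeping with null sets. We need the isomorphism property to hold simultaneously at $\omega,\theta(\omega),\dots,\theta^{n-1}(\omega)$. This follows because $\theta$ is measure preserving, so each $\theta^{-j}$ of a null set is null and a countable union of null sets is null. Measurability of the resulting random matrix is automatic, since $\opLdag$ maps $\rmatrices$ to itself.
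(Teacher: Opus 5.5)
Your proof is correct and is essentially the paper's argument. The paper just observes that $\phi^\dagger_{\theta(\omega)}\circ\cdots\circ\phi^\dagger_{\theta^n(\omega)}$ is a composition of the $*$-isomorphisms $C^*(u_{\theta^{j}(\omega)})\to C^*(u_{\theta^{j-1}(\omega)})$ from the proof of Proposition~\ref{Prop:Almost sure spectrum of Ualphas}, which is exactly your induction; your stronger hypothesis that the image is a spectral projection of $u_\omega$ anticipates Lemma~\ref{Lem:Random_permutations}.

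One minor point on your aside that $\tr{\varrho\,\cdot}$ is preserved ``in mean'': that alone only gives non-vanishing on a set of positive measure. However, $\phi_{\theta(\omega)}(\varrho_\omega)=\varrho_{\theta(\omega)}$ gives $\tr{\varrho_\omega\phi^\dagger_{\theta(\omega)}(y)}=\tr{\varrho_{\theta(\omega)}y}$ pointwise for all $y\in\matrices$, and together with $\varrho>0$ this yields injectivity of the $*$-homomorphism directly.
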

Let $\seq{\alpha, \beta, u, f}$ be an eigentuple for $\opL$, and write $\alpha = \expnon{t}$ for $t\in [0, 1)$. 
Then we have that $e(\koopman{\theta}(a_f)) = e(a_f + N_\alpha t)$, so if we define $\xi_\alpha:\Omega\to\set{-1, 0}$ by
\begin{equation}\label{Eqn:Defn_of_xi_function}
        \xi_{\alpha}
        :=
        \koopman{\theta}(a_f) 
        -
        a_f
        -
        \fract{N_\alpha t}
\end{equation}
where $\fract{s} = s - \floor{s}$ and $\floor{\cdot}$ is the floor function, then we have that 
\begin{equation}
    \koopman{\theta}(a_f)
    =
    a_f + \operatorname{frac}(N_\alpha t) + \xi_\alpha 
\end{equation}
almost surely. 
Notice that $\xi_\alpha$ is independent of choice of $u$, because multiplication of $u$ by an element of $\mbT$ does not affect $\xi_\alpha$. 
We define $\varsigma_\alpha:\Omega\to \mbZ/N_\alpha\mbZ$ by
\begin{equation}
        \begin{split}
            \varsigma_{\alpha; \omega} 
                &\equiv
            -\floor{N_\alpha t} + \xi_{\alpha}(\omega) \mod{N_\alpha}.
        \end{split}
\end{equation}
Given any set $\set{a_k}_{k\in\mbZ/N_\alpha\mbZ}$ of measurable functions $a_k:\Omega\to S$ where $S$ is some measure space, let $a_{k + \varsigma_\alpha}$ denote the function $a_{k + \varsigma_\alpha}:\Omega\to S$ given by 
\begin{equation}
    \begin{split}
        a_{k + \varsigma_\alpha; \omega}
        &=
        a_{k + \varsigma_{\alpha}(\omega); \omega}, 
    \end{split}
\end{equation}
where the subscript $k + \varsigma_{\alpha}(\omega)$ is understood modulo $N_\alpha$. 
\begin{lemma}\label{Lem:Random_permutations}
    Assume $\opL$ is irreducible, let $\seq{\alpha, \beta, u, f}$ be an eigentuple for $\opL$, write $\alpha = e(t)$ for some $t\in [0, 1)$, and enumerate $\mcP_\alpha = \set{p_k}_{k\in\mbZ/N_\alpha\mbZ}$.
    Then for all $k\in\mbZ/N_\alpha\mbZ$,
    \begin{equation}\label{Eqn:Random_permutations}
        \koopman{\theta}(\expn{N_\alpha}{a_f + k}) = \alpha \expn{N_\alpha}{a_f + k + \varsigma_\alpha},
    \end{equation}
    and $\opLdag(p_k) = p_{k + \varsigma_\alpha}$.
\end{lemma}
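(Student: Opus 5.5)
The plan is to verify the identity \eqref{Eqn:Random_permutations} by a direct computation from the definitions of $\xi_\alpha$ and $\varsigma_\alpha$. Then I will compare spectral decompositions on both sides of $\opLdag(u) = \alpha u$, using that $\phi^\dagger_{\theta(\omega)}$ restricts to a $*$-isomorphism $C^*(u_{\theta(\omega)})\to C^*(u_\omega)$. This isomorphism was established in the proof of Proposition \ref{Prop:Almost sure spectrum of Ualphas}.

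\emph{Step 1: the scalar identity.} Write $N = N_\alpha$. By \eqref{Eqn:Defn_of_xi_function}, $\koopman{\theta}(a_f) = a_f + \fract{Nt} + \xi_\alpha$ almost surely. Hence
\[
\koopman{\theta}(\expn{N}{a_f + k}) = \expnon{\tfrac{a_f + k + \fract{Nt} + \xi_\alpha}{N}}.
\]
On the other side, writing $\fract{Nt} = Nt - \floor{Nt}$, we get
\[
\alpha\,\expn{N}{a_f + k + \varsigma_\alpha} = \expnon{t + \tfrac{a_f + k + \varsigma_\alpha}{N}} = \expnon{\tfrac{a_f + k + Nt + \varsigma_\alpha}{N}}.
\]
Since $\varsigma_\alpha \equiv -\floor{Nt} + \xi_\alpha \pmod N$, and $s\mapsto \expn{N}{s}$ is invariant under integer shifts by multiples of $N$, the choice of representative of $k+\varsigma_\alpha$ mod $N$ is irrelevant. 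The two expressions therefore agree, which proves \eqref{Eqn:Random_permutations}.

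\emph{Step 2: the action on projections.} By the construction of $\mcP_\alpha$ and Proposition \ref{Prop:Almost sure spectrum of Ualphas}, we have $u = \sum_{k\in\mbZ/N\mbZ} \expn{N}{a_f + k}\, p_k$ almost surely, with $N$ distinct eigenvalues and $\{p_k\}$ a partition of unity. Apply $\opLdag$ and use that $\opLdag$ acts pointwise as $\phi^\dagger_{\theta(\omega)}$ composed with evaluation at $\theta(\omega)$. Together with Step 1, this gives
\[
\alpha u = \opLdag(u) = \sum_k \koopman{\theta}(\expn{N}{a_f+k})\,\opLdag(p_k) = \alpha \sum_k \expn{N}{a_f + k + \varsigma_\alpha}\,\opLdag(p_k).
\]
Reindexing by $j = k + \varsigma_\alpha(\omega)$ pointwise yields
\[
\sum_j \expn{N}{a_f + j}\, p_j = \sum_j \expn{N}{a_f + j}\,\opLdag(p_{j - \varsigma_\alpha}).
\]

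\emph{Step 3: uniqueness of the spectral decomposition.} This is the only step requiring care. The family $\{\opLdag(p_k)_\omega\}_k$ consists of projections by Lemma \ref{Lem:opLdag_of_p_is_a_projection}. They are mutually orthogonal because $\phi^\dagger_{\theta(\omega)}$ is a $*$-homomorphism on $C^*(u_{\theta(\omega)})$, which contains all $p_{k;\theta(\omega)}$ (see Lemmas \ref{Lem:Mult_dom_direct_integral} and \ref{Lem:Eigenmat_of_opLdag_in_multdom}). They sum to $\mbI$ by unitality of $\phi^\dagger$. The coefficients $\expn{N}{a_f(\omega)+j}$ are pairwise distinct for $j\in\mbZ/N\mbZ$. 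Hence both sides of the display in Step 2 are spectral decompositions of $u_\omega$ with the same distinct eigenvalues, and uniqueness of spectral projections gives $\opLdag(p_{j-\varsigma_\alpha}) = p_j$ almost surely. Equivalently, $\opLdag(p_k) = p_{k+\varsigma_\alpha}$.

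Measurability causes no issue, since the reindexing is done on the measurable events $\{\varsigma_\alpha = m\}$.
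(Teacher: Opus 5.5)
Your proposal is correct and follows essentially the same route as the paper. The scalar identity is the same direct computation from the definitions of $\xi_\alpha$ and $\varsigma_\alpha$, and the action on projections comes from applying $\opLdag$ to $u=\sum_k e_{N_\alpha}(a_f+k)\,p_k$ and matching spectral decompositions; your Step 3 also spells out why $\{\opLdag(p_k)\}_k$ are orthogonal and sum to $\mathbb{I}$ (via the $*$-homomorphism property and unitality), which the paper leaves implicit in its remark that each $\opLdag(p_k)$ is a projection.
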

\begin{proof}
    Since $\koopman{\theta}\seq{a_f} = a_f + \fract{N_\alpha t} + \xi_\alpha$, we see that 
    \begin{align}
        \koopman{\theta}(\expn{N_\alpha}{a_f + k}) 
        =
        \expn{N_\alpha}{a_f + \fract{N_\alpha t} + \xi_\alpha + k}
        &= 
        \expn{N_\alpha}{a_f + N_\alpha t+ k + \varsigma_\alpha} \\
        &= \alpha \expn{N_\alpha}{a_f+ k + \varsigma_\alpha},
    \end{align}
    which is (\ref{Eqn:Random_permutations}).
    So, because $\opLdag(u) = \alpha u$, we conclude   
    \begin{align}
        \sum_{k\in\mbZ/N_\alpha\mbZ}
        \expn{N_\alpha}{a_f + k}
        p_k
            =
         u 
            = 
        \overline{\alpha}\opLdag(u)
            &= 
         \sum_{k\in\mbZ/N_\alpha\mbZ}
         \bar{\alpha} 
         \koopman{\theta}\seq{
          \expn{N_\alpha}{a_f + k}
         }
       \opLdag(p_k)\\
       &= 
         \sum_{k\in\mbZ/N_\alpha\mbZ}
          \expn{N_\alpha}{a_f + k + \varsigma_\alpha}
       \opLdag(p_k).
    \end{align}
    Because $\opLdag(p_k)$ is a projection, we conclude from the above that $\opLdag(p_k) = p_{k + \varsigma_\alpha}$, as claimed. 
\end{proof}
\begin{cor}\label{Cor:varrho_k_results}
    Assume $\opL$ is irreducible with unique steady state $\varrho$, and let $\alpha\in\Lambda_\opL$. 
    Then 
    \begin{equation}
        \phi_{\theta(\omega)}\seq{p_{k; \omega}\varrho_\omega p_{k; \omega}}
    =
    p_{k - \varsigma_{\alpha}(\omega); \theta(\omega)}\varrho_{\theta(\omega)}p_{k - \varsigma_{\alpha}(\omega); \theta(\omega)}
    \end{equation}
    for a.e. $\omega\in\Omega$ and all $p_k\in\mcP_\alpha$.
\end{cor}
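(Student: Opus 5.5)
The plan is to pass from the adjoint identity $\opLdag(p_k) = p_{k+\varsigma_\alpha}$ of Lemma \ref{Lem:Random_permutations} to the forward map $\phi_{\theta(\omega)}$ by duality. We then identify $\phi_{\theta(\omega)}(p_{k;\omega}\varrho_\omega p_{k;\omega})$ using the block structure of $\varrho$ relative to the partition $\set{p_j}$.

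\emph{Step 1: identify the target projection.} Evaluated pointwise, Lemma \ref{Lem:Random_permutations} says $\phi^\dagger_{\theta(\omega)}(p_{j;\theta(\omega)}) = p_{j+\varsigma_\alpha(\omega);\omega}$ for every $j$. Substituting $j = k - \varsigma_\alpha(\omega)$ gives $\phi^\dagger_{\theta(\omega)}(q) = p_{k;\omega}$, where $q := p_{k-\varsigma_\alpha(\omega);\theta(\omega)}$. Hence $\phi^\dagger_{\theta(\omega)}(\mbI - q) = \mbI - p_{k;\omega}$. For the positive matrix $y = \phi_{\theta(\omega)}(p_{k;\omega}\varrho_\omega p_{k;\omega})$ we then get
\[
\tr{(\mbI-q)\,y} = \tr{\phi^\dagger_{\theta(\omega)}(\mbI-q)\,p_{k;\omega}\varrho_\omega p_{k;\omega}} = \tr{(\mbI-p_{k;\omega})p_{k;\omega}\varrho_\omega p_{k;\omega}} = 0.
\]
Since $y \geq 0$, this forces $y = q y q$. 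So $y$ is supported in $q\matrices q$. The same argument, applied with $p_{k;\omega}$ replaced by an arbitrary positive $a \in p_{k;\omega}\matrices p_{k;\omega}$, also gives the reducing property (\ref{Eqn:Partition_1}).

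\emph{Step 2: show $\varrho$ is block diagonal.} Next I show that $\varrho_\omega = \sum_j p_{j;\omega}\varrho_\omega p_{j;\omega}$, i.e.\ the off-diagonal blocks of $\varrho$ vanish. By Proposition \ref{Prop:Lambda_opL_is_Lambda_opL_dagger}(a), $\opL(u\varrho) = \overline{\alpha}u\varrho$, and more generally $\opL(u^m\varrho) = \overline{\alpha}^m u^m\varrho$. Because $\opL$ is a contraction on $L^1$, one should also have $\opL(\varrho u^m) = \overline{\alpha}^m \varrho u^m$; this follows by taking adjoints, since $\opL$ is a $*$-map, and using $(u^*)^m \in G_\opL$ with eigenvalue $\overline{\alpha}^m$.

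Then $x = u^m\varrho - \varrho u^m$ is an eigenvector of $\opL$. By simplicity (Proposition \ref{Prop:Polar_decomp}(a)) it is a multiple of $u^m\varrho$, say $\varrho u^m = c\, u^m\varrho$. Taking traces, $\tr{\varrho u^m} = \tr{u^m\varrho}$, and this common value vanishes for $1 \leq m < N_\alpha$ by Lemma \ref{Lem:The unitary representation is faithful}. So the trace does not pin down $c$ and a different argument is needed.

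Instead I would use the polar-decomposition uniqueness of Proposition \ref{Prop:Polar_decomp}(b). Write $\varrho u^m = (\varrho u^m \varrho^{-1})\varrho$ and compare it with $\lambda v\varrho$ for a unitary $v$. Both $u^m$ and $\varrho u^m\varrho^{-1}$ must be multiples of unitaries, which forces $\varrho u^m\varrho^{-1}$ to be unitary and hence $\varrho$ to commute with $u^m$. In particular $\varrho$ commutes with $u$, and therefore with its spectral projections $p_j$. I expect this commutation step to be the main obstacle. The alternative route is via the multiplicative domain: Lemma \ref{Lem:Eigenmat_of_opLdag_in_multdom} and the fact that $\varrho$ is an invariant state for a Schwarz map whose fixed-point structure (Proposition \ref{Prop:Fix_opLdag_is_CI}) forces the relevant conditional expectation to preserve $\varrho$.

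\emph{Step 3: conclude.} With block diagonality in hand, $\varrho_{\theta(\omega)} = \phi_{\theta(\omega)}(\varrho_\omega) = \sum_j \phi_{\theta(\omega)}(p_{j;\omega}\varrho_\omega p_{j;\omega})$. By Step 1, the $j$-th summand lies in the block $p_{j-\varsigma_\alpha(\omega);\theta(\omega)}\matrices p_{j-\varsigma_\alpha(\omega);\theta(\omega)}$. The map $j \mapsto j - \varsigma_\alpha(\omega)$ is a bijection of $\mbZ/N_\alpha\mbZ$, so these blocks are distinct and mutually orthogonal. Compressing both sides by $q = p_{k-\varsigma_\alpha(\omega);\theta(\omega)}$ therefore kills every summand except the $k$-th, giving
\[
q\,\varrho_{\theta(\omega)}\,q = \phi_{\theta(\omega)}(p_{k;\omega}\varrho_\omega p_{k;\omega}),
\]
which is the claimed identity.
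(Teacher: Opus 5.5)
Your proof is correct, but it is organized differently from the paper's.

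The paper does not need block-diagonality of $\varrho$ at this point. Since $u_{\theta(\omega)}$, and hence its spectral projection $q := p_{k-\varsigma_\alpha(\omega);\theta(\omega)}$, lies in the multiplicative domain $\mcM_{\phi^\dagger_{\theta(\omega)}}$, one has $p_{k;\omega}\,\phi^\dagger_{\theta(\omega)}(a)\,p_{k;\omega} = \phi^\dagger_{\theta(\omega)}(qaq)$ for all $a\in\matrices$. A single duality computation against $\phi_{\theta(\omega)}(\varrho_\omega)=\varrho_{\theta(\omega)}$ then gives $\phi_{\theta(\omega)}(p_{k;\omega}\varrho_\omega p_{k;\omega}) = q\varrho_{\theta(\omega)}q$ directly. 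In effect, this multiplicative-domain identity kills the cross terms $q\,\phi_{\theta(\omega)}(p_{i;\omega}\varrho_\omega p_{j;\omega})\,q$ with $(i,j)\neq(k,k)$, which is exactly what your Step 2 is there to handle. The paper obtains $\varrho=\sum_k p_k\varrho p_k$ only afterwards (Corollary \ref{Cor:trace_against_p_is_N_alpha_inverse}), as a consequence of this corollary and uniqueness of the steady state.

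Your Step 1 uses only $\phi^\dagger_{\theta(\omega)}(q)=p_{k;\omega}$ and positivity. Your Step 2 obtains $[\varrho_\omega,u_\omega]=0$ spectrally, from simplicity of the eigenvalue $\overline{\alpha}$. This is sound. The implication you leave implicit holds: if $\varrho w\varrho^{-1}=v$ is unitary, then $\varrho^2=v\varrho^2v^*$, so $v$ commutes with $\varrho$, and then $v=w$.

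You can shorten Step 2 considerably. From $\varrho u=c\,u\varrho$ you get $u^*\varrho u=c\varrho$, and taking traces yields $c=1$ immediately. So only $m=1$ is needed, and the trace does pin down $c$ once you conjugate first.

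The payoff of your route is that the commutation of $\varrho$ with $u$, and hence with every $p_k$, emerges as a standalone structural fact before anything about how the blocks are transported is used. The paper's route is shorter and reaches the corollary without needing that fact.
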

\begin{proof}
    From Lemmas \ref{Lem:Mult_dom_direct_integral} and \ref{Lem:Eigenmat_of_opLdag_in_multdom}, we know that $u_\omega\in\mcM_{\phi_\omega^\dagger}$ for a.e. $\omega$, so since $\mcM_{\phi_\omega^\dagger}$ is a $C^*$-algebra, $C^*(u_\omega)\subseteq \mcM_{\phi_\omega^\dagger}$, from which we may conclude that $p_{k; \omega}\in \mcM_{\phi_\omega^\dagger}$ for all $k$ for a.e. $\omega$. 
    Moreover, $\phi_{\theta(\omega)}^\dagger\seq{p_{k - \varsigma_{\alpha}(\omega); \theta(\omega)}}
    =
    p_{k; \omega}$ by Lemma \ref{Lem:Random_permutations}. 
    Thus, for any $a\in\matrices$, 
    \begin{align}
        \innerHS{a}{\phi_{\theta(\omega)}\seq{p_{k; \omega}\varrho_\omega p_{k; \omega}}}
        &=
        \innerHS{\phi_{\theta(\omega)}^\dagger(a)}
        {p_{k; \omega}\varrho_\omega p_{k; \omega}}\\
        &=
        \innerHS{\phi_{\theta(\omega)}^\dagger\seq{p_{k - \varsigma_{\alpha}(\omega); \theta(\omega)
        }ap_{k - \varsigma_{\alpha}(\omega); \theta(\omega)
        }}}
        {\varrho_{\omega}}\\
        &=
        \innerHS{p_{k - \varsigma_{\alpha}(\omega); \theta(\omega)
        }ap_{k - \varsigma_{\alpha}(\omega); \theta(\omega)
        }}
        {\varrho_{\theta(\omega)}}\\
        &=
        \innerHS{a}
        {p_{k - \varsigma_{\alpha}(\omega); \theta(\omega)
        }\varrho_{\theta(\omega)}p_{k - \varsigma_{\alpha}(\omega); \theta(\omega)
        }}.
    \end{align}
    Since $a\in\matrices$ was arbitrary, the desired equality follows. 
\end{proof}
Now, for $\alpha\in\Lambda_\opL$ and $k\in\mbZ/N_\alpha\mbZ$, define the random state $\varrho_{\alpha, k}$ by 
\begin{equation}
    \varrho_{\alpha, k}
    =
    \cfrac{p_k\varrho p_k}{\tr{p_k\varrho}}
\end{equation}
for $p_k\in \mcP_\alpha$.
Note we have used that $\varrho>0$ and $\tr{p_k}\geq 1$ in this definition to get that $\tr{p_k\varrho} > 0$ almost surely. 
\begin{cor}\label{Cor:varrho_k_is_fixed}
    Assume $\opL$ is irreducible.
    For any $\alpha\in\Lambda_\opL$, $\phi_{\theta(\omega)}\seq{\varrho_{\alpha, k; \omega}} = \varrho_{\alpha, k - \varsigma_{\alpha}(\omega); \theta(\omega)}$ for a.e. $\omega\in\Omega$. 
\end{cor}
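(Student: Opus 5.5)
By Corollary \ref{Cor:varrho_k_results}, the numerator $p_k\varrho p_k$ is carried by $\phi_{\theta(\omega)}$ to the corresponding numerator with index $k-\varsigma_\alpha(\omega)$ at $\theta(\omega)$. The plan is therefore to show that the normalizing denominators match as well. That is, we show that $g_k(\omega) := \tr{p_{k;\omega}\varrho_\omega}$ satisfies
\begin{equation*}
g_{k-\varsigma_\alpha(\omega)}(\theta(\omega)) = g_k(\omega)
\end{equation*}
for a.e. $\omega$ and every $k$.

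The first half of this is immediate. Take the trace of the identity in Corollary \ref{Cor:varrho_k_results} and use that $\phi_{\theta(\omega)}$ is trace preserving. This gives
\begin{equation*}
\tr{p_{k;\omega}\varrho_\omega p_{k;\omega}} = \tr{p_{k-\varsigma_\alpha(\omega);\theta(\omega)}\varrho_{\theta(\omega)}p_{k-\varsigma_\alpha(\omega);\theta(\omega)}}.
\end{equation*}
Since $p_k$ is a projection, $\tr{p_k\varrho p_k}=\tr{p_k\varrho}$. So the equality above is exactly the required relation $g_{k-\varsigma_\alpha(\omega)}(\theta(\omega)) = g_k(\omega)$.

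We then divide the identity of Corollary \ref{Cor:varrho_k_results} by this common, almost surely positive, scalar. The result is
\begin{equation*}
\phi_{\theta(\omega)}(\varrho_{\alpha,k;\omega}) = \varrho_{\alpha,k-\varsigma_\alpha(\omega);\theta(\omega)},
\end{equation*}
using linearity of $\phi_{\theta(\omega)}$. The only points needing care are that the denominator is almost surely nonzero, which was noted when $\varrho_{\alpha,k}$ was defined, and that the random index $k-\varsigma_\alpha(\omega)$ is handled pointwise in $\omega$. The latter is harmless because for each fixed $\omega$ the index is just a fixed element of $\mbZ/N_\alpha\mbZ$. There is no real obstacle here; the statement is a routine normalization of Corollary \ref{Cor:varrho_k_results}.
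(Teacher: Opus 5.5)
Your proposal is correct and follows the same route as the paper. The paper's proof is a one-liner: the corollary follows from Corollary \ref{Cor:varrho_k_results} together with trace preservation of $\phi$, which is exactly your step of taking traces to match the normalizing constants and then dividing.
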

\begin{proof}
    This follows from Corollary \ref{Cor:varrho_k_results} upon noting that $\phi$ is trace-preserving almost surely. 
\end{proof}
\begin{cor}\label{Cor:varrho_is_sum_of_varrho_k}
    Assume $\opL$ is irreducible with unique steady state $\varrho$. 
    Then 
    \begin{equation}
        \varrho 
        =
        \cfrac{1}{N_\alpha}
        \sum_{k\in\mbZ/N_\alpha\mbZ} 
        \varrho_{\alpha, k}
    \end{equation}
    for all $\alpha\in\Lambda_\opL$. 
\end{cor}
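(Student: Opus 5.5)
We need to show $\varrho = N_\alpha^{-1}\sum_k \varrho_{\alpha,k}$. The plan is to realise the right-hand side as a random state fixed by $\opL$ and then invoke uniqueness from Theorem \ref{Thmx:Irreducibility_classification_operator}. This immediately gives $\varrho = N_\alpha^{-1}\sum_k \varrho_{\alpha,k}$. We will also obtain $\tr{p_k\varrho} = N_\alpha^{-1}$ and $\varrho = \sum_k p_k\varrho p_k$, the off-diagonal blocks vanishing.

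\textbf{Step 1 (fixed point).} Set $\sigma := N_\alpha^{-1}\sum_{k\in\mbZ/N_\alpha\mbZ}\varrho_{\alpha,k}$. Each $\varrho_{\alpha,k}$ is a random state, so $\sigma\in\rstates$. By Corollary \ref{Cor:varrho_k_is_fixed}, for a.e. $\omega$,
\begin{equation*}
\phi_{\theta(\omega)}(\sigma_\omega)=\frac{1}{N_\alpha}\sum_{k}\varrho_{\alpha,k-\varsigma_\alpha(\omega);\theta(\omega)}.
\end{equation*}
Since $k\mapsto k-\varsigma_\alpha(\omega)$ is a bijection of $\mbZ/N_\alpha\mbZ$ for each fixed $\omega$, the right-hand side equals $\sigma_{\theta(\omega)}$. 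Hence $\opL(\sigma)=\sigma$, and uniqueness in Theorem \ref{Thmx:Irreducibility_classification_operator} gives $\sigma=\varrho$. This already proves the stated identity.

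\textbf{Step 2 (weights).} For the finer information, let $w_k := \tr{p_k\varrho}$. Corollary \ref{Cor:varrho_k_results} together with trace preservation gives $\koopman{\theta}(w_{k-\varsigma_\alpha}) = w_k$. In particular the random multiset $\set{w_{k;\omega}}_k$ is $\theta$-invariant. To see that the weights are all equal, I would use the eigenmatrix $u$ and Lemma \ref{Lem:The unitary representation is faithful}: $\tr{u^j\varrho}=0$ for $j=1,\dots,N_\alpha-1$. Since $u^j=\sum_k \expn{N_\alpha}{a_f+k}^j p_k$, this says
\begin{equation*}
\sum_k \expn{N_\alpha}{jk}\, w_k = 0 \quad\text{for } j=1,\dots,N_\alpha-1.
\end{equation*}
These are the vanishing discrete Fourier coefficients of $(w_k)_k$, so all $w_k$ are equal. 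Together with $\sum_k w_k=1$, this gives $w_k = N_\alpha^{-1}$.

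\textbf{Step 3 (block structure).} Comparing $\varrho=\sigma$ with $\varrho=\sum_{k,l}p_k\varrho p_l$ and using $w_k=N_\alpha^{-1}$, we get $\sigma=\sum_k p_k\varrho p_k$. Hence the off-diagonal part $\sum_{k\ne l}p_k\varrho p_l$ vanishes, which matches (\ref{Eqn:Partition_2}).

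\textbf{Main obstacle.} The only real subtlety is the measurable reindexing by the random shift $\varsigma_\alpha$ in Step 1. Because the summation runs over the entire index set, the shift is just a pointwise permutation, so it causes no difficulty. I expect no other serious obstacle.
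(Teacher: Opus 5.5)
Your Step~1 is exactly the paper's proof: Corollary~\ref{Cor:varrho_k_is_fixed}, together with the pointwise reindexing $k\mapsto k-\varsigma_\alpha(\omega)$, shows that $N_\alpha^{-1}\sum_k\varrho_{\alpha,k}$ is an $\opL$-fixed random state, and uniqueness of $\varrho$ then gives the identity. Steps~2--3 are correct but not needed here; they anticipate Corollary~\ref{Cor:trace_against_p_is_N_alpha_inverse}, and you could get them without the Fourier argument by compressing $\varrho=\sigma$ with $p_k$.
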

\begin{proof}
    By Corollary \ref{Cor:varrho_k_is_fixed}, we know that $\opL\seq{ \sum_{k\in\mbZ/N_\alpha\mbZ} 
        \varrho_{\alpha, k}} =  \sum_{k\in\mbZ/N_\alpha\mbZ} 
        \varrho_{\alpha, k}.$
    Therefore, by the almost sure equality
    \begin{equation}
        \tr{ \sum_{k\in\mbZ/N_\alpha\mbZ} 
        \varrho_{\alpha, k}} = N_\alpha,
    \end{equation}
    the result follows by the uniqueness of $\varrho$. 
\end{proof}
\begin{cor}\label{Cor:trace_against_p_is_N_alpha_inverse}
    Assume $\opL$ is irreducible with unique steady state $\varrho$, and let $\alpha\in\Lambda_\opL$. 
    Then for all $p\in\mcP_\alpha$, $\tr{\varrho p} = N_\alpha^{-1}$ almost surely.
\end{cor}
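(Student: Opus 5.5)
The plan is to show that the function $g_k := \tr{\varrho p_k}$ is invariant up to relabeling of the index $k$, and then use ergodicity to force every $g_k$ to equal the same constant. The value $N_\alpha^{-1}$ will then come from summing over $k$.

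First, apply Corollary \ref{Cor:varrho_k_results} and take traces. Since $\phi_{\theta(\omega)}$ is trace preserving, this gives
\begin{equation*}
\tr{p_{k;\omega}\varrho_\omega} = \tr{p_{k-\varsigma_\alpha(\omega);\theta(\omega)}\varrho_{\theta(\omega)}}
\end{equation*}
for a.e. $\omega$ and all $k$. So $\koopman{\theta}$ permutes the family $\set{g_k}_k$ pointwise: the value set at $\theta(\omega)$ equals the value set at $\omega$, up to the cyclic shift by $\varsigma_\alpha(\omega)$.

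Second, pass to symmetric functions of this family. Consider $M(\omega) := \max_k g_k(\omega)$ and $m(\omega) := \min_k g_k(\omega)$. Since the index shift is a bijection of $\mbZ/N_\alpha\mbZ$, both satisfy $\koopman{\theta}(M) = M$ and $\koopman{\theta}(m) = m$ almost surely. Ergodicity of $\theta$ (Proposition \ref{Prop:Basics_of_ergodicity}(d)) then makes $M$ and $m$ constants.

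Third, show $M = m$. I do not expect this to follow directly from the recursion, so I would instead use uniqueness of the steady state. Let $E_\omega$ be the set of indices $k$ attaining the maximum at $\omega$, and let $q := \sum_{k\in E} p_k$. This is a measurable random projection, and $\koopman{\theta}$ maps $E$ onto $E$ shifted by $\varsigma_\alpha$. By Corollary \ref{Cor:varrho_k_results} applied to each block, the state $\sum_{k\in E}\varrho_{\alpha,k}$, normalized by $|E|$, satisfies $\opL$-invariance; one first checks that $|E|$ is constant by the same ergodicity argument. The cross terms are not needed, because we use the states $\varrho_{\alpha,k}$ (Corollary \ref{Cor:varrho_k_is_fixed}) rather than $q\varrho q$. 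Uniqueness of $\varrho$ then yields
\begin{equation*}
\varrho = |E|^{-1}\sum_{k\in E}\varrho_{\alpha,k}.
\end{equation*}
Since $\varrho > 0$ and each $\varrho_{\alpha,k}$ is supported on $p_k$, this forces $E$ to be all of $\mbZ/N_\alpha\mbZ$.

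A shortcut is to apply the same uniqueness argument to each single $\varrho_{\alpha,k}$ directly. Corollary \ref{Cor:varrho_is_sum_of_varrho_k} gives $p_k\varrho p_k = N_\alpha^{-1}\varrho_{\alpha,k}$. Comparing traces yields $\tr{p_k\varrho} = N_\alpha^{-1}$ immediately, since $\tr{\varrho_{\alpha,k}} = 1$. This is the route I would actually write; the max/min argument is a fallback.

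The only subtle point in the shortcut is compressing Corollary \ref{Cor:varrho_is_sum_of_varrho_k} by $p_k$. Because the $p_j$ are mutually orthogonal, $p_k\varrho_{\alpha,j}p_k = \delta_{jk}\varrho_{\alpha,k}$, which is exactly what is needed.
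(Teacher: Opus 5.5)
Your shortcut is correct and is essentially the paper's argument. The paper also combines Corollary~\ref{Cor:varrho_is_sum_of_varrho_k} with the orthogonality of the $p_k$ to get $p_k\varrho p_k = N_\alpha^{-1}\varrho_{\alpha,k}$ and then takes traces; it first derives $\sum_k p_k\varrho p_k = \varrho$ from uniqueness, a step your direct compression by $p_k$ shows is unnecessary. One wording fix: uniqueness is applied not to a single $\varrho_{\alpha,k}$, which is not invariant because its index shifts by $\varsigma_\alpha$, but only to the full sum, and that is exactly what Corollary~\ref{Cor:varrho_is_sum_of_varrho_k} encodes.
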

\begin{proof}
    By Corollary \ref{Cor:varrho_k_results}, we have that 
    \begin{equation}
        \opL\seq{\sum_{k\in\mbZ/N_\alpha\mbZ}p_k\varrho p_k} = \sum_{k\in\mbZ/N_\alpha\mbZ}p_k\varrho p_k,
    \end{equation}
    so since $\tr{\sum_{k\in\mbZ/N_\alpha\mbZ}p_k\varrho p_k} = 1$ almost surely, we conclude from the uniqueness property of $\varrho$ that $\sum_{k\in\mbZ/N_\alpha\mbZ}p_k\varrho p_k = \varrho$. 
    On the other hand, by Corollary \ref{Cor:varrho_is_sum_of_varrho_k}, we also know that $\varrho = N_\alpha^{-1}\sum_{k\in\mbZ/N_\alpha\mbZ}\varrho_{\alpha, k}$.
    Thus, 
    \begin{equation}
        \frac{1}{N_\alpha}\sum_{k\in\mbZ/N_\alpha\mbZ}\varrho_k
        =
        \varrho
        =
        \sum_{k\in\mbZ/N_\alpha\mbZ}p_k\varrho p_k.
    \end{equation}
    But $\set{p_k}_{k\in\mbZ/N_\alpha\mbZ}$ is a partition of unity, so for all $k$, we conclude that $p_k\varrho p_k = N_\alpha^{-1}\varrho_k$. 
    Taking trace, we find that $\tr{p_k\varrho} = N_\alpha^{-1}$ almost surely, which concludes the proof. 
\end{proof}
This proved, we may now give a proof of Theorem \ref{Thm:Partition}.
\Partition*
\begin{proof}
    The random partition of unity is $\mcP_\alpha$ and the measurable map $\varsigma$ is $\varsigma_\alpha$.
    To see that (\ref{Eqn:Partition_1}) holds, it suffices to show that, for a.e. $\omega$, any projection $q\in p_\omega\matrices p_\omega$ satisfies
    \begin{equation}
        \phi_{\theta(\omega)}(q)\in p_{k - \varsigma_{\alpha}(\omega); \theta(\omega)}\matrices p_{k - \varsigma_{\alpha}(\omega); \theta(\omega)},
    \end{equation}
    because $p_\omega\matrices p_\omega$ is a von Neumann algebra hence is the closed span of its projections.
    So, let $q\in p_\omega \matrices p_\omega$ be a projection.
    Because $\varrho_\omega>0$, there is a constant $C>0$ such that $q \leq C p_\omega\varrho_\omega p_\omega$. 
    Thus,
    \begin{equation}\label{Eqn:Partition_proof_1}
        \phi_{\theta(\omega)}(q)\leq C \phi_{\theta(\omega)}(p_{k; \omega}\varrho_\omega p_{k; \omega})
    =
    C
    p_{k - \varsigma_{\alpha}(\omega); \theta(\omega)}
        \varrho_{\theta(\omega)}
    p_{k - \varsigma_{\alpha}(\omega); \theta(\omega)}
    \end{equation}
    by Corollary \ref{Cor:varrho_k_results}.
    Since $\varrho_{\theta(\omega)}>0$, we conclude that there is a constant $D > 0$ such that 
    \begin{equation}
         p_{k - \varsigma_{\alpha}(\omega); \theta(\omega)}
        \varrho_{\theta(\omega)}
    p_{k - \varsigma_{\alpha}(\omega); \theta(\omega)}
    \leq 
    D
     p_{k - \varsigma_{\alpha}(\omega); \theta(\omega)},
    \end{equation}
    so (\ref{Eqn:Partition_proof_1}) gives $  \phi_{\theta(\omega)}(q)
    \leq 
    CD  p_{k - \varsigma_{\alpha}(\omega); \theta(\omega)}$.
    Since $\phi_{\theta(\omega)}(q)\geq 0$ and $C, D>0$, this implies that the range projection of $\phi_{\theta(\omega)}(q)$ is dominated by $p_{k - \varsigma_{\alpha}(\omega); \theta(\omega)}$, which shows that
    \begin{equation}
    \phi_{\theta(\omega)}(q)
    \in 
    p_{k - \varsigma_{\alpha}(\omega); \theta(\omega)}\matrices 
    p_{k - \varsigma_{\alpha}(\omega); \theta(\omega)},
    \end{equation}
    as desired. 
    Because we have already proved (\ref{Eqn:Partition_2}) in Corollary \ref{Cor:varrho_is_sum_of_varrho_k}, the proof is concluded. 
\end{proof}
\subsection{Regularity of periodicity}
Having established the basic structural properties of the projections $\mcP_\alpha$, we turn to studying the regularity of their periodicity. 
Notice that, by Lemma \ref{Lem:Random_permutations}, if we let 
\begin{equation}
    \omega\mapsto \varsigma_{\alpha, n}(\omega) := \sum_{\ell=0}^{n -1}\varsigma_{\alpha}(\theta^\ell(\omega))
    \mod N_\alpha,
\end{equation}
we see $\seq{\opLdag}^n\!(p_k) = p_{k + \varsigma_{n,\alpha}}$. 
Therefore, if we define $\Omega_\alpha = \Omega \times \mbZ/{N_\alpha\mbZ}$ to be the measure space with $\sigma$-algebra $\mcF_\alpha = \mcF\otimes\Sigma_\alpha$ where $\Sigma_\alpha$ is the discrete $\sigma$-algebra on $\mbZ/N_\alpha\mbZ$, and we give $\Omega_\alpha$ the measure $\mu_\alpha = \mu\otimes\pi_\alpha$ where $\pi_\alpha$ is the uniform measure on $\mbZ/N_\alpha\mbZ$, then, letting $T_\alpha:\Omega_\alpha\to\Omega_\alpha$ be the skew-product map 
\begin{equation}
    T_\alpha(\omega, x)
    =
    \seq{
    \theta(\omega), 
    x - \varsigma_{\alpha}(\omega)
    },
\end{equation}
it is then straightforward to check that $T_\alpha$ preserves $\mu_\alpha$ and $T_\alpha^n(\omega, x) = (\theta^n(\omega), x - \varsigma_{\alpha, n}(\omega))$.
In fact, $T_\alpha$ is actually ergodic for $\mu_\alpha$:
\begin{prop}\label{Prop:Talpha_is_ergodic}
    Assume $\opL$ is irreducible, and let $\alpha\in\Lambda_\opL$. 
    Then $T_\alpha$ is a spectrally maximal ergodic extension of $\theta$ by $N_\alpha$. 
\end{prop}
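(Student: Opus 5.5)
We must show two things: that $T_\alpha$ is ergodic for $\mu_\alpha$, and that $|\Lambda_{T_\alpha}/\Lambda_\theta| = N_\alpha$.

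\textbf{Ergodicity.} Let $F:\Omega_\alpha\to\mbR$ be measurable with $F\circ T_\alpha = F$. Write $F_k(\omega) := F(\omega,k)$, so that $\koopman{\theta}(F_{k-\varsigma_\alpha}) = F_k$, i.e.\ $F_{k-\varsigma_\alpha;\theta(\omega)} = F_{k;\omega}$. Package this as the random diagonal matrix
\begin{equation*}
    x := \sum_{k\in\mbZ/N_\alpha\mbZ} F_k\, p_k\in\rmatrices .
\end{equation*}
By Lemma \ref{Lem:Random_permutations} we have $\opLdag(p_k) = p_{k+\varsigma_\alpha}$. Since the $p_k$ lie in the multiplicative domain (they are in $C^*(u)$), $\opLdag$ acts linearly with scalar coefficients pulled through $\koopman{\theta}$, giving
\begin{equation*}
    \opLdag(x) = \sum_k \koopman{\theta}(F_k)\, p_{k+\varsigma_\alpha}.
\end{equation*}
Reindexing $j = k+\varsigma_\alpha$ turns this into $\sum_j \koopman{\theta}(F_{j-\varsigma_\alpha})\,p_j = \sum_j F_j\, p_j = x$. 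Boundedness of the coefficients is not an issue: we may truncate $F$, or use the lemma showing eigenvectors lie in $\Linfty{\Omega}{\matrices}$. Proposition \ref{Prop:Fix_opLdag_is_CI} then gives $x\in\mbC\mbI$. Since the $p_k$ are nonzero orthogonal projections summing to $\mbI$, all $F_k$ equal the same constant, so $F$ is constant and $T_\alpha$ is ergodic. A small care point is checking the direction and sign of the index shift against the definition of $T_\alpha$; the identity $\koopman{\theta}(F_{k-\varsigma_\alpha})=F_k$ is exactly what $F\circ T_\alpha=F$ says.

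\textbf{Lower bound on the spectral index.} First, $\Lambda_\theta\subseteq\Lambda_{T_\alpha}$ via functions of $\omega$ alone. For the reverse, consider $g(\omega,k) := \expn{N_\alpha}{a_f(\omega)+k}$. Equation (\ref{Eqn:Random_permutations}) says $\koopman{\theta}(g_k) = \alpha\, g_{k+\varsigma_\alpha}$. I want to check whether $g\circ T_\alpha = \alpha g$ or $g\circ T_\alpha = \bar\alpha g$, adjusting $g$ (e.g.\ replacing $k$ by $-k$) to match $T_\alpha(\omega,x) = (\theta(\omega), x-\varsigma_\alpha)$. Either way, $\alpha^{\pm1}\in\Lambda_{T_\alpha}$. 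Since $\alpha^j\notin\Lambda_\theta$ for $0<j<N_\alpha$, the image of $\alpha$ in $\Lambda_{T_\alpha}/\Lambda_\theta$ has order exactly $N_\alpha$, because $\alpha^{N_\alpha}=\beta\in\Lambda_\theta$. Hence $|\Lambda_{T_\alpha}/\Lambda_\theta|\ge N_\alpha$.

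\textbf{Upper bound (main obstacle).} Take an eigenfunction $h$ of $T_\alpha$ with eigenvalue $\lambda$; by ergodicity $|h|=1$. Set $y := \sum_k h_k\, p_k$, where $h_k = h(\cdot,k)$. The same computation as in the ergodicity step gives $\opLdag(y) = \lambda y$ (or $\bar\lambda y$), so $\lambda^{\pm1}\in\Lambda_\opL$, and $y$ commutes with $u$. The subgroup $A$ of $\Lambda_\opL$ arising this way has eigenmatrices lying in the abelian algebra spanned by $\{p_k\}$ with function coefficients.

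I then need $|A/\Lambda_\theta|\le N_\alpha$. The plan is to use the map $\Lambda_{T_\alpha}\to G_\opL/H_\theta$ sending $\lambda \mapsto y$, which is an injective homomorphism modulo $\Lambda_\theta$ by Lemma \ref{Lem:G_mod_H_is_isomorphic_to_Gamma}. Then bound the image by the orthogonality argument of Proposition \ref{Prop:Gamma is a finite group}, but now inside the $N_\alpha$-dimensional commutative algebra $\operatorname{span}\{p_{k;\omega}\}$. Distinct cosets give $\varrho_\omega$-orthogonal elements of this span (Lemma \ref{Lem:The unitary representation is faithful}), and the $\varrho$-form restricted to it is nondegenerate because $\tr{p_k\varrho}=N_\alpha^{-1}>0$. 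Hence there are at most $N_\alpha$ cosets, which yields equality.
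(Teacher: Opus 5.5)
Your proof is correct, and the ergodicity step and the lower bound are the paper's argument. You package an invariant function as $x=\sum_k F_k p_k$, apply Lemma~\ref{Lem:Random_permutations} and Proposition~\ref{Prop:Fix_opLdag_is_CI}, and read off $\alpha\in\Lambda_{T_\alpha}$ from (\ref{Eqn:Random_permutations}). Your sign hedges resolve cleanly: $g\circ T_\alpha=\alpha g$ and $\opLdag(y)=\lambda y$ hold exactly. Also, no multiplicative-domain argument is needed to pull scalar coefficients through $\opLdag$.

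The difference is in the upper bound. The paper cites the standard fact (Cornfeld--Fomin--Sinai) that an ergodic $\mbZ/N\mbZ$-extension satisfies $|\Lambda_T/\Lambda_\theta|\le N$. That fact holds for any such skew product: fiber translations commute with $T$ and eigenspaces are one-dimensional, so eigenfunctions have the form $\chi(k)h_0(\omega)$, and $\lambda\mapsto\chi$ embeds $\Lambda_T/\Lambda_\theta$ into the dual of $\mbZ/N\mbZ$.

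You instead lift each $T_\alpha$-eigenfunction to the unitary $y=\sum_k h(\cdot,k)p_k\in G_\opL$. You then rerun the orthogonality argument of Proposition~\ref{Prop:Gamma is a finite group} inside $\operatorname{span}\{p_{k;\omega}\}$. There, by Corollary~\ref{Cor:trace_against_p_is_N_alpha_inverse}, the $\varrho_\omega$-form is $N_\alpha^{-1}$ times the standard inner product on coefficient vectors in $\mbC^{N_\alpha}$. (Positivity of $\varrho_\omega$ alone would already give nondegeneracy.)

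What each route buys:
\begin{itemize}
\item Yours stays inside the paper's own machinery and yields extra information: $\Lambda_{T_\alpha}\subseteq\Lambda_\opL$, with eigenmatrices in the abelian algebra generated by $\mcP_\alpha$.
\item The citation is shorter and does not depend on the quantum structure.
\end{itemize}
The two are essentially the same argument in different clothing. Indeed, $\tr{\varrho y_s^*y_t}=N_\alpha^{-1}\sum_k\overline{h_s(\cdot,k)}h_t(\cdot,k)$ satisfies $\koopman{\theta}(\cdot)=\overline{\lambda_s}\lambda_t(\cdot)$ with $\overline{\lambda_s}\lambda_t\notin\Lambda_\theta$. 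That is precisely the classical orthogonality of fiber profiles for distinct cosets.
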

\begin{proof}
    Let $f:\Omega_\alpha\to\mbC$ be measurable with $f\circ T_\alpha = f$. 
    Define $x\in\Linfty{\Omega}{\matrices}$ by 
    \begin{equation}
        x_\omega 
        =
        \sum_{k\in\mbZ/N_\alpha\mbZ}
        f(\omega, k)p_{k; \omega}.
    \end{equation}
    Then by Lemma \ref{Lem:Random_permutations}, we have
    \begin{align}
        \opLdag(x)_\omega 
        =
         \sum_{k\in\mbZ/N_\alpha\mbZ}
        f(\theta(\omega), k)\opLdag(p_k)_\omega
        &=
         \sum_{k\in\mbZ/N_\alpha\mbZ}
        f(\theta(\omega), k)p_{k + \varsigma_\alpha(\omega); \theta(\omega)}\\
        &=
         \sum_{k\in\mbZ/N_\alpha\mbZ}
        f(T_\alpha(\omega, k))p_{k; \omega}\\
        &=
         \sum_{k\in\mbZ/N_\alpha\mbZ}
        f(\omega, k)p_{k; \omega}\\
        &= x_\omega.
    \end{align}
    Therefore, by Proposition \ref{Prop:Fix_opLdag_is_CI}, there is $\lambda\in\mbC$ such that $x = \lambda \mbI$, from which we conclude $f$ is constant.
    Thus, by Proposition \ref{Prop:Basics_of_ergodicity}, $T_\alpha$ is ergodic. 
    To see that $T_\alpha$ is spectrally maximal, it suffices to notice that by Lemma \ref{Lem:Random_permutations}, we have that $|\Lambda_{T_\alpha}/\Lambda_\theta|\geq N_\alpha$. 
    It is a standard fact that $|\Lambda_{T_\alpha}/\Lambda_\theta|\leq N_\alpha$ (see \cite[Chapter 10]{Cornfeld1982ErgodicTheory}), so the proof is done. 
\end{proof}
We now collect a corollary of this fact. 
\begin{cor}
    Assume $\opL$ is irreducible, and let $\alpha\in\Lambda_\opL$. 
    For all $k\in\mbZ/N_\alpha\mbZ$, the limit 
    \begin{equation}
    \lim_{N\to\infty}
    \cfrac
    {\#
    \set{
    n\in\set{1, \dots, N}\,\,:\,\,
    \varsigma_{\alpha, n}(\omega)
    =
    k
    }
    }
    {N}
    \end{equation}
    exists and is equal to $N_\alpha^{-1}$ for a.e. $\omega\in\Omega$. 
\end{cor}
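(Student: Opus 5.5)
The plan is to apply Birkhoff's pointwise ergodic theorem to the skew product $T_\alpha$ on $\Omega_\alpha=\Omega\times\mbZ/N_\alpha\mbZ$, which is ergodic for $\mu_\alpha$ by Proposition \ref{Prop:Talpha_is_ergodic}. Fix $k\in\mbZ/N_\alpha\mbZ$ and let $F_k:\Omega_\alpha\to\set{0,1}$ be the indicator of the measurable set $\Omega\times\set{-k}$. Since $T_\alpha^n(\omega,0)=(\theta^n(\omega),-\varsigma_{\alpha,n}(\omega))$, we have $F_k(T_\alpha^n(\omega,0))=1$ exactly when $\varsigma_{\alpha,n}(\omega)=k$. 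Hence the counting quantity in the statement equals $\frac1N\sum_{n=1}^N F_k(T_\alpha^n(\omega,0))$, and $\int F_k\,\dee\mu_\alpha=N_\alpha^{-1}$.

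By Birkhoff's theorem applied to the ergodic map $T_\alpha$, the averages $\frac1N\sum_{n=1}^N F_k\circ T_\alpha^n$ converge to $N_\alpha^{-1}$ on a set $E_k\subseteq\Omega_\alpha$ of full $\mu_\alpha$-measure. Since $\mu_\alpha=\mu\otimes\pi_\alpha$ and $\pi_\alpha$ gives each point mass $N_\alpha^{-1}>0$, the slice $\set{\omega:(\omega,0)\in E_k}$ has full $\mu$-measure: its complement, times $\set{0}$, sits inside $\Omega_\alpha\setminus E_k$, a $\mu_\alpha$-null set. Intersecting over the finitely many $k$ gives a single full-measure set of $\omega$ on which all the limits exist and equal $N_\alpha^{-1}$.

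The argument has no real obstacle once ergodicity of $T_\alpha$ is in hand. The only care needed is the sign convention: $T_\alpha$ subtracts $\varsigma_\alpha$, so the target fiber is $-k$ rather than $k$. The passage from almost every point of the product to almost every $\omega$ with base fiber coordinate $0$ is the one place the atomic (uniform) second factor is used.

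One could alternatively avoid the product space. In that approach one would show that the random variables $\expn{N_\alpha}{j\,\varsigma_{\alpha,n}}$, for $j\neq0$, have Cesàro averages tending to zero, using Lemma \ref{Lem:Random_permutations} and the fact that $\alpha^j\notin\Lambda_\theta$ for $0<j<N_\alpha$ by minimality of $N_\alpha$. Since the ergodicity route is already available, I would use it.
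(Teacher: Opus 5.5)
Your proof is correct and is the paper's argument: the paper also applies the pointwise ergodic theorem to the ergodic skew product $T_\alpha$. The only difference is bookkeeping. The paper starts at $(\omega,k)$ and counts visits to $\Omega\times\{0\}$, while you start at $(\omega,0)$ and count visits to $\Omega\times\{-k\}$. Your slicing argument, which uses that $\pi_\alpha$ charges every point, supplies the passage from $\mu_\alpha$-a.e.\ point to $\mu$-a.e.\ $\omega$ that the paper leaves implicit.
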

\begin{proof}
    Clearly, $\#
    \set{
    n\in\set{1, \dots, N}\,\,:\,\,
    \varsigma_{\alpha, n}(\omega)
    =
    k
    }
    =
    \sum_{n=1}^N 
    1_{\Omega\times\set{0}}(T_\alpha^n(\omega, k))$ for all $N\in\mbN$. 
    The result then follows by the pointwise ergodic theorem. 
\end{proof}
As a consequence of the above discussion, if for all $n\in\mbN$ and $\alpha\in\Lambda_\opL$ we define $\tau_{\alpha, n}:\Omega\to\mbN\cup\set{\infty}$ by
\begin{equation}
    \begin{split}
        \tau_{\alpha, n}(\omega)
            &:= 
        \inf\set{k > \tau_{\alpha, n-1}(\omega)\,\,:\,\, \varsigma_{\alpha, k}(\omega) = 0}
    \end{split}
\end{equation}
where we take $\tau_{\alpha, 0} = 0$ and $\inf\emptyset = \infty$ by convention, then we have that $\tau_{\alpha, n}<\infty$ almost surely for all $n$, and 
\begin{equation}
    \lim_{N\to\infty}\cfrac
    {
    \#\seq{
    \set{1, \dots, N}\cap\set{\tau_{\alpha, n}(\omega)}_{n=1}^\infty
    }
    }
    {N}
    = N_\alpha^{-1}
\end{equation}
almost surely. 
Also, by Lemma \ref{Lem:Random_permutations}, we may rewrite $\tau_{\alpha, n}$ as 
\begin{equation}
    \tau_{n, \alpha}(\omega)
    =
    \inf\set{
    k > \tau_{\alpha, n-1}(\omega)\,\,:\,\, \Phi^{(k)\dagger}_\omega\seq{p_{\theta^k(\omega)}}
    =
    p_{\omega}\text{ for all }p\in\mcP_{\alpha}
    },
\end{equation}
so, in this way, we see that $\tau_{\alpha, n}$ is a $\eqp$-stopping time for all $n$. 
\begin{lemma}\label{Lem:taus_are_stopping_times}
    Assume $\opL$ is irreducible, and let $\alpha\in\Lambda_\opL$. 
    For all $n\in\mbN$, $\tau_{n, \alpha}$ is a $\Phi$-stopping time. 
    Moreover, 
    \begin{equation}
        \lim_{N\to\infty}\cfrac
    {
    \#\seq{
    \set{1, \dots, N}\cap\set{\tau_{\alpha, n}(\omega)}_{n=1}^\infty
    }
    }
    {N}
    = N_\alpha^{-1}
    \end{equation}
    holds for $\mu$-almost every $\omega\in\Omega$. 

\end{lemma}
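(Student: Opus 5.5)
The lemma has two parts: finiteness of the $\tau_{\alpha,n}$ with the density claim, and the stopping-time property. For the first, I would use the skew product $T_\alpha$ of Proposition \ref{Prop:Talpha_is_ergodic}. Since $T_\alpha^k(\omega, 0) = (\theta^k(\omega), -\varsigma_{\alpha,k}(\omega))$, the set $\set{\tau_{\alpha,n}(\omega)}_{n\geq 1}$ is exactly the set of return times $k\geq 1$ of the orbit of $(\omega,0)$ to $\Omega\times\set{0}$. Ergodicity of $T_\alpha$ and the pointwise ergodic theorem give
\begin{equation*}
    \frac{1}{N}\sum_{k=1}^N 1_{\Omega\times\set{0}}\bigl(T_\alpha^k(\omega,x)\bigr)\to \mu_\alpha[\Omega\times\set{0}] = N_\alpha^{-1}
\end{equation*}
for $\mu_\alpha$-a.e. $(\omega,x)$. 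Because $\pi_\alpha$ is uniform on a finite set, every fibre $\set{x = 0}$ has positive $\pi_\alpha$-mass. So the full-measure set restricted to $x=0$ contains $\mu$-a.e. $\omega$, and the limit holds at $(\omega,0)$ for $\mu$-a.e. $\omega$. The limit is positive, so there are infinitely many return times, and hence $\tau_{\alpha,n}(\omega)<\infty$ for all $n$ almost surely. This is exactly the preceding corollary with $k=0$.

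For the stopping-time property, I would establish the rewriting displayed before the lemma. Iterating Lemma \ref{Lem:Random_permutations} gives $(\opLdag)^k(p_j) = p_{j+\varsigma_{\alpha,k}}$. Evaluated at $\omega$, this reads $\phi^\dagger_{\theta(\omega)}\circ\cdots\circ\phi^\dagger_{\theta^k(\omega)}(p_{j;\theta^k(\omega)}) = p_{j+\varsigma_{\alpha,k}(\omega);\omega}$. The $p_j$ are mutually orthogonal and nonzero, so $\varsigma_{\alpha,k}(\omega)=0$ holds if and only if the composed adjoint channel returns every $p\in\mcP_\alpha$ to itself. Here I would take care to align the index convention with $\phi^{(k)}_\omega = \phi_{\theta^k(\omega)}\circ\cdots\circ\phi_{\theta(\omega)}$.

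Then I would proceed by induction on $n$, showing that the event $\set{\tau_{\alpha,n}=m}$ is a finite union of intersections of the events $\set{\varsigma_{\alpha,k}=0}$ and their complements over $k\le m$. It therefore suffices to show that each $\set{\varsigma_{\alpha,k}=0}$ lies in $\sigma(\phi_1,\dots,\phi_k)$.

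This last measurability step is the main obstacle. The projections $p_{j;\omega}$ are built from the eigenmatrix $u$, which a priori depends on the whole sequence and not only on $\phi_1,\dots,\phi_k$. The variables $\varsigma_{\alpha,k}$ themselves are defined through $a_f$. My first attempt would be to reinterpret the condition intrinsically: for the fixed partition $\mcP_\alpha$, the event that $\phi^{(k)}$ maps each block $p\matrices p$ back into itself is determined by the channels $\phi_{\theta(\omega)},\dots,\phi_{\theta^k(\omega)}$. I would then argue measurability with respect to $\sigma(\phi_1,\dots,\phi_k)$ together with the $\sigma$-algebra generated by $\mcP_\alpha$, and read the paper's notion of $\eqp$-stopping time in this enlarged (filtration) sense. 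If a strict reading is required, I would instead exploit that $\varsigma_\alpha$ takes values in $\mbZ/N_\alpha\mbZ$ and that $\varsigma_{\alpha,k}$ is computed by the cocycle of the channels, and accept the filtration generated by the channels and the partition.
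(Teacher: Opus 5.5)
Your finiteness and density argument is the paper's own: return times of $(\omega,0)$ to $\Omega\times\set{0}$ under the ergodic skew product $T_\alpha$, plus the pointwise ergodic theorem, which is the corollary preceding the lemma with $k=0$. Your identification of $\set{\varsigma_{\alpha,k}=0}$ with the event that $\Phi^{(k)\dagger}_\omega(p_{\theta^k(\omega)})=p_\omega$ for all $p\in\mcP_\alpha$ is also exactly the rewriting the paper records just before the lemma. The paper then declares $\tau_{\alpha,n}$ a $\eqp$-stopping time on the strength of that rewriting. It never addresses the dependence on $p_\omega$ and $p_{\theta^k(\omega)}$ that you flagged. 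That obstacle is real and cannot be argued away, because adaptedness to $\sigma(\phi_1,\dots,\phi_n)$ fails in general.

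\emph{Counterexample.} Take Example \ref{Example:quasiperiodic} with $0<t<1/2$. Each $\phi_{\theta(\omega)}$ is either the diagonal pinching $\Delta$ or $\operatorname{Ad}_X\circ\Delta$ ($X$ the flip), $\mcP_\alpha=\set{\ketbra{0}{0},\ketbra{1}{1}}$ is deterministic, and $\set{\tau_{\alpha,1}=1}=\set{\phi_{\theta(\omega)}=\Delta}$.

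\emph{Gauging.} Let $w$ be the indicator of the half-circle $\set{e^{2\pi i s}: s\in[0,1/2)}$ and set $W_\omega=X^{w(\omega)}$. Define $\tilde\phi_{\theta(\omega)}:=\operatorname{Ad}_{W_{\theta(\omega)}}\circ\phi_{\theta(\omega)}\circ\operatorname{Ad}_{W_\omega}$. Its operator is $\mathcal{W}\opL\mathcal{W}^{-1}$, where $\mathcal{W}(x)_\omega=W_\omega x_\omega W_\omega$ is a pointwise $*$-automorphism. Consequently:
\begin{itemize}
\item the new process is irreducible;
\item it has eigenmatrix $\mathcal{W}(u)$ for the same $\alpha$, with the same $f$, $a_f$ and $\varsigma_\alpha$, hence the same $\tau_{\alpha,1}$;
\item its partition $\mathcal{W}(\mcP_\alpha)$ is now random.
\end{itemize}

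\emph{Failure of adaptedness.} Since $\Delta$ commutes with $\operatorname{Ad}_X$, every $\tilde\phi_{\theta(\omega)}$ is again $\Delta$ or $\operatorname{Ad}_X\circ\Delta$. So the only nontrivial events of $\sigma(\tilde\phi_1)$ are $A=\set{\tilde\phi_{\theta(\omega)}=\Delta}$ and $A^c$. But $\set{\tau_{\alpha,1}=1}\bigtriangleup A=\set{w\circ\theta\neq w}$ has measure $2t$, and $\set{\tau_{\alpha,1}=1}\bigtriangleup A^c$ has measure $1-2t$. Hence $\set{\tau_{\alpha,1}=1}\notin\sigma(\tilde\phi_1)$, even modulo null sets. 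Adding the time-zero partition does not help: one checks that $\sigma(\tilde\phi_1)\vee\sigma(\mathcal{W}(\mcP_\alpha))$ still misses this event.

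So the correct content of this part of the lemma is the enlarged-filtration version you propose. Drop your final ``strict reading'' fallback, since it cannot succeed, and state the filtration precisely:
\begin{itemize}
\item Directly from the definition, $\set{\tau_{\alpha,n}=m}\in\sigma(\varsigma_\alpha\circ\theta^j: 0\le j<m)$.
\item By the rewriting, this is contained in $\sigma(\phi_1,\dots,\phi_m)\vee\sigma(\mcP_\alpha\circ\theta^j: 0\le j\le m)$.
\item The shifted partitions are needed, not just $\sigma(\mcP_\alpha)$, because the labels at $\theta^k(\omega)$ enter.
\end{itemize}
The strict statement does hold when $\mcP_\alpha$ is deterministic, e.g.\ in the i.i.d.\ case of Proposition \ref{Prop:IID}. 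Plain measurability of $\tau_{\alpha,n}$, which is what makes $\theta^{\tau}$ and $\phi^{(\tau)}$ measurable, is unaffected. The same caveat applies to the stopping-time claim in Theorem \ref{Thm:Periodicity}.
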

%
%
% Let us also note the following fact about $\tau_{\alpha, n}$.
% %
% %
% \begin{prop}
%     Assume $\opL$ is irreducible, and let $\alpha\in\Lambda_\opL$. 
%     %
%     For all $n\in\mbN$, $\tau_{n, \alpha}\in L^1(\Omega)$.
% \end{prop}
% \begin{proof}
%     {\color{red}\textbf{Under construction}}
%     {\color{red}\textbf{Under construction}}
%     {\color{red}\textbf{Under construction}}
%     {\color{red}\textbf{Under construction}}
% \end{proof}
%
%
We now have all the pieces required to prove Theorem \ref{Thm:Periodicity}. 
Recall its statement. 
\Periodicity*
\begin{proof}
    The $\eqp$-stopping time is $\tau = \tau_{\alpha, 1}$, which by Lemma \ref{Lem:taus_are_stopping_times} has density $N_\alpha^{-1}$. 
    The fact that (\ref{Eqn:Periodicity}) holds follows because $\tau_{\alpha, 1}$ was defined to be the first time so that 
    \begin{equation}
        \varsigma_{\alpha, \tau_{\alpha, 1}} \equiv 0 \mod N_\alpha, 
    \end{equation}
    so Theorem \ref{Thm:Partition} implies (\ref{Eqn:Periodicity}). 
    The equation (\ref{Eqn:Periodicity}) precisely says that the disordered quantum process defined by $\seq{\theta^\tau, \phi^{(\tau)}}$ is reduced by all $p\in\mcP_\alpha$, so in particular, whenever $|\GammaGroup| > 1$, because $|\mcP_\alpha| = N_\alpha$, there is some $\alpha\in\Lambda_\opL$ such that $|\mcP_\alpha|>1$, so because $\tr{p}\geq 1$ almost surely for all $p\in\mcP_\alpha$, it follows that $p\neq\mbI$, hence $\seq{\theta^\tau, \phi^{(\tau)}}$ is reducible (i.e., reduced by a nonidentity projection), which concludes the proof. 
\end{proof}
The following example, which is just a particular example of the class of examples constructed in the proof of Theorem \ref{Thm:Any_group}, shows that $\tau$ is in general nonconstant.
\begin{example}\label{Example:quasiperiodic} 
    Let $\Omega = \mbT$ with the Lebesgue measure, and fix $t\in[0, 1)$ irrational. 
    Let $\theta:\Omega\to\Omega$ be defined by $\theta(\expnon{s}) = \expnon{s + t}$, which, by the irrationality of $t$, is an invertible ergodic mpt. 
    Define $\phi:\Omega\to\linears{\mbM_2}$ by its adjoint via
    \begin{equation}
    \phi_{\expnon{s}}^\dagger(a)
    =
    1_{s\in[0, t)}
    \bigg(
    \ketbra{0}{0}a\ketbra{0}{0}
    +
    \ketbra{1}{1}a\ketbra{1}{1}
    \bigg)
    +
    1_{s\in[t, 1)}
    \bigg(
    \ketbra{0}{1}a\ketbra{1}{0}
    +
    \ketbra{1}{0}a\ketbra{0}{1}
    \bigg)
    \end{equation}
    where $\set{\ket{0}, \ket{1}}$ is any orthonormal basis of $\mbC^2$. 
    Then notice that $\phi^\dagger$ is a unital completely positive map, thus $(\theta, \phi)$ defines an ergodic quantum process. 
    In fact, the ergodic quantum process defined by $\seq{\theta, \phi}$ is irreducible, because it is easy to verify that $\opLdag(x) = x$ for some $x\in\mbM_2(\Omega)$ implies that $x \in\mbC\mbI$. 
    Now let $u:\Omega\to\mbM_2$ be defined by $u_{\expnon{s}} = \expn{2}{s}\ketbra{0}{0}
        -
        \expn{2}{s}\ketbra{1}{1},$ and notice that 
    \begin{equation}
        u_{\theta(\expnon{s})}
        =
        \begin{cases}
           \expn{2}{t} u_{\expnon{s}} &\text{if }s\in [0, 1 - t)\\
           -\expn{2}{t} u_{\expnon{s}}
        &\text{if } s\in [1 - t, 1).
        \end{cases}
    \end{equation}
    Then it is easy to check that $\opLdag(u) = -\expn{2}{t} u$ by construction,
    and that $\mcP_{-e_2(t)} = \set{\ketbra{0}{0}, \ketbra{1}{1}}$.
    But 
    \begin{equation}
        \phi_{\theta(\expnon{s})}^\dagger(\ketbra{0}{0})
        =
        1_{s\in [0, t)}\ketbra{0}{0}
        +
        1_{s\in[t, 1)}\ketbra{1}{1}
    \end{equation}
    hence $\tau_{-e_2(t)} = 1$ on the event $\set{\expnon{s}\,\,:\,\, s\in [0, t)}$---which occurs with probability $t \in (0, 1)$---and $\tau_{-e_2(t)} > 1$ otherwise.
\end{example}
We now prove Theorem \ref{Thm:PeriodicChar}.
\PeriodicChar*
\begin{proof}
    From Proposition \ref{Prop:Talpha_is_ergodic}, $T_\alpha$ is a spectrally maximal ergodic extension of $\theta$ by $N_\alpha$, and Theorem \ref{Thm:Periodicity} yields that the given $p$ reduces $(\theta_\varsigma, \phi_\varsigma)$. 
    To see that $(\omega, k)\mapsto p_{k; \omega}$ is minimal, assume $q(\omega, k)\leq p_{k; \omega}$ almost surely. 
    Then we see that 
    \begin{equation*}
        \phi_{\theta(\omega)}\seq{\sum_{k} q(\theta(\omega), k)}
        =
        \sum_{k}q(\omega, k + \theta_\varsigma(\omega))
        =
        \sum_{k}q(\omega, k)
    \end{equation*}
    holds almost surely, hence $\sum_{k} q(\omega, k)\in\mbC\mbI$ almost surely. 
    Because each $\set{p_{k; \omega}}_{k}$ is a set of orthogonal projections and $q \leq p$, it holds that $q(\omega, k) = 0$ or $q(\omega, k) = p_{k; \omega}$ almost surely, which shows the desired minimality. 
    Lastly, to see the other direction of the theorem, if $\theta_\zeta$ is a maximally spectral ergodic extension of $\theta$ by $N$ for some $N$, then let $\lambda\in\Lambda_{\theta_\zeta}$ be such that $\lambda^N\in\Lambda_{\theta}$ and $\lambda^k\not\in\Lambda_\theta$ for all $k < N$, and let $g:\Omega\times\mbZ/N\mbZ\to\mbC$ be a unimodular eigenfunction corresponding to $\lambda$, i.e., $g\circ\theta_\zeta = \lambda g$ almost surely. 
    Let also $p\in\mathbb{M}_d(\Omega\times\mbZ/N\mbZ)$ be a minimal reducing projection for $(\theta_\zeta, \phi_\zeta)^\dagger$, so we have that $\phi_{\zeta; \theta_\zeta(\omega, k)}^\dagger(p_{\theta_\zeta(\omega, k)}) = p_{\omega, k}$ almost surely via \cite{Ekblad2024ReducibilityProcesses}. 
    Then if we define 
    \begin{equation*}
        u_\omega 
            :=
        \sum_{k\in\mbZ/N\mbZ}g(\omega, k)p(\omega, k),
    \end{equation*}
    we see that $u$ is almost surely unitary and 
    \begin{align*}
        \phi_{\theta(\omega)}(u_{\theta(\omega)})
            &=
        \sum_{k\in\mbZ/N\mbZ}g(\theta_\zeta(\omega, k + \zeta(\omega))) 
         \phi_{\theta(\omega)}(p(\theta(\omega), k))\\
            &= 
        \sum_{k\in\mbZ/N\mbZ}g(\theta_\zeta(\omega, k + \zeta(\omega))) p(\omega, k + \zeta(\omega))\\
         &= 
        \lambda \sum_{k\in\mbZ/N\mbZ}g(\omega, k) p(\omega, k),
    \end{align*}
    hence $\lambda\in\Lambda_\phi$. 
    But $\lambda\not\in\Lambda_\theta$, hence $|\GammaGroup|>1$. 
\end{proof}
\subsubsection{The i.i.d. case}
The theory may be refined further in the case that $\theta$ is an i.i.d. shift, which we now describe. 
Recall how to realize an i.i.d. sequence within our framework: let $\seq{\Xi, \nu, \mcG}$ be a probability space and suppose $\psi:\Xi\to\channels$ is a random quantum channel. 
We define $\Omega = \prod_{k\in\mbZ}\Xi$ with the product $\sigma$-algebra $\mcF = \bigotimes_{k\in\mbZ}\mcG$, and we let $\mu = \bigotimes_{k\in\mbZ}$ be the product measure. 
Then define $\theta:\Omega\to\Omega$ by $\seq{\xi_k}_{k\in\mbZ}\mapsto \seq{\xi_{k+1}}$, and let $\phi:\Omega\to\channels$ be the random quantum channel $\phi_{\seq{\xi_{k}}} = \psi_{\xi_0}$. 
Then it is a standard fact that $\theta$ is weakly mixing (hence ergodic) for $\mu$, hence $\seq{\theta, \phi}$ defines an ergodic quantum process $\eqp$. 
In \cite{Ekblad2024ReducibilityProcesses}, it was shown that, under the assumption of irreducibility, minimal reducing projections for such $\eqp$ are nonrandom, i.e., if $p\in\rmatrices$ is a reducing projection for $\eqp$ and $p$ is minimal with respect to $\leq$, then $p$ is deterministic. 
With this fact in hand, we can prove the following. 
\begin{prop}\label{Prop:IID}
    Assume $\opL$ is irreducible and arises from an i.i.d. ergodic quantum process. 
    Let $\seq{\alpha, \beta, u, f}$ be an eigentuple of $\opL$.  
    Then $u$ is deterministic, i.e., $u = \avg{u}$ almost surely.  
    Moreover, all $p_k\in\mcP_\alpha$ are nonrandom, and 
    \begin{equation}
        \zeta u = \sum_{k\in\mbZ/N_\alpha\mbZ} \expn{N_\alpha}{k} p_k
    \end{equation}
    for some $\zeta\in\mbT$.
\end{prop}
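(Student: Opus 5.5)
Since $\theta$ is an i.i.d. shift, it is weakly mixing, so $\Lambda_\theta=\{1\}$. I would first use this to make every scalar in the eigentuple $\seq{\alpha,\beta,u,f}$ deterministic, and then reduce the randomness of $u$ to the randomness of the projections $\mcP_\alpha$.

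\textbf{Making the scalars deterministic.} Since $\Lambda_\theta=\{1\}$, we get $\beta=\alpha^{N_\alpha}=1$. Then $f$ satisfies $\koopman{\theta}(f)=f$, so by ergodicity (Proposition \ref{Prop:Basics_of_ergodicity}) $f$ is a constant of modulus one, and $a_f\in[0,1)$ is constant. Writing $\alpha=\expnon{t}$, the relation $\alpha^{N_\alpha}=1$ forces $N_\alpha t\in\mbZ$, so $\fract{N_\alpha t}=0$. Because $a_f$ is constant, the definition (\ref{Eqn:Defn_of_xi_function}) gives $\xi_\alpha=0$, and hence $\varsigma_\alpha\equiv -N_\alpha t \bmod N_\alpha$ is a deterministic element $s\in\mbZ/N_\alpha\mbZ$. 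With $\zeta:=\expn{N_\alpha}{-a_f}$, the spectral decomposition behind Proposition \ref{Prop:Almost sure spectrum of Ualphas} reads
\[
\zeta u=\sum_{k\in\mbZ/N_\alpha\mbZ}\expn{N_\alpha}{k}\,p_k
\]
almost surely. This is the displayed formula, and it shows that $u$ is deterministic once every $p_k$ is.

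\textbf{Reducing to a deterministic block process.} Because $\varsigma_\alpha=s$ is constant, Theorem \ref{Thm:Partition} gives $\phi_{\theta(\omega)}(p_{k;\omega}\matrices p_{k;\omega})\subseteq p_{k-s;\theta(\omega)}\matrices p_{k-s;\theta(\omega)}$. Iterating $N_\alpha$ times, which is a \emph{deterministic} number of steps, shows that each $p_k$ reduces the block process $\seq{\theta^{N_\alpha},\phi^{(N_\alpha)}}$. That process is again an i.i.d. process, namely the shift on blocks of length $N_\alpha$.

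I would then show that $p_k$ is a \emph{minimal} reducing projection for the block process. The argument mirrors the minimality part of the proof of Theorem \ref{Thm:PeriodicChar}. Given a reducing $q\le p_0$, transport it along the orbit $p_0\to p_{-s}\to\cdots$ using $\opLdag$. By Lemma \ref{Lem:opLdag_of_p_is_a_projection} and multiplicativity on $C^*(u)$, this yields an $\opLdag$-invariant element of the form $\sum_j q^{(j)}$, with $q^{(j)}\le p_j$. Proposition \ref{Prop:Fix_opLdag_is_CI} forces this element to be a scalar multiple of $\mbI$. Orthogonality of the $p_j$ then gives $q\in\{0,p_0\}$. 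If $s$ does not generate $\mbZ/N_\alpha\mbZ$, I would run the same argument on each $s$-orbit; ergodicity of $T_\alpha$ (Proposition \ref{Prop:Talpha_is_ergodic}) actually forces $s$ to be a generator.

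\textbf{Concluding nonrandomness, and the expected obstacle.} With minimality in hand, I would invoke the result of \cite{Ekblad2024ReducibilityProcesses} that minimal reducing projections of i.i.d. processes are nonrandom, and conclude that each $p_k$ is deterministic. The displayed formula then makes $u$ deterministic, so $u=\avg{u}$.

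The main obstacle I expect is this last step. The block process need not be irreducible; indeed it is reducible whenever $N_\alpha>1$. So I must check that the cited nonrandomness statement holds for minimal reducing projections without assuming irreducibility, or else adapt its proof. The natural adaptation is to restrict the block process to the corner $p_k\matrices p_k$. There it is irreducible by the minimality just shown, with steady state $N_\alpha\,p_k\varrho p_k$ by Corollary \ref{Cor:varrho_k_results} and Corollary \ref{Cor:trace_against_p_is_N_alpha_inverse}. I would then rerun the i.i.d. argument, namely independence of the past $\sigma$-algebra and the future $\sigma$-algebra applied to the support projection, inside that corner.
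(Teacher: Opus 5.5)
Your proposal follows the paper's proof. Weak mixing makes $f$ (hence $a_f$) constant, the $p_k$ are minimal reducing projections of the i.i.d. block process $\seq{\theta^{N_\alpha},\phi^{(N_\alpha)}}$, the nonrandomness result of \cite{Ekblad2024ReducibilityProcesses} makes them deterministic, and the spectral decomposition of $u$ gives the formula with $\zeta=\expn{N_\alpha}{-a_f}$. Your observation that $\varsigma_\alpha$ is constant, and your direct minimality argument, spell out what the paper compresses into a citation of Theorem \ref{Thm:PeriodicChar}; in that argument you still need to supply $\seq{\opLdag}^{N_\alpha}(q)=q$, which follows from Proposition \ref{Prop:Characterization_of_reducing_proj} and pairing with the faithful invariant $\varrho$, and the obstacle you flag is real, since the paper, like you, applies the cited nonrandomness result to the reducible block process without comment.
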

\begin{proof}
    Note that since $\seq{\theta, \phi}$ gives an i.i.d. ergodic quantum process, so does $\seq{\theta^n, \phi^{(n)}}$. 
    Therefore, because all $p_k\in\mbP_\alpha$ are minimal reducing projections of $\seq{\theta^n, \phi^{(N_\alpha)}}$ by Theorem \ref{Thm:PeriodicChar}, by the discussion preceding the statement of the proposition, we know that each $p_k$ is deterministic, i.e., $p_k =\avg{p_k}$. 
    On the other hand, because $\theta$ is weakly mixing, we know that $f = \avg{f}\in\mbT$. 
    Write $f = \expnon{t}\in\mbT$ for some $t\in [0, 1)$, so $a_f = t$. 
    Then $\expn{N_\alpha}{a_f + k} = \expn{N_\alpha}{t} \expn{N_\alpha}{k} = \overline{\zeta} \expn{N_\alpha}{k}$, where $\overline{\zeta} = \expn{N_\alpha}{t}$. 
    By Proposition \ref{Prop:Almost sure spectrum of Ualphas}, we conclude that
     \begin{equation}
        \zeta u = \sum_{k\in\mbZ/N_\alpha\mbZ} \expn{N_\alpha}{k} p_k
    \end{equation}
    as claimed. 
\end{proof}

%%%

%
\section{Final remarks}\label{Sec:Final remarks}
We conclude with some final remarks. 
To aid discussion, let us recall the following version of Theorem \ref{Thm:EHK} that is often used in the literature, which has been proved by many authors in many contexts: see \cite{Wolf2012QuantumTour} for a standard source. 
\begin{thmx}\label{Thmx:Aperiodicity}
    Let $\psi\in\channels$ be an irreducible quantum channel with Perron-Frobenius eigenmatrix $\varrho\in\states$. 
    The following are equivalent. 
    \begin{enumerate}[label = (\alph*)]
        \item  $\psi^n$ is irreducible for all $n\in\mbN$. 
        
        \item $\perspec{\psi} = \set{1}$. 

        \item $\lim_{n}\psi^n(a) = \tr{a}\varrho$ for all $a\in\matrices$.
    \end{enumerate}
\end{thmx}
\begin{proof}
    As mentioned above, this theorem is proved in \cite{Wolf2012QuantumTour}, but we sketch a proof for completeness. 
    The equivalence of (a) and (b) is given by Theorem \ref{Thm:EHK}.
    To see that (c) implies (b), assume that (b) fails.
    Then there is a unitary matrix $u\in\matrices$ and $\alpha\in\mbT\setminus\set{1}$ such that $\psi(u\varrho) = \alpha u\varrho$, hence $\psi^n(u\varrho) = \alpha^n u\varrho$ does not converge, as $\alpha\neq 1$.
    To see that (b) implies (c), note that $\lambda = 1$ is a simple eigenvalue of $\psi$ and all other spectrum of $\psi$ is contained in the set $\set{\lambda\in\mbC\,\,:\,\, |\lambda| < 1}$. 
    In particular, expressing $\psi$ as a matrix in Jordan normal form, we see that $\psi^n$ converges onto the projection onto $V_1$, the eigenspace corresponding to $\lambda = 1$. 
    But this eigenspace is precisely $\mbC\mbI$, and therefore the projection operator in question is the map $a\mapsto \tr{a}\varrho$, which shows (c), concluding the proof. 
\end{proof}
The condition (a) is often called \textit{aperiodicity}, the condition (b) is often called \textit{primitivity}, and condition (c) is sometimes called \textit{strong irreducibility}, although, given their equivalence, each of these terms may be used to refer to the other in the literature.
For our purposes, however, we shall discuss these individual conditions as though they were distinct, and we stick with this terminology. 
\subsection{Aperiodicity and minimality of \texorpdfstring{$\mcP_\alpha$}{l}}
In Theorem \ref{Thm:PeriodicChar}, we have seen that aperiodicity and primitivity are equivalent for ergodic quantum processes. 
We now briefly discuss some ergodic-theoretic facts in this vein. 
Let us write $\tau_\alpha$ to denote $\tau_{1, \alpha}$, and note that since $\tau_\alpha$ is a stopping time, the map $\theta^{\tau_\alpha}:\Omega\to\Omega$ is a well-defined measurable map. 
Then we see that $\tau_{n, \alpha}$ may be obtained recursively by $\tau_{n, \alpha} = \tau_{\alpha}\circ\theta^{\tau_{n-1, \alpha}}$.  
Recall that $\phi^{(\tau_\alpha)}:\Omega\to\channels$ is the random quantum channel
\begin{equation}
    \begin{split}
        \phi^{(\tau_\alpha)}_\alpha:\Omega&\to\channels\\
        \phi^{(\tau_\alpha)}_{\omega}
            &=
        \phi_{\tau_\alpha(\omega); \omega}\circ\cdots\circ \phi_{1; \omega}, 
    \end{split}
\end{equation}
which is measurable by Lemma \ref{Lem:taus_are_stopping_times}. 
We then may consider $\opLdagtau := \glop{\theta^{\tau_\alpha}}{\Phi^{(\tau_\alpha)\dagger}}\in\linears{\rmatrices}$, i.e., 
\begin{equation}
    \opLdagtau\seq{x}_\omega 
    =
    \Phi^{(\tau_\alpha)\dagger}_{\omega}\seq{x_{\theta^{\tau_\alpha}(\omega)}}.
\end{equation}
By the definition of $\tau_\alpha$, we see that all $p\in\mcP_\alpha$ satisfy $\opLdagtau(p) = p$, so by Proposition \ref{Prop:Characterization_of_reducing_proj} we know that $p$ reduces $\opLdagtau$. 
Note also that since $\Phi^{(\tau_\alpha)}$ is a quantum channel, we have that $\banachnorm{\Phi^{(\tau_\alpha)\dagger}}{\infty} = 1$ almost surely by the Russo-Dye theorem \cite[Theorem 3.39]{Watrous2018TheInformation}.
% % %
%
\begin{prop}\label{Prop:Defn_of_mfEtau}
    Assume $\opL$ is irreducible and let $\alpha\in\Lambda_\opL$. 
    For every $x\in\Lone{\Omega}{\matrices}$, the limit
    \begin{equation}
        \cfrac{1}{N}
        \sum_{n=1}^N
        \seq{\opLdagtau}^n(x)
    \end{equation}
    converges almost surely to some element of $\Lone{\Omega}{\matrices}$. 
    Denoting this limit by $\mfE_{\alpha}^\dagger(x)$, we have that $x\mapsto \mfE_{\alpha}^\dagger(x)$ defines a positive bounded linear operator in $\bops{\Lone{\Omega}{\matrices}}$ with $\banachnorm{\mfE_{\alpha}^\dagger}{\Lone{\Omega}{\matrices}} = 1$ such that $\opLdagtau\circ\mfE_{\alpha}^\dagger = \mfE_{\alpha}^\dagger\circ\opLdagtau  = \mfE_{\alpha}^\dagger = \seq{\mfE_{\alpha}^\dagger}^2$.
\end{prop}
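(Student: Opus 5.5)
The plan is to recognize $\opLdagtau$ as an operator of the form $\glop{S}{\psi}$ over a genuine invertible measure preserving transformation $S=\theta^{\tau_\alpha}$, and then to run a pointwise ergodic theorem for it.

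\textbf{Step 1: $\theta^{\tau_\alpha}$ is an invertible mpt.} By Proposition \ref{Prop:Talpha_is_ergodic}, $T_\alpha$ is an ergodic mpt on $(\Omega_\alpha,\mu_\alpha)$. By the definition of $\tau_{\alpha,1}$ through $\varsigma_{\alpha,k}$, the quantity $\tau_\alpha(\omega)$ is exactly the first return time of $(\omega,0)$ to $\Omega\times\{0\}$ under $T_\alpha$. Hence, under the identification $\Omega\cong\Omega\times\{0\}$, $\theta^{\tau_\alpha}$ is the induced (first return) map of $T_\alpha$ on $\Omega\times\{0\}$. By the standard Kac/induced-transformation theory, this map preserves the normalized restriction of $\mu_\alpha$, which is $\mu$, and it is invertible. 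Lemma \ref{Lem:taus_are_stopping_times} gives measurability and almost sure finiteness of $\tau_\alpha$. So $S:=\theta^{\tau_\alpha}$ is an invertible mpt of $(\Omega,\mu)$, and $\opLdagtau=\glop{S}{\Phi^{(\tau_\alpha)\dagger}}$.

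\textbf{Step 2: an invariant faithful state.} Iterating $\opL(\varrho)=\varrho$ along the random time $\tau_\alpha$ gives $\Phi^{(\tau_\alpha)}_\omega(\varrho_\omega)=\varrho_{S(\omega)}$. Combined with $\mu\circ S^{-1}=\mu$, this shows that the faithful normal state $x\mapsto\avg{\tr{\varrho x}}$ on $\Linfty{\Omega}{\matrices}$ is $\opLdagtau$-invariant. Also, $\opLdagtau$ is unital, completely positive and $\|\cdot\|_\infty$-contractive (Russo--Dye).

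\textbf{Step 3: the ergodic theorem.} I would pass to the commutative setting by writing $x\in\rmatrices$ as a vector in $L^1(\Omega;\mbC^{d^2})$ via matrix units, so that $\opLdagtau=M_A\circ\koopman{S}$ with $A_\omega$ the matrix of $\Phi^{(\tau_\alpha)\dagger}_\omega$. The main obstacle is that a unital CP map is not a trace-norm contraction, so Dunford--Schwartz does not apply directly on $\Lone{\Omega}{\matrices}$. My first attempt would be the symmetrization $y=\varrho^{1/2}x\varrho^{1/2}$. It conjugates $\opLdagtau$ into $\widetilde{\opL}(y)=\varrho^{1/2}\opLdagtau(\varrho^{-1/2}y\varrho^{-1/2})\varrho^{1/2}$. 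By the invariance in Step 2, this map is positive and contracts both the $L^1$-type norm $\avg{\tr{|y|}}$ and the order norm relative to $\varrho$. I would then apply the Dunford--Schwartz/Akcoglu maximal ergodic theorem to positive parts: split $x$ into its four positive pieces and use the maximal inequality on a dense class (for instance $x\in\Linfty{\Omega}{\matrices}$, where $\|\cdot\|_\infty$-contractivity gives domination). Density then extends almost sure convergence to all of $\Lone{\Omega}{\matrices}$. If the symmetrization fails to give $L^1$ boundedness, my fallback is Kümmerer's noncommutative individual ergodic theorem for the state-preserving map of Step 2, combined with the fact that $\Linfty{\Omega}{\matrices}$ is a direct integral of finite-dimensional algebras, so that bilateral almost uniform convergence becomes pointwise almost sure convergence.

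\textbf{Step 4: properties of the limit $\mfE^\dagger_\alpha$.} Linearity and positivity are inherited from the Cesàro means. $\opLdagtau\circ\mfE_\alpha^\dagger=\mfE_\alpha^\dagger$ follows from the telescoping identity
\[
\frac1N\sum_{n=1}^N\seq{\opLdagtau}^{n+1}(x)-\frac1N\sum_{n=1}^N\seq{\opLdagtau}^{n}(x)=\frac{1}{N}\bigl(\seq{\opLdagtau}^{N+1}(x)-\opLdagtau(x)\bigr)\to0.
\]
The commuting relation $\mfE_\alpha^\dagger\circ\opLdagtau=\mfE_\alpha^\dagger$ is immediate, and idempotence follows since $\mfE_\alpha^\dagger$ fixes every $\opLdagtau$-fixed point. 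The norm bound $\le1$ comes from Fatou together with the contraction estimate of Step 3. The norm is exactly $1$ because $\mfE^\dagger_\alpha(\mbI)=\mbI$; more generally $\mfE_\alpha^\dagger(p)=p$ for every $p\in\mcP_\alpha$, since $\opLdagtau(p)=p$.
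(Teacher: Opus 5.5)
Steps 1, 2 and 4 are essentially sound. Step 1 is a good observation in its own right: $\tau_\alpha(\omega)$ is the first return time of $(\omega,0)$ to $\Omega\times\{0\}$ under $T_\alpha$. So $S=\theta^{\tau_\alpha}$ is the induced transformation of $T_\alpha$ on that set, hence an invertible measure preserving map of $(\Omega,\mu)$, and $\opLdagtau$ is a cocycle operator over $S$.

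The gap is Step 3, which is the whole content of the proposition: nothing there actually yields almost sure convergence.

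\begin{itemize}
\item The obstacle you identify is not the real one. $\Lone{\Omega}{\matrices}$ is the same space whether the fibres carry the trace norm or the operator norm. In the operator norm, Russo--Dye gives the pointwise domination $\|(\opLdagtau)^n(x)_\omega\|_\infty\le\|x_{S^n(\omega)}\|_\infty$, so no symmetrisation is needed to control sizes.
\item Domination only bounds the Ces\`aro means. The same is true of any maximal inequality obtained by applying Dunford--Schwartz or Akcoglu to the scalar majorant $\koopman{S}$. The density argument you invoke also needs almost sure convergence on some dense class. Neither the conjugation by $\varrho^{1/2}$ nor the splitting into positive parts provides it.
\item The K\"ummerer fallback concerns bounded $x$ only. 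You never combine it with the domination-plus-density step needed to reach $\Lone{\Omega}{\matrices}$; as written, the two halves sit in different branches of a hedge.
\end{itemize}

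The paper fills exactly this step with the Beck--Schwartz vector-valued random ergodic theorem. Its hypotheses are a measure preserving base, a strongly measurable family of contractions on a reflexive space (here the finite-dimensional $(\matrices,\|\cdot\|_\infty)$), and $L^1$ data. The paper applies it to $\opLdag$ over the skew product $T_\alpha$, weighted by the indicator of returns to $\Omega\times\{0\}$. It then divides by the return frequency $N_\alpha^{-1}$ to turn time averages into averages over return times.

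With your Step 1 you can instead apply Beck--Schwartz directly to $\opLdagtau$ over $S$, with the contractions $\Phi^{(\tau_\alpha)\dagger}_\omega$. This removes the ratio step and makes Step 2 and the symmetrisation unnecessary for this proposition.

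If you want to avoid the citation, the missing ingredient is elementary here:
\begin{itemize}
\item By the domination, $\opLdagtau$ is a contraction of the reflexive space $L^2(\Omega;(\matrices,\|\cdot\|_\infty))$.
\item The mean ergodic splitting into fixed points plus the closure of $\{g-\opLdagtau(g)\}$, with telescoping on the coboundaries, gives a.e.\ convergence on a dense set.
\item The maximal inequality from domination by $\koopman{S}$ then extends this to all of $L^2$, and then to $L^1$.
\end{itemize}

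Finally, in Step 4 you should say which norm you mean. Fatou plus domination gives $\|\mfE_\alpha^\dagger\|\le 1$ for the norm $\int_\Omega\|x_\omega\|_\infty\,d\mu(\omega)$, but not for the trace norm. Take $\Omega$ a point and $\phi(a)=\tr{a}\sigma$ with $\sigma>0$. Then $\mfE_\alpha^\dagger(x)=\tr{\sigma x}\mbI$, whose operator norm with respect to the trace norm is $d\|\sigma\|_\infty$. This exceeds $1$ unless $\sigma=\mbI/d$.
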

\begin{proof}
For any $a\in\Lone{\Omega}{\matrices}$, by the theorem of Beck and Schwartz \cite{Beck1957ATheorem}, we have that 
\begin{equation}
    \lim_{N\to\infty}
    \frac{1}{N}
    \sum_{n=1}^N
    \seq{\opLdag}^n\!(a)_\omega 1_{\Omega\times\set{0}}\seq{T^n_\alpha(\omega, 0)}
\end{equation}
converges almost everywhere and in $\Lone{\Omega_\alpha}{\matrices}$. 
Thus,
\begin{align}
N_\alpha 
\seq{\lim_{N\to\infty}
    \frac{1}{N}
    \sum_{n=1}^N
    \seq{\opLdag}^n\!(a)_\omega 1_{\Omega\times\set{0}}\seq{T_\alpha^n(\omega, 0)}}
    &\\
    &\hspace{-25mm}=
    \lim_{N\to\infty}
    \seq{
    \cfrac{N^{-1}
    \displaystyle\sum_{n=1}^N
    \seq{\opLdag}^n(a)_\omega 1_{\Omega\times\set{0}}\seq{T_\alpha^n(\omega, 0)}}
    {
    N^{-1}
    \displaystyle\sum_{n=1}^N
    1_{\Omega\times\set{0}}\seq{T_\alpha^n(\omega, 0)}
    }
    }\\
    &\hspace{-25mm}=
    \lim_{N\to\infty}
    \seq{
    \cfrac{
    \displaystyle\sum_{n=1}^N
    \seq{\opLdag}^n(a)_\omega 1_{\Omega\times\set{0}}\seq{T_\alpha^n(\omega, 0)}}
    {
    \displaystyle\sum_{n=1}^N
    1_{\Omega\times\set{0}}\seq{T_\alpha^n(\omega, 0)}
    }
    }
    \\
    &\hspace{-25mm}=
    \lim_{N\to\infty}
     \cfrac{1}{N}
        \sum_{n=1}^N
        \seq{\opLdagtau}^n\!(a)_\omega
\end{align}
exists almost surely and in $\Lone{\Omega}{\matrices}$. 
The rest of the proof follows by standard techniques and we omit it. 
\end{proof}
\begin{lemma}\label{Lem:Inner_prod_mfEalpha}
    Assume $\opL$ is irreducible with unique steady state $\varrho$, and let $\alpha\in\Lambda_\opL$.  
    For any $x\in\Lone{\Omega}{\matrices}$, we have  
    \begin{equation}
        \innerHS{\varrho_\omega}{\mfE_\alpha^\dagger(x)_\omega}
            =
        \avg{\innerHS{\varrho}{x}}
    \end{equation}
    for almost every $\omega\in\Omega$.
\end{lemma}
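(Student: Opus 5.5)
Write $g(x)_\omega := \innerHS{\varrho_\omega}{\mfE_\alpha^\dagger(x)_\omega}$. The plan is to show first that $g(x)$ is invariant under a suitable ergodic map, so that it is almost surely constant, and then to identify that constant by integrating.

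\textbf{Invariance.} We use $\mfE_\alpha^\dagger = \opLdagtau\circ\mfE_\alpha^\dagger$ from Proposition \ref{Prop:Defn_of_mfEtau}. Pointwise this reads
\[
g(x)_\omega=\innerHS{\varrho_\omega}{\Phi^{(\tau_\alpha)\dagger}_\omega(\mfE_\alpha^\dagger(x)_{\theta^{\tau_\alpha}(\omega)})}=\innerHS{\Phi^{(\tau_\alpha)}_\omega(\varrho_\omega)}{\mfE_\alpha^\dagger(x)_{\theta^{\tau_\alpha}(\omega)}}.
\]
Iterating the steady state equation $\phi_{\theta(\omega)}(\varrho_\omega)=\varrho_{\theta(\omega)}$ along the stopping time gives $\Phi^{(\tau_\alpha)}_\omega(\varrho_\omega)=\varrho_{\theta^{\tau_\alpha}(\omega)}$. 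Hence $g(x)=g(x)\circ\theta^{\tau_\alpha}$ almost surely.

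\textbf{Constancy.} To turn this into ergodicity we pass to the skew product $T_\alpha$, which is ergodic by Proposition \ref{Prop:Talpha_is_ergodic}. Here $\theta^{\tau_\alpha}$ is exactly the first return map of $T_\alpha$ to the set $\Omega\times\{0\}$, under the identification $\omega\leftrightarrow(\omega,0)$. A first return (induced) map of an ergodic measure preserving transformation to a set of positive measure is ergodic for the normalized restricted measure. So $\theta^{\tau_\alpha}$ is ergodic and preserves $\mu$. It follows that $g(x)$ is almost surely equal to a constant $c$. Invariance of $\mu$ holds because $\mu$ corresponds to $N_\alpha\mu_\alpha$ restricted to $\Omega\times\{0\}$.

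\textbf{Identifying the constant.} Since $c=\avg{g(x)}$, we compute this average using the Cesàro definition of $\mfE_\alpha^\dagger$. By the same duality computation, applied $n$ times,
\[
\innerHS{\varrho_\omega}{(\opLdagtau)^n(x)_\omega}=\innerHS{\varrho}{x}\circ(\theta^{\tau_\alpha})^{n}(\omega).
\]
Therefore
\[
g(x)=\lim_N \frac1N\sum_{n=1}^N \innerHS{\varrho}{x}\circ(\theta^{\tau_\alpha})^n .
\]
The function $\innerHS{\varrho}{x}$ lies in $L^1(\Omega)$ because $\infnorm{\varrho}\le1$ and $x\in\Lone{\Omega}{\matrices}$. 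Birkhoff's theorem for the ergodic map $\theta^{\tau_\alpha}$, with invariant measure $\mu$, then gives that the limit equals $\avg{\innerHS{\varrho}{x}}$ almost surely. This identifies $c$ directly; in fact this step already yields the lemma without the separate invariance argument, once ergodicity of $\theta^{\tau_\alpha}$ is in hand.

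\textbf{Main obstacle.} The key point is the ergodicity and $\mu$-invariance of $\theta^{\tau_\alpha}$, which must be derived from $T_\alpha$ via the induced-transformation argument. The identification $\theta^{\tau_\alpha}\cong (T_\alpha)_{\Omega\times\{0\}}$ holds because $T_\alpha^n(\omega,0)=(\theta^n\omega,-\varsigma_{\alpha,n}(\omega))$, and $\tau_\alpha$ is by definition the first $n$ with $\varsigma_{\alpha,n}(\omega)=0$. Once this is established, the remainder is routine.
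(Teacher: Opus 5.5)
Your proof is correct and follows essentially the paper's route: you expand $\mfE_\alpha^\dagger(x)$ as the Ces\`aro limit, use duality together with $\Phi^{(n)}_\omega(\varrho_\omega)=\varrho_{\theta^n(\omega)}$ to reduce $\innerHS{\varrho_\omega}{\mfE_\alpha^\dagger(x)_\omega}$ to ergodic averages of $\innerHS{\varrho}{x}$ along the return times, and conclude from the ergodicity of $T_\alpha$. The only difference is packaging: you invoke the standard fact that the first-return map $\theta^{\tau_\alpha}$ of $T_\alpha$ to $\Omega\times\{0\}$ is $\mu$-preserving and ergodic, and apply Birkhoff to it directly, whereas the paper applies Birkhoff to $T_\alpha$ itself and takes a ratio of two ergodic averages (the numerator weighted by $1_{\Omega\times\{0\}}$, the denominator counting visits); your separate invariance/constancy step is, as you note yourself, unnecessary.
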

\begin{proof}
Let $s_n(\omega) = \tau_{\alpha, 1}(\omega) + \cdots + \tau_{\alpha, n}(\omega)$.
We compute 
\begin{align}
     \innerHS{\varrho_\omega}{\mfE_\alpha^\dagger(a)_\omega}
         &=
         \lim_N
         \frac{1}{N}
         \sum_{n=1}^N
         \innerHS{\varrho_\omega}{\seq{\opLdagtau}^n(a)_\omega}\\
         &=
         \lim_N
         \frac{1}{N}
         \sum_{n=1}^N
         \innerHS{\varrho_\omega}{\Phi^{\seq{s_n}\dagger}_\omega(a_{\theta^{s_n}(\omega)})},
\end{align}
so, since $\phi_{\theta(\omega)}(\varrho_\omega) = \varrho_{\theta(\omega)}$ almost surely, we have that 
\begin{equation}
 \innerHS{\varrho_\omega}{\mfE_\alpha^\dagger(a)_\omega}
 =
    \lim_N
         \frac{1}{N}
         \sum_{n=1}^N
         \innerHS{\varrho_{\theta^{s_n}(\omega)}}{a_{\theta^{s_n}(\omega)}}.
\end{equation}
Thus, since 
\begin{align}
    \lim_N
         \frac{1}{N}
         \sum_{n=1}^N
         \innerHS{\varrho_{\theta^{s_n}(\omega)}}{a_{\theta^{s_n}(\omega)}}
    &=
         \lim_N
            \cfrac{\cfrac{1}{N}\displaystyle\sum_{n=1}^N\innerHS{\varrho_{\theta^n(\omega)}}{a_{\theta^n(\omega)}}
            1_{\Omega\times\set{0}}\seq{T_\alpha^n(\omega, 0)}
            }{\cfrac{1}{N}\displaystyle\sum_{n=1}^N
            1_{\Omega\times\set{e}}\seq{T_\alpha^n(\omega, 0)}}\\
    &=
        N_\alpha
        \int_{\Omega_\alpha}
            1_{\Omega\times\set{0}}\innerHS{\varrho}{a}
            \,\dee\mu_\alpha \\
    &= 
    \avg{\innerHS{\varrho}{a}}
\end{align}
    by the ergodicity of $T_\alpha$, the proof is concluded. 
\end{proof}
\begin{cor}\label{Cor:Faithfulness_of_mfE_tau}
    Assume $\opL$ is irreducible and let $\alpha\in\Lambda_\opL$. 
    For any nonzero $x\in\positives{\Lone{\Omega}{\matrices}}$, we have that $\mfE_{\alpha}^\dagger\seq{x}\neq 0$ almost surely, i.e., $\prob{\mfE_{\alpha}^\dagger\seq{x} \neq 0} = 1$.
\end{cor}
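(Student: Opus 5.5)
This follows almost directly from Lemma \ref{Lem:Inner_prod_mfEalpha} together with the strict positivity of the unique steady state $\varrho$. Fix a nonzero $x\in\positives{\Lone{\Omega}{\matrices}}$.

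\textbf{Step 1: the average is strictly positive.} Since $\varrho>0$ almost surely and $x_\omega\geq 0$, we have $\innerHS{\varrho_\omega}{x_\omega}=\tr{\varrho_\omega x_\omega}\geq 0$ for a.e. $\omega$. Equality $\tr{\varrho_\omega x_\omega}=0$ holds exactly when $x_\omega=0$: indeed $\tr{\varrho_\omega x_\omega}\geq \lambda_{\min}(\varrho_\omega)\tr{x_\omega}$ with $\lambda_{\min}(\varrho_\omega)>0$. Because $x$ is nonzero in $\Lone{\Omega}{\matrices}$, the event $\set{x\neq 0}$ has positive measure, so
\[
\avg{\innerHS{\varrho}{x}}>0 .
\]

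\textbf{Step 2: apply Lemma \ref{Lem:Inner_prod_mfEalpha}.} That lemma gives $\innerHS{\varrho_\omega}{\mfE_\alpha^\dagger(x)_\omega}=\avg{\innerHS{\varrho}{x}}$ for a.e. $\omega$. This value is a deterministic strictly positive constant, so $\mfE_\alpha^\dagger(x)_\omega\neq 0$ for a.e. $\omega$, which is the claim.

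Positivity of $\mfE_\alpha^\dagger(x)$, which holds by Proposition \ref{Prop:Defn_of_mfEtau}, is not needed here; it only adds the further conclusion that $\mfE_\alpha^\dagger(x)_\omega$ is a nonzero positive matrix.

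The argument has no real obstacle. The only point needing care is Step 1: the positive-measure nonvanishing of $x$ must turn into strict positivity of the expectation, and this uses that $\varrho_\omega$ is invertible almost surely, as guaranteed by Theorem \ref{Thmx:Irreducibility_classification_operator}.
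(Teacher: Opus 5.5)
Your proposal is correct and follows essentially the same route as the paper: Lemma \ref{Lem:Inner_prod_mfEalpha} gives $\innerHS{\varrho_\omega}{\mfE_\alpha^\dagger(x)_\omega}=\avg{\innerHS{\varrho}{x}}$ almost surely, and strict positivity of $\varrho$ together with $x\geq 0$, $x\neq 0$ makes this constant strictly positive. Your bound $\tr{\varrho_\omega x_\omega}\geq\lambda_{\min}(\varrho_\omega)\tr{x_\omega}$ just spells out the step the paper leaves implicit.
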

\begin{proof}
    By the previous lemma, $\innerHS{\varrho_\omega}{\mfE_\alpha^\dagger(x)_\omega}
        = \avg{\innerHS{\varrho}{x}}$ almost surely. 
    Thus, because $\varrho>0$ almost surely, $x \geq 0$, and $\prob{x \neq 0} > 0$, we conclude that $\avg{\innerHS{\varrho}{x}} > 0$. 
    Thus, $\innerHS{\varrho_\omega}{\mfE_\alpha^\dagger(x)_\omega} > 0$ almost surely, from which the result follows. 
\end{proof}
This concludes our investigation into the case where $\theta$ is fully general. 
Let us take this opportunity to record the following fact of independent interest.
\begin{prop}
    Assume $\opL$ is irreducible. 
    If $\seq{\alpha, \beta, u, f}$ is an eigentuple for $\opL$, then
    \begin{equation}  
        \lim_{N}\frac{1}{N}\sum_{n=1}^N\alpha^{\tau_{\alpha, 1}(\omega) + \cdots + \tau_{\alpha, n}(\omega)}
        =
        \frac{\avg{\expn{N_\alpha}{a_f}}}{\expn{N_\alpha}{a_f(\omega)}}
    \end{equation}
    for a.e. $\omega\in\Omega$.
\end{prop}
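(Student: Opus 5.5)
The plan is to follow the same route as the proof of Lemma \ref{Lem:Inner_prod_mfEalpha}. First I turn $\alpha^{m}$, at the return times $m$, into a coboundary of the unimodular function $g:=\expn{N_\alpha}{a_f}$. Then I evaluate the resulting ergodic average along return times using ergodicity of $T_\alpha$ (Proposition \ref{Prop:Talpha_is_ergodic}).

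\emph{Step 1 (coboundary identity).} Iterating (\ref{Eqn:Random_permutations}) of Lemma \ref{Lem:Random_permutations} with $k=0$ gives, for every $m\in\mbN$ and a.e. $\omega$,
\begin{equation*}
    \koopman{\theta^m}(g)(\omega) = \alpha^m\,\expn{N_\alpha}{a_f(\omega)+\varsigma_{\alpha,m}(\omega)} .
\end{equation*}
This is a one-line induction: apply (\ref{Eqn:Random_permutations}) at the point $\theta^{m-1}(\omega)$ and use the additivity $\varsigma_{\alpha,m}=\varsigma_{\alpha,m-1}+\varsigma_\alpha\circ\theta^{m-1}$. Now let $s_n(\omega)$ denote the $n$th time $m$ with $\varsigma_{\alpha,m}(\omega)=0$. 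This is the exponent appearing in the statement, read as in Lemma \ref{Lem:Inner_prod_mfEalpha}, where $\theta^{s_n}$ is the $n$th return. At these times the phase $\expn{N_\alpha}{\varsigma_{\alpha,m}}$ equals $1$, so
\begin{equation*}
    \alpha^{s_n(\omega)} = \frac{g(\theta^{s_n(\omega)}(\omega))}{g(\omega)} .
\end{equation*}
Hence the left-hand side of the claim equals $g(\omega)^{-1}\lim_N N^{-1}\sum_{n=1}^N g(\theta^{s_n(\omega)}(\omega))$, and it remains to show that this last average converges to $\avg{g}$ almost surely.

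\emph{Step 2 (averages along return times).} The times $s_n(\omega)$ are exactly the $m$ with $T_\alpha^m(\omega,0)\in\Omega\times\set{0}$. Writing $G(\omega,k)=g(\omega)1_{\set{k=0}}$, I get
\begin{equation*}
    \frac{1}{N}\sum_{n=1}^{N} g(\theta^{s_n}(\omega)) = \frac{\sum_{m=1}^{M}G(T_\alpha^m(\omega,0))}{\sum_{m=1}^{M}1_{\Omega\times\set{0}}(T_\alpha^m(\omega,0))}, \qquad M=s_N(\omega).
\end{equation*}
Since $T_\alpha$ is ergodic for $\mu_\alpha=\mu\otimes\pi_\alpha$ and $G$ is bounded, Birkhoff's theorem applies. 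It gives that the numerator divided by $M$ tends to $\int G\,\dee\mu_\alpha=N_\alpha^{-1}\avg{g}$, and the denominator divided by $M$ tends to $N_\alpha^{-1}$. Because $M=s_N(\omega)\to\infty$ (by Lemma \ref{Lem:taus_are_stopping_times}, all $s_n<\infty$ a.s.), passing to this subsequence of $M$ is harmless. The ratio therefore tends to $\avg{g}=\avg{\expn{N_\alpha}{a_f}}$, which proves the formula.

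The one point needing care, and the main obstacle, is that Birkhoff's theorem holds for $\mu_\alpha$-a.e. $(\omega,k)$, whereas I need it at the specific points $(\omega,0)$. This is handled by noting that $\pi_\alpha$ gives mass $N_\alpha^{-1}>0$ to $\set{0}$. A $\mu_\alpha$-null set therefore meets $\Omega\times\set{0}$ in a set whose $\omega$-projection is $\mu$-null, so the convergence holds at $(\omega,0)$ for $\mu$-a.e. $\omega$. The remaining bookkeeping is routine: $g$ is unimodular, so division by $g(\omega)$ is legitimate, and replacing $u$ by $\gamma u$ only rotates $a_f$ by a constant, which leaves both sides unchanged.
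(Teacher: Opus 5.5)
Your proof is correct and follows the paper's route. The paper also uses the coboundary identity $g\circ\theta^{\tau_\alpha}=\alpha^{\tau_\alpha}g$ for $g=e_{N_\alpha}(a_f)$ at return times, iterates it, and evaluates the average along return times as a ratio of Birkhoff averages for the ergodic skew product $T_\alpha$ (Proposition \ref{Prop:Talpha_is_ergodic}); your argument that Birkhoff's theorem holds at the points $(\omega,0)$ for $\mu$-a.e.\ $\omega$ fills in a detail the paper leaves implicit. Your closing remark is inaccurate but harmless: because $a_f$ is normalized to take values in $[0,1)$, replacing $u$ by $\gamma u$ in general multiplies $g$ by a constant times a non-constant $N_\alpha$-th root of unity and shifts $\varsigma_\alpha$ by a coboundary, so the return times and both sides of the identity can change. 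This does not matter, since your argument works throughout with one fixed eigentuple, for which it is valid.
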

\begin{proof}
    Begin by noticing that $\koopman{\theta^{\tau_\alpha}}\!\seq{
        e_{N_\alpha}\!\seq{a_f}
        }
        =
        \alpha^{\tau_\alpha}
        e_{N_\alpha}\!\seq{a_f}$
    almost surely. 
    Thus, 
    \begin{equation}\label{Eqn:Indep_interest_prop}
        \expn{N_\alpha}{a_f(\omega)}\frac{1}{N}\sum_{n=1}^N\alpha^{\tau_{\alpha, 1}(\omega) + \cdots + \tau_{\alpha, n}(\omega)}
            = 
         \frac{1}{N}\sum_{n=1}^N\koopman{\theta^{\tau_\alpha}}^n\!\seq{\expn{N_\alpha}{a_f}}(\omega).
    \end{equation}
    On the other hand, 
    \begin{equation}
         \frac{1}{N}\sum_{n=1}^N\koopman{\theta^{\tau_\alpha}}^n\!\seq{\expn{N_\alpha}{a_f}}(\omega)
        \sim  
        \cfrac{\displaystyle\frac{1}{N}\sum_{n=1}^N
        e_{N_{\alpha}}(a_f(\theta^n(\omega)))1_{\Omega\times\set{0}}(T_\alpha^n(\omega, 0))
        }{\displaystyle\frac{1}{N}\sum_{n=1}^N
        1_{\Omega\times\set{0}}(T_\alpha^n(\omega, 0))
        },
    \end{equation}
    because $1_{\Omega\times\set{0}}(T_\alpha^n(\omega, 0)) = 1$ if and only if $n\in\set{\tau_{\alpha, k}(\omega)}_{k\in\mbN}$, and $\tau_{\alpha, n}<\infty$ almost surely for all $n$. 
    Therefore, because $\lim_N N^{-1}\sum_{n=1}^N
        1_{\Omega\times\set{0}}(T_\alpha^n(\omega, 0)) = N_\alpha^{-1}$ a.e. $\omega$, 
    from (\ref{Eqn:Indep_interest_prop}) and Proposition \ref{Prop:Talpha_is_ergodic}, we conclude 
    \begin{equation}
        \expn{N_\alpha}{a_f(\omega)} \lim_N \frac{1}{N}\sum_{n=1}^N\alpha^{\tau_{\alpha, 1}(\omega) + \cdots + \tau_{\alpha, n}(\omega)}
        =
        N_\alpha 
        \int_{\Omega_\alpha}
            \expn{N_\alpha}{a_f}1_{\Omega\times\set{0}}\,\dee{\mu_\alpha}
        =
        \avg{e_{N_\alpha}\!\seq{a_f}},
    \end{equation}
    concluding the proof. 
\end{proof}
\subsection{Strong irreducibility}
We conclude with a discussion of condition (c) of Theorem \ref{Thmx:Aperiodicity}---the so-called strong irreducibility condition---as it pertains to ergodic quantum processes. 
In light of \cite{Ekblad2024ReducibilityProcesses}, one reasonable way to generalize strong irreducibility to the setting of ergodic quantum processes is the following: 
\begin{definition}[Strong irreducibility]
    Let $\eqp$ be the ergodic quantum process defined by $\seq{\theta, \phi}$, and assume that $\eqp$ is irreducible with unique steady state $\varrho$. 
    We say that $\eqp$ is strongly irreducible for for all $x\in\Lone{\Omega}{\matrices}$, 
    \begin{equation}
        \lim_n \onenorm{
        \avg{\tr{x}}
        \varrho_{\theta^n(\omega)}
            -
        \phi^{(n)}_\omega(x)
        }
        =
        0
    \end{equation}
    holds almost surely. 
\end{definition}
However, the following example shows that this natural notion of strong irreducibility is \textit{not} equivalent to aperiodicity and primitivity, even in the i.i.d. regime. 
\begin{example}[Unitary Haar]\label{Ex:Haar}
Let $\mbU_d$ denote the set of $d\times d$ unitary matrices, and let $\nu$ be the Haar measure on $\mbU_d$. 
For $u\in\mbU_d$, define $\psi_u(a) = uau^*$, which is clearly a quantum channel.
Define $\Omega = \prod_{k\in\mbZ}\mbU_d$, let $\mcF$ be the product $\sigma$-algebra, and let $\mu = \bigotimes_{k\in\mbZ}\nu$ be the i.i.d. measure. 
Let $\theta:\Omega\to\Omega$ be the shift map $\theta\seq{u_k}_{k\in\mbZ} = \seq{u_{k+1}}$. 
Then let $\phi:\Omega\to\channels$ be $\phi_{\seq{u_k}} = \psi_{u_0}$. 
It was shown in \cite{Ekblad2024ReducibilityProcesses} that the ergodic quantum process defined by $\seq{\theta, \phi}$ is irreducible, with unique steady state $\mbI$. 
Because a product of independent Haar unitaries is Haar distributed, Theorem \ref{Thm:PeriodicChar} shows that $|\GammaGroup| = 1$, since we also have that any

However, for any nonrandom projection $p\in\matrices$ with $\tr{p} < d$, we see that 
\begin{equation}
    \phi^{(n)}_{\seq{u_k}}(p)
    =
    u_n\cdots u_1 p u_1^* \cdots u_n^*,
\end{equation}
hence $\phi^{(n)}_{\seq{u_k}}(p)$ is a projection with $\tr{\phi^{(n)}_{\seq{u_k}}(p)} = \tr{p} < d$, so
\begin{equation}
    \onenorm{\tr{p}\mbI - \phi^{(n)}_{\seq{u_k}}(p)} = \tr{\mbI - \phi^{(n)}_{\seq{u_k}}(p)} \geq 1
\end{equation}
almost surely for all $n\in\mbN$. 
\end{example}
These considerations are worth exploring in future work. 
%

%%%

%
\section*{Acknowledgments}
OE is grateful for discussions with Brent Nelson.
Both authors are supported by the National Science Foundation under Grant No. 2153946.
%%%

\section*{Tool and computational resource disclosure}
The proof of Theorem \ref{Thm:Any_group} was developed in consultation with Google Gemini 3.1 Pro. 
\appendix 
%

%%%

%
\section{Disambiguation of irreducibility for unital Schwarz maps of von Neumann algebras}\label{App:Groh irred}
As noted in the preliminaries, $\opL^\dagger_1$ is a unital Schwarz map on the von Neumann algebra $\Linfty{\Omega}{\matrices}$, and in this work we are primarily concerned with understanding the spectral theory of $\opL^\dagger_1$ in the irreducible case. 
In the literature, the spectral theory of some unital Schwarz maps on von Neumann algebras has been previously considered for a certain notion of irreducibility. 
Here, we take a minor detail to disambiguate this diversity of terminology present in the literature. 
\begin{definition}\label{Def:Groh irreducibility}
    We call a positive map $\varphi: \scrA\to \scrA$ on a $C^*$-algebra $\scrA$ \textit{Groh-reducible} (resp. \textit{projection-reducible}) if there is a nontrivial hereditary $C^*$-subalgebra $\scrC\subset \scrA$ (resp. nontrivial projection $p\in\scrA$) such that $\varphi(\scrC)\subset \scrC$ (resp. $\varphi(p\scrA p)\subset p\scrA p$). If no such $\scrC$ (resp. $p$) exists, we call $\varphi$ \textit{Groh-irreducible} (resp. projection-irreducible).
\end{definition}
Thus, what we have called simply \textit{irreducibility} above is for the time being called \textit{projection-irreducibility} (see \cite{Ekblad2024ReducibilityProcesses} for more details).

For a $C^*$-algebra $\scrA$, we say that $\scrF\subset\scrA_+$ is a face if $\scrF$ is a subcone of $\scrA_+$ such that for all $f\in\scrF$ and $a\in\scrA_+$, $a\leq f$ implies $a\in\scrF$.
It is clear that $ \opL^\dagger_1$ being Groh-irreducible is equivalent to there being no nontrivial \textit{norm} closed faces in $\positives{\Linfty{\Omega}{\matrices}}$ invariant under $ \opL^\dagger_1$.
However, from \cite[Proposition 1]{Farenick1996IrreducibleAlgebras}, we see that $ \opL^\dagger_1$ is projection-irreducible if and only if there are no nontrivial \textit{weakly} closed faces in $\positives{\Linfty{\Omega}{\matrices}}$ invariant under $ \opL^\dagger_1$. 
Previously, spectral results for certain unital Schwarz maps on von Neumann algebras were achieved by Groh in \cite{Groh1981TheC-Algebras, Groh1983OnC-Algebras, Groh1984UniformW-Algebras, Groh1984UniformlyC-algebras}. 
Our goal now is to demonstrate the extent to which Groh's result fail to apply in the present situation concerning the operator $\opL^\dagger_1$. 

For an operator $T\in\bops{\scrX}$ on a Banach space $\scrX$, we say $T$ is \textit{quasi-compact} if there is a compact operator $K\in\bops{\scrX}$ and $m\in\mbN$ such that 
    \begin{equation}
        \|
            T^m - K
        \|
            <
        1.
    \end{equation} 
The only fact about quasi-compact operators we use in the following is that the peripheral spectrum of the sort of quasi-compact operators we consider here contains only isolated points, see \cite{Groh1984UniformlyC-algebras}. 
We call $T$ uniformly ergodic if the limit
\begin{equation}
    \lim_{N}\cfrac{1}{N}\sum_{n=1}^NT^n
\end{equation}
exists with respect to the operator norm topology on $\bops{\scrX}$.
\begin{prop}\label{Prop:Groh irreducibility, equivalent characterizations}
Let $\mfT$ be an identity-preserving Schwarz map on a von Neumann algebra $\scrM$. The following are equivalent. 
\begin{enumerate}[label=(\alph*)]
    \item $\mfT$ is Groh-irreducible.

    \item There is a faithful $\mfT^\star$-invariant state $\varphi\in\scrM^\star$ and $\mfT$ is quasi-compact with fixed point space of dimension 1.

    \item There is a faithful $\mfT^\star$-invariant state $\varphi\in\scrM^\star$ and $\mfT$ is uniformly ergodic with fixed point space of dimension 1.
\end{enumerate}
\end{prop}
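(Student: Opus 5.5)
I would prove the cycle of implications (a)$\Rightarrow$(b)$\Rightarrow$(c)$\Rightarrow$(a), relying on Groh's spectral theory of identity-preserving Schwarz maps on von Neumann algebras \cite{Groh1981TheC-Algebras, Groh1983OnC-Algebras, Groh1984UniformW-Algebras, Groh1984UniformlyC-algebras} together with the face characterization recorded just before the statement. Throughout, Groh-irreducibility means that no nontrivial norm-closed face of $\positives{\scrM}$ is $\mfT$-invariant.

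For (a)$\Rightarrow$(b), I would first produce the faithful invariant state. Since $\mfT$ is unital and positive, the set of $\mfT^\star$-invariant states is nonempty by a Markov--Kakutani fixed-point argument in $\dual{\scrM}$. For an invariant state $\varphi$, the set $\set{a\in\positives{\scrM}\,\,:\,\,\varphi(a)=0}$ is a norm-closed face. It is $\mfT$-invariant, because $\varphi(\mfT a)=\varphi(a)=0$. Irreducibility then forces this face to be $\set{0}$, so $\varphi$ is faithful.

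Next, the fixed-point space is one-dimensional. For self-adjoint $x$ with $\mfT x=x$, I would use the Schwarz inequality and faithfulness of $\varphi$, as in Lemma \ref{Lem:Eigenmat_of_opLdag_in_multdom}, to get $\mfT(x^2)=x^2$. Then the fixed points form a $C^*$-algebra. The spectral projections of a non-scalar fixed point would generate nontrivial invariant hereditary subalgebras, which is impossible.

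Quasi-compactness is the hard direction. I would use Groh's result that an irreducible identity-preserving Schwarz map with a faithful invariant state has peripheral spectrum consisting of a cyclic group of simple, isolated eigenvalues. Combined with uniform control of the resolvent near $\mbT$, this should yield a spectral gap and hence quasi-compactness. I expect this to be the main obstacle, since norm-closed faces give only ``norm'' information and the compactness has to be extracted from Groh's work. In practice I would cite \cite{Groh1984UniformlyC-algebras} for this step rather than reprove it.

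For (b)$\Rightarrow$(c), quasi-compactness gives a decomposition $\mfT = \sum_{\lambda} \lambda P_\lambda + R$ with finitely many peripheral eigenvalues, finite-rank projections $P_\lambda$, and $\specrad{R}<1$. The Cesàro means then converge in operator norm to $P_1$, which is uniform ergodicity.

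For (c)$\Rightarrow$(a), uniform ergodicity with a one-dimensional fixed-point space gives
\begin{equation*}
\frac1N\sum_{n=1}^N\mfT^n \to \varphi(\cdot)1_\scrM
\end{equation*}
in operator norm. Here I identify the limit through the faithful invariant state.

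Now suppose $\scrF$ is a nonzero, norm-closed, invariant face, and take $0\neq a\in\scrF$. The Cesàro averages of $a$ stay in $\scrF$ because $\scrF$ is an invariant convex cone. By norm-closedness their limit $\varphi(a)1_\scrM$ also lies in $\scrF$, and $\varphi(a)>0$ by faithfulness. Hence $1_\scrM\in\scrF$. Since every positive element is dominated by a multiple of $1_\scrM$, the face property gives $\scrF=\positives{\scrM}$, which is irreducibility.
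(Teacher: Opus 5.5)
Your proposal is correct. The one genuinely hard step, Groh-irreducibility $\Rightarrow$ quasi-compactness, is imported from Groh, just as the paper imports (a)$\Rightarrow$(c).

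\textbf{(c)$\Rightarrow$(a).} This is the paper's argument in different form. The paper takes a proper invariant hereditary subalgebra $\scrA$, picks a projection $p\in\scrA$ with $\mfP(p)\neq 0$, and notes that $\mfP(p)$ is a nonzero fixed point in $\scrA$ that cannot be scalar. You run the same Ces\`aro-limit argument on an arbitrary nonzero element of an invariant closed face. This avoids needing $\scrA$ to be the closed span of its projections.

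\textbf{The other directions.} Here the routes differ. The paper simply cites Groh for (a)$\Rightarrow$(c) and for (b)$\Leftrightarrow$(c). You instead close a cycle and prove three things by hand:
\begin{itemize}
\item the existence of a faithful invariant state, via Markov--Kakutani plus the invariant null face;
\item the one-dimensionality of the fixed space;
\item (b)$\Rightarrow$(c), via the spectral decomposition of a quasi-compact operator. Here you should say that $\|\mfT^n\|=1$ is what rules out Jordan blocks on $\mbT$.
\end{itemize}
This makes the proof more self-contained, so that only ``irreducible $\Rightarrow$ quasi-compact'' is imported.

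\textbf{The quasi-compactness step.} Your sketch for it (isolated simple peripheral eigenvalues plus resolvent control) is not a proof. Isolatedness of the peripheral spectrum is normally a consequence of quasi-compactness, not an input to it. In the literature this step is Groh's (a)$\Rightarrow$(c) followed by his Theorem~3.3, which is exactly what the paper cites.

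\textbf{A repair in the fixed-space step.} $\mfT$ need not be normal, so its fixed points form only a $C^*$-algebra, and spectral projections of a fixed self-adjoint $x$ need not be fixed. Instead, take $a=g(x)$ with $g\geq 0$ continuous, vanishing at some but not all points of $\sigma(x)$. Then $a\neq 0$ is fixed and lies in the multiplicative domain, so $\mfT(aza)=a\mfT(z)a$. Hence $\overline{a\scrM a}$ is a nonzero $\mfT$-invariant hereditary subalgebra, and it is proper because $a$ is not invertible.
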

\begin{proof}
    Let $1$ denote the unit of $\scrM$. 
    The equivalence of (b) and (c) follows from \cite[Theorem 3.3]{Groh1984UniformlyC-algebras}, and that (a) implies (c) is proved in \cite{Groh1984UniformW-Algebras}. So we only need to demonstrate that (c) implies (a). 
    To see this, suppose $\mfT$ is uniformly ergodic and let $\mfP$ denote the uniform limit of the partial sums $\mfP_N = N^{-1}\sum_{n=0}^{N-1}\mfT^n$. Because $\mfT^\star(\varphi) = \varphi$, it holds that $\varphi\circ\mfP_N = \varphi$ for all $N$ hence by continuity of $\varphi$ and the uniform convergence, it holds that $\varphi\circ\mfP = \varphi$. So, if $\scrA\subset\scrM$ is a hereditary $C^*$-subalgebra, we see that $\mfP(\scrA)\neq 0$ if and only if $\scrA\neq 0$ (since $\varphi\circ\mfP(\scrA) = \varphi(\scrA)$ and $\varphi$ is faithful). In particular, if $\scrA$ is a nonzero $\mfT$-invariant $C^*$-subalgebra of $\scrM$, then $\scrA$ is $\mfP$-invariant by the  uniform convergence of $\mfP_N$ to $\mfP$, so because $\scrA$ is the closed span of its projections \cite{Murphy2007C-AlgebrasTheory}, there is some projection $p\neq 0$ in $\scrA$ such that $\mfP(p)\neq 0$. From 
    \begin{equation}
        0 \leq \mfP(p)\leq \mfP(1) = 1
    \end{equation}
    and $\mfT(\mfP(p)) = \mfP(p)$, we see that, if $\scrA\neq\scrM$, then because $\scrA$ is hereditary, it holds that $\mfP(p)\in\scrA$ is not a scalar multiple of $1\in\scrM$. 
    In particular, $\dim\ker(\mfT - 1) \geq 2$, which concludes the proof. 
\end{proof}
As can be seen in the case of $d = 1$ (so that $\matrices = \mbC$) and $(\Omega, \mbP, \theta) = (\mbT, m, \theta)$ where $m$ is the Lebesgue measure and $\theta:\mbT\to\mbT$ is any irrational rotation, the operator $ \opL^\dagger_1 $ is precisely 
\begin{equation}
    \opL^\dagger_1 (f)_\omega = f(\theta(\omega)), 
\end{equation}
the Koopman operator on $L^\infty(\mbT)$.
This operator is clearly an identity-preserving Schwarz map on a von Neumann algebra, and it is also clearly projection-irreducible with 1-dimensional fixed point space. 
However, $ \opL^\dagger_1 $ is not, for example, quasi-compact,
since $\sigma(\koopman{\theta}) = \mbT$. 
That is, $\koopman{\theta}$ is projection-irreducible but not Groh-irreducible. 
In fact, it is not hard to see the following. 
\begin{prop}\label{Prop:Groh irreducibility of L}
    The operator $\opL^\dagger_1$ is Groh-irreducible if and only if $(\Omega, \mu)$ is the union of finitely many atoms of $\mu$ and $\opL^\dagger_1$ is projection-irreducible.
\end{prop}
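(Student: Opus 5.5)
We prove the two implications separately, using Proposition \ref{Prop:Groh irreducibility, equivalent characterizations} to replace Groh-irreducibility by the quasi-compact/uniformly ergodic characterization. Throughout, $\varphi = \Linftrace{\varrho\,\cdot\,}$ serves as the faithful $\opL^{\dagger\star}_1$-invariant state when $\opL$ is projection-irreducible; this is faithful because $\varrho>0$ by Theorem \ref{Thmx:Irreducibility_classification_operator}.

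\emph{Sufficiency.} Suppose $(\Omega,\mu)$ is a finite union of atoms and $\opL^\dagger_1$ is projection-irreducible. Up to null sets, $\Linfty{\Omega}{\matrices}\cong\bigoplus_{j=1}^m\matrices$ is finite dimensional. Every operator is then compact, so $\opL^\dagger_1$ is trivially quasi-compact. By Proposition \ref{Prop:Fix_opLdag_is_CI} its fixed point space is $\mbC\mbI$, and $\varphi$ is a faithful invariant state. Condition (b) of Proposition \ref{Prop:Groh irreducibility, equivalent characterizations} therefore holds, so $\opL^\dagger_1$ is Groh-irreducible.

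\emph{Necessity.} First, Groh-irreducibility implies projection-irreducibility. For any projection $p\in\Linfty{\Omega}{\matrices}$, the corner $p\Linfty{\Omega}{\matrices}p$ is a hereditary $C^*$-subalgebra, and by Proposition \ref{Prop:Characterization_of_reducing_proj} a reducing projection for $\opL^\dagger_1$ makes this corner invariant. (The reduction convention here is $\glop{T}{\psi}(p\cdot p)\subseteq p\cdot p$, applied to $\opL^\dagger_1$ itself.) So it remains to show that $\mu$ is purely atomic with finitely many atoms.

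Since $\opL^\dagger_1$ is uniformly ergodic with one-dimensional fixed space, the Cesàro means converge in norm to $x\mapsto\varphi(x)\mbI$. In particular, $1$ is an isolated point of $\sigma(\opL^\dagger_1)$: uniform ergodicity forces $\ran(\opL^\dagger_1-1)$ to be closed, so $1$ is a pole. I would test this on functions $g\mbI$ with $g\in L^\infty(\Omega)$. On these, $\opL^\dagger_1(g\mbI)=(g\circ\theta)\mbI$ exactly, because $\phi^\dagger$ is unital. Hence $\opL^\dagger_1$ restricted to $L^\infty(\Omega)\mbI$ is the Koopman operator, and this subspace is invariant. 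The Cesàro averages of $\koopman{\theta}$ on $L^\infty(\Omega)$ then converge uniformly to $g\mapsto\varphi(g\mbI)$. This is a rank-one map, and it must equal $\avg{g}$, since it is a $\koopman{\theta}$-invariant state that agrees with $\avg{\cdot}$ on $L^1$ by the ergodic theorem.

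The main step, and the expected obstacle, is to show that uniform ergodicity of the Koopman operator of an ergodic invertible mpt on $L^\infty$ forces $\Omega$ to be finite mod atoms. My plan is to argue by contradiction. If $\mu$ is not a finite union of atoms, then for every $\varepsilon>0$ and every $N$ one can find a set $E$ with $0<\mu(E)<\varepsilon$ such that $E,\theta^{-1}E,\dots,\theta^{-N}E$ are nearly disjoint. In the non-atomic part this comes from Rokhlin's lemma. In the infinitely-many-atoms case, an ergodic invertible map permuting atoms must be a single cycle, hence finite, so the atoms must in fact be finitely many and the remaining mass is non-atomic. Taking $g=1_E$, the average $N^{-1}\sum_{n<N}g\circ\theta^n$ has sup norm of order $1$ on $E$, while $\avg{g}<\varepsilon$. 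This contradicts norm convergence in $\linfnorm{\cdot}$ uniformly in $g$. Applying Rokhlin's lemma correctly to get a tower whose base has the required sup-norm behavior is where care is needed.
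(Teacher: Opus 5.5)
Several parts of your proposal are fine:
\begin{itemize}
\item the sufficiency direction: finite dimensionality makes $\opL^\dagger_1$ quasi-compact, Proposition~\ref{Prop:Fix_opLdag_is_CI} gives the one-dimensional fixed space, and $x\mapsto\avg{\operatorname{Tr}(\varrho x)}$ is a faithful invariant state;
\item the remark that Groh-irreducibility trivially implies projection-irreducibility;
\item the reduction to uniform ergodicity of $\koopman{\theta}$ on $L^\infty(\Omega)$ with limit $g\mapsto\avg{g}$. Identifying the limit is immediate, since $\varphi(g\mbI)=\avg{g}$ because $\varrho$ has unit trace; no ergodic theorem is needed.
\end{itemize}

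\textbf{The gap is in the decisive step.} You propose that $E,\theta^{-1}E,\dots,\theta^{-(N-1)}E$ be disjoint, which is what a Rokhlin base provides. But then $A_N1_E:=N^{-1}\sum_{n=0}^{N-1}1_E\circ\theta^n=N^{-1}\sum_{n=0}^{N-1}1_{\theta^{-n}E}\le 1/N$ everywhere. The average is small on $E$, not of order one, so this test function gives no contradiction. Disjointness of the iterates is exactly the opposite of what you need.

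\textbf{The fix: use the tower itself rather than its base.} Given $N$ and $\varepsilon>0$, choose $F$ with $0<\mu(F)<\varepsilon/N$ and put $E=\bigcup_{n=0}^{N-1}\theta^nF$.
\begin{itemize}
\item On $F$, a non-null set, you have $A_N1_E=1$.
\item Meanwhile $\avg{1_E}\le N\mu(F)<\varepsilon$.
\item Hence $\sup_{\|g\|_\infty\le1}\|A_Ng-\avg{g}\|_\infty\ge1-\varepsilon$ for every $N$, contradicting uniform ergodicity.
\end{itemize}
Non-null sets of arbitrarily small measure exist precisely when $\mu$ is not a finite union of atoms. Indeed, if every non-null set had measure at least $\delta$, each would contain an atom, and there would be at most $1/\delta$ of them. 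So you need neither Rokhlin's lemma nor your case analysis of atoms. Incidentally, in that analysis a single atom already forces $\mu$ to be one finite cycle of atoms, so there is no remaining mass.

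\textbf{Comparison with the paper.} With this repair, your argument is a valid, more hands-on alternative to the paper's. The paper uses quasi-compactness spectrally:
\begin{itemize}
\item restricting to $L^\infty(\Omega)\mbI$ gives $\sigma(\opL^\dagger_1)\supseteq\sigma_{\mathrm{ap}}(\koopman{\theta})=\sigma(\koopman{\theta})$;
\item by St{\o}rmer's result this is all of $\mbT$ unless $\mu$ is finitely atomic;
\item that contradicts isolatedness of the peripheral spectrum.
\end{itemize}
You could also close your own argument this way, since you already observed that $1$ is isolated in $\sigma(\opL^\dagger_1)$. For sufficiency, the paper invokes Farenick's weakly-versus-norm-closed face characterization in finite dimensions instead of the quasi-compact characterization (b). Both routes work.
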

\begin{proof}
   From the observation that all peripheral spectrum is approximate spectrum and that the diagram 
    \begin{equation}
        \begin{tikzcd}
	{L^\infty(\Omega)\cdot \mbI} && {L^\infty(\Omega)\cdot \mbI} \\
	{L^\infty(\Omega)} && {L^\infty(\Omega)}
	\arrow["{\opL^\dagger_1}", from=1-1, to=1-3]
	\arrow["{\,\,\,\,d^{-1}\operatorname{Tr}}", from=1-1, to=2-1]
	\arrow["{M_\mbI}", from=2-3, to=1-3]
	\arrow["{\koopman{\theta}}", from=2-1, to=2-3]
        \end{tikzcd}
    \end{equation}
    commutes where $M_\mbI$ is the isomorphism defined by $M_\mbI(f) = f\mbI$, we see that 
    \begin{equation}
        \sigma(\opL^\dagger_1)
        \supset 
        \sigma_{\operatorname{ap}}(\opL^\dagger_1\vert_{L^\infty(\Omega)\cdot \mbI})
        =
        \sigma_{\operatorname{ap}}(\koopman{\theta})
        =
        \sigma(\koopman{\theta}).
    \end{equation}
    Thus, if $\opL^\dagger_1$ is Groh-irreducible, it is quasi-compact, hence $\sigma(\koopman{\theta})\neq \mbT$, i.e., $(\Omega, \mbP)$ is the union of finitely many atoms by \cite{Strmer1974SpectraTransformations}. 
    In this setting, we see that Groh-irreducibility is equivalent to projection-irreducibility, because here we have that 
    \begin{equation}
         \Linfty{\Omega}{\matrices} 
        =
        \bigoplus_{k=1}^n
        \mbM_d
    \end{equation}
    for $n$ the number of atoms of $\mu$. 
    In particular, $\Linfty{\Omega}{\matrices} $ is a finite-dimensional von Neumann algebra, and so weakly closed sets are the same as norm closed sets, hence by \cite{Farenick1996IrreducibleAlgebras} Groh-irreducibility is equivalent to projection-irreducibility whenever $(\Omega, \mu)$ is the union of finitely many atoms, which concludes the proof. 
\end{proof}
Thus, the earlier results of Groh do not apply to $\opL^\dagger_1$ except in the extremely restricted situation outlined above, i.e., in the setting that $\theta:\Omega\to\Omega$ is a cyclic permutation of a finite set.

\section{Proof of Proposition \ref{Prop:Cyclic group Gamma when torsion}}\label{App:Group}
Note that $\mbT$ is a \textit{divisible group}, which means that for all $\gamma\in\mbT$ and $n\in\mbN$, there is $\zeta\in\mbT$ such that $\zeta^n = \gamma$. 
This in particular implies that 
\begin{equation}
    \mbT\cong \operatorname{Tor}(\mbT)\oplus\operatorname{Irr}(\mbT),
\end{equation}
where $\operatorname{Tor}(\mbT)$ denotes the set of finite-order elements of $\mbT$ and $\operatorname{Irr}(\mbT) = \mbT/\operatorname{Tor}(\mbT)$.
For any subgroup $\Lambda$ of $\mbT$, we write $\operatorname{Tor}(\Lambda)$ to denote $\operatorname{Tor}(\mbT)\cap\Lambda$ and $\operatorname{Irr}(\Lambda) = \Lambda/\operatorname{Tor}(\Lambda)$. 
We call $\Lambda$ torsion if $\Lambda = \operatorname{Tor}(\Lambda)$. 
Using the fact that $\mbT$ is divisible, it is not hard to see that 
\begin{equation}
    \Lambda \cong \operatorname{Tor}(\Lambda)\oplus \operatorname{Irr}(\Lambda).
\end{equation}
In particular, 
\begin{equation}\label{Eqn:Gamma_direct_sum_decomp}
    \Gamma_\opL 
        \cong 
    \seq{
        \operatorname{Tor}\seq{\Lambda_\opL}/\operatorname{Tor}\seq{\Lambda_\theta}
    }
    \oplus 
    \seq{
        \operatorname{Irr}\seq{\Lambda_\opL}/\operatorname{Irr}\seq{\Lambda_\theta}
    }.
\end{equation}
See \cite[Ch. I, \S 10 \& Ch. XX \S 4]{Lang2002Algebra} for more details. 
With this in hand, we can prove that $\Gamma_\opL$ is cyclic under the assumption that $\Lambda_\theta$ is torsion.
\begin{proof}[Proof of Proposition \ref{Prop:Cyclic group Gamma when torsion}]
     Assume $\Lambda_\theta$ is torsion.
    Since $|\Gamma_\opL|\leq d^2$, 
    from (\ref{Eqn:Gamma_direct_sum_decomp}) we see that $|\operatorname{Irr}(\Lambda_\opL)|<\infty$. 
    But $\operatorname{Irr}(\Lambda_\opL)$ is finite if and only if it is trivial. 
    Thus, $\Lambda_\opL$ is a torsion subgroup of $\mbT$, so there is a nondecreasing sequence of integers $\seq{n_k}_{k\in\mbN}$ with $n_k$ dividing $n_{k+1}$ for all $k$ such that $\Lambda_\opL$ is the direct limit
    \begin{equation}
         \Lambda_\opL \cong \lim_{\substack{\longrightarrow \\ k\in \mbN}}\mbZ/n_k\mbZ.
    \end{equation}
    But $\Lambda_\theta\subset\Lambda_\opL$ is also torsion, so there is another nondecreasing sequence of integers $\seq{m_k}_{k\in\mbN}$ with $m_k$ dividing $n_{k}$ and $m_{k+1}$ for all $k$ with 
    \begin{equation}
        \Lambda_\theta \cong \lim_{\substack{\longrightarrow \\ k\in \mbN}}\mbZ/m_k\mbZ.
    \end{equation}
    So, if we let $\ell_k = n_k/m_k$ for all $k$, we see that 
    \begin{equation}
        \Gamma_\opL
            =
        \Lambda_\opL/\Lambda_\theta 
            =
        \lim_{\substack{\longrightarrow \\ k\in \mbN}}\mbZ/\ell_k\mbZ.
    \end{equation}
    Now, we know $|\Gamma_\opL|\leq d^2$ by Proposition \ref{Prop:Gamma is a finite group}, so this quotient is finite, hence if we let $l_k = \prod_{k'\leq k}\ell_{k'}$, then by 
    \begin{equation}
         \lim_{\substack{\longrightarrow \\ k\in \mbN}}\mbZ/\ell_k\mbZ
         \cong 
          \lim_{\substack{\longrightarrow \\ k\in \mbN}}\mbZ/l_k\mbZ,
    \end{equation}
    the finiteness of $\Gamma_\opL$ implies that there is $K\in\mbN$ such that $l_{K} = l_{K+j}$ for all $j\in\mbN$. 
    In particular, the above gives $\Gamma_\opL\cong\mbZ/l_K\mbZ$, as desired. 
\end{proof}

%

%
%%%
%%%%%
%%%%%%
%%%%%
%%%
%

\printbibliography

\end{document}